\newcommand{\BQP}{\mathsf{BQP}}
\newcommand{\BPP}{\mathsf{BPP}}
\newcommand{\QPPT}{\mathsf{QPPT}}
\newcommand{\Enc}{\mathsf{Enc}}
\newcommand{\Dec}{\mathsf{Dec}}
\newcommand{\Abort}{\mathsf{Abort}}
\newcommand{\CNOT}{\mathsf{CNOT}}
\newcommand{\X}{\mathsf{X}}
\newcommand{\Z}{\mathsf{Z}}
\newcommand{\Ha}{\mathsf{H}}
\newcommand{\Id}{\mathsf{1}}
\newcommand{\GHZ}{\mathsf{GHZ}}
\newcommand{\Pone}{P_1}
\newcommand{\Ptwo}{P_2}
\newcommand{\Regone}{\mathcal{X}}
\newcommand{\Regtwo}{\mathcal{Y}}
\newcommand{\PredE}{\mathsf{P}_f^{E}}
\begin{document}

\frontmatter

\pagestyle{plain}

\mainmatter

\title{Dispelling Myths on Superposition Attacks:\\
Formal Security Model and Attack Analyses %\thanks{}
}%\textsuperscript{0}}

%\titlerunning{}  % abbreviated title (for running head)

%\author{}
%\institute{}

\author{Luka Music$^1$, C\'{e}line Chevalier$^{2}$, Elham Kashefi$^{1,3}$}

\institute{$^1$ D\'epartement Informatique et R\'eseaux, CNRS, Sorbonne 
Universit\'e\\
$^2$ Universit\'e Panth\'eon-Assas Paris 2 \\
$^3$ School of Informatics, University of Edinburgh}

\maketitle

%\setcounter{tocdepth}{5}

%	\tableofcontents

%\vspace*{-3mm}
\begin{abstract}

%\vspace*{-4mm}

With the emergence of quantum communication, it is of folkloric belief that 
the security of classical cryptographic protocols is automatically broken if 
the Adversary is allowed to perform superposition queries and the honest 
players forced to perform actions coherently on quantum states. Another 
widely held intuition is that enforcing measurements on the exchanged messages is enough 
to protect protocols from these attacks.

    However, the reality is much more complex. 
	Security models dealing with superposition attacks only consider 
unconditional security. Conversely, security models considering 
computational security assume that all supposedly classical messages are 
measured, which forbids by construction the analysis of superposition attacks.
    To fill in the gap between those models, Boneh and Zhandry have started to 
study the quantum computational security for classical primitives in their 
seminal work at Crypto'13, but only in the single-party setting. To the best 
of our knowledge, an equivalent model in the multiparty setting is still 
missing.

	In this work, we propose the first computational security model 
considering superposition attacks for multiparty protocols. We show that our 
new security model is satisfiable by proving the security of the well-known 
One-Time-Pad protocol and give an attack on a variant of the equally reputable 
Yao Protocol for Secure Two-Party Computations. The post-mortem of this attack 
reveals the precise points of failure, yielding highly counter-intuitive 
results: Adding extra classical communication, which is harmless for classical 
security, can make the protocol become subject to superposition attacks. 
We use this newly imparted knowledge to construct the first concrete protocol for 
Secure Two-Party Computation that is resistant to superposition attacks. 
Our results show that there is no straightforward answer to provide for either the 
vulnerabilities of classical protocols to superposition attacks or the adapted 
countermeasures.

\end{abstract}
%\footnotetext[0]{}

\section{Introduction}
\label{sec:introduction}
Recent advances in quantum technologies threaten the security of many 
widely-deployed cryptographic primitives if we assume that the Adversary has 
classical access to the primitive but can locally perform quantum 
computations. This scenario has led to the emergence of \emph{post-quantum 
cryptography}. But the situation is even worse in the \emph{fully quantum} 
scenario, if we assume the Adversary further has quantum access to the 
primitive and can query the oracle with quantum states in superposition. 
Such access can arise in the case where the Adversary has direct access to the primitive 
that is being implemented (eg.\ symmetric encryption, hash functions), or if a protocol is 
used as a sub-routine where the Adversary plays all roles (as in the Fiat-Shamir transform based on Sigma Protocols) 
and can therefore implement them all quantumly. In the future, various primitives might natively be 
implemented on quantum machines and networks, either to benefit from speed-ups or because 
the rest of the protocol is inherently quantum.
In this case, more information could be leaked, leading to new non-trivial 
attacks, as presented in a series of work initiated in 
\cite{DFNS11,BZ13,kaplan2016breaking}. A possible countermeasure against such 
\emph{superposition attacks} is to forbid any kind of quantum access to the 
oracle through measurements. However, the security would then rely on the 
physical implementation of the measurement tool, which itself could be 
potentially exploited by a quantum Adversary. Thus, providing security 
guarantees in the fully quantum model is crucial.
	We focus here on the multiparty (interactive) setting.

\headingb{Analysis of Existing Security Models.}
Modelling the security of classical protocols in a quantum world (especially 
multiparty protocols) is tricky, since various arbitrages need to be made concerning 
the (quantum or classical) access to channels and primitives.

%, especially in an interactive setting as we consider in this paper.

A first possibility is to consider classical protocols embedded as quantum 
protocols, thus allowing the existence of superposition attacks. However, in 
such a setting, previous results only consider \emph{perfect security}, meaning that the 
messages received by each player do not contain more information than its 
input and output. The seminal papers starting this line of work are those 
proving the impossibility of bit commitment~\cite{Lo_1997,commit2}.
	%In this setting, the 
The perfect security of the protocol implies that no additional information is 
stored in the auxiliary quantum registers of both parties at the end of the 
protocol and can therefore be traced out, so that an Adversary can easily 
produce a superposition of inputs and outputs.

This is for example the approach of \cite{DFNS11}, and \cite{SSS15}, where the 
perfect correctness requirement is in fact a perfect (unconditional) security 
requirement (the protocol implements the functionality and \emph{only} the 
functionality). In \cite{DFNS11}, they consider an even more powerful 
adversarial scenario where not only the honest player's actions are described 
as unitaries (their inputs are also in superposition) but the Adversary can 
corrupt parties in superposition (the corruption is modelled as an oracle call 
whose input is a subset of parties and which outputs the view of the 
corresponding parties). Both papers show that protocols are insecure in such a 
setting: In \cite{DFNS11}, they show that in the case of a multi-party 
protocol implementing a general functionality (capable of computing any 
function), no Simulator can perfectly replicate the superposition of views of 
the parties returned by the corruption oracle by using only an oracle call to 
an Ideal Functionality. In the case of a deterministic functionality, they 
give a necessary and sufficient condition for such a Simulator to exist, but 
which cannot be efficiently verified and is not constructive. In \cite{SSS15}, 
they prove that any non-trivial Ideal Functionalities that accept 
superposition queries (or, equivalently, perfectly-secure protocols emulating 
them) must leak information to the Adversary beyond what the classical 
functionality does (meaning that the Adversary can do better than simply 
measure in the computational basis the state that it receives from the 
superposition oracle). In both cases, they heavily rely on the assumption of 
unconditional security to prove strong impossibility results and their proof 
techniques cannot be applied to the computational setting. % They essentially 
%indicate that ideal primitives should in fact measure the input state.

%\celine{From abstract:}
%Follow-up works heavily rely on this assumption of unconditional security to 
%prove strong impossibility results and their proof techniques cannot be 
%applied to the computational setting. They essentially indicate that ideal 
%primitives should in fact measure the input state.

%(moreover, they say it is a very inefficient manner to prove the security). 
%Otherwise, they give results for secret-sharing scheme (with a threshold 
%going %down from~$t$ to~$t/2$), which enables them to construct a 
%Zero-Knowledge proof in their model.

The second possibility to model the security of classical protocols in a 
quantum world is to define purely classical security models, in the sense that 
all supposedly classical messages are measured (Stand-Alone Model of 
\cite{HSS15} or the Quantum UC Model of \cite{Unr10}).
Some (computationally) secure protocols exist in this setting, as shown by a 
series of articles in the literature (eg.\ \cite{LKHB17}). However, these 
models forbid by construction the analysis of superposition attacks, precisely since all 
%the classical functionalities measure the state that they receive and 
classical communications are modelled as measurements. %as well.

\paragraph{The missing link.} The results of \cite{SSS15,DFNS11} in
the unconditional security setting are not directly applicable to a 
\emph{Computationally-Bounded Adversary}. The premiss to their analyses is 
that since the perfect execution of non-trivial functionalities is insecure,
%it must be that 
any real protocol implementing these functionalities is also insecure against 
Adversaries with quantum access %(and even more so 
(even more since they are simply 
computationally secure). However it turns out that, precisely because the 
protocol is only computationally-secure, the working registers of the parties 
cannot be devoid of information as is the case in the perfectly-secure setting 
%(in fact 
(the messages contain exactly the same information as the secret inputs of the 
parties, but it is hidden to computationally-bounded Adversaries) and the 
techniques used for proving the insecurity of protocols in the perfect 
scenario no longer work.

This issue has been partially solved for single-party protocols with oracle 
queries in the line of work from~\cite{BZ13}, but never extended fully to the 
multi-party setting. The difficulty arises by the interactive property of such 
protocols. Indeed, in a real protocol, more care needs to be taken in 
considering all the registers that both parties deal with during the execution 
(auxiliary qubits that can be entangled due to the interactive nature of the 
protocols). Furthermore, care must also be taken in how the various classical 
operations are modelled quantumly, as choosing standard or minimal oracle 
representations may influence the applicability of some attacks \cite{KKVB02}. 
The naive implementation of superposition attacks, applied to a real-world 
protocol, often leads to a joint state of the form 
$\sum\limits_{x, m_1, m_2}\ket{x}\ket{m_1}\ket{m_2}\ket{f(x, y)}$
%$\smashoperator{\sum\limits_{x, m_1, m_2}}\ket{x}\ket{m_1}\ket{m_2}\ket{f(x, 
%y)}$
%, conversely $m_2$ are sent by the honest player
for a given value $y$ of the honest player's input, and with the second 
register (containing the set of messages $m_1$ sent by the Adversary) being in 
the hands of the honest player ($m_2$ is the set of messages sent by the 
honest player and $f(x, y)$ is the result for input $x$). This global state 
does not allow the known attacks
%, for example the one explored in
(such as \cite{kaplan2016breaking}) to go through as the message registers 
cannot simply be discarded. This 
%goes on to show 
shows that the simple analysis of basic ideal primitives in the superposition 
attack setting is not sufficient to conclude on the security of the overall 
computationally-secure protocol and motivates the search for a framework for 
proving security of protocols against such attacks.%, along with providing hints 
%as to how to prevent them in practice. 
%This also motivates a security model in 
%which the Ideal Functionality measures the received state. Indeed, 
%%the functionality 
%it needs to have a classical behaviour since functionalities with 
%quantum behaviour are already broken by the papers presented above.

\headingb{Our Contributions.}
The main purpose of this paper is thus to bridge a gap between two settings: 
one considers the security analysis of superposition attacks, but 
either for perfect security~\cite{DFNS11,SSS15} (both works preclude the existence of 
secure protocols by being too restrictive)
% on the security conditions) 
or 
%for conditional security but 
only for single-party primitives with oracle 
access~\cite{BZ13}, while the other explicitly forbids 
%superposition 
such attacks by 
measuring classical messages~\cite{Unr10,HSS15}
%that 
%is supposed to be classical~\cite{Unr10,HSS15}.

To our knowledge, our result is the first attempt to formalise a security 
notion capturing security of two-party protocols against superposition attacks 
with computationally-bounded Adversaries as a simulation-based definition. We 
consider a more realistic scenario where a computational Adversary 
corrupts a fixed set of players at the beginning of 
the protocol and the input of the honest players are fixed classical values. 
We suppose that the ideal world trusted third party always measures its 
queries (it acts similarly to a classical participant), while 
%on the other hand 
the honest player always performs actions in superposition unless specifically 
instructed by the quantum embedding of the protocol (the Adversary and the 
Simulator can do whatever they want). Security is then defined by considering 
that an attack is successful if an Adversary is able to distinguish between 
the real and ideal executions with non-vanishing probability. The reason for adding 
a measurement to the functionality is to enforce that the (supposedly classical) 
protocol behaves indeed as a classical functionality. This is further motivated by the results 
of previous papers proving that functionalities with quantum behaviour are inherently broken.
%\celineb{verifier paragraphe precedent, 
%peut-etre un peu court}

%the Adversary is able to guess the value of a predicate (on the honest 
%player's input) chosen by a third party better than a Simulator would be able 
%to by having only access to a single classical value of the output. The 
%reason for adding measurement to the functionality is to enforce it is indeed 
%classical and show the security of the classical protocol.

\paragraph{Case Studies.} We show that our proposed security model is 
satisfiable by proving the superposition-resistance of the classical One-Time-Pad protocol for 
implementing a Confidential Channel. Conversely, we also present an attack on a slight 
variant of the Honest-but-Curious\footnote{An Adversary is Honest-but-Curious if it 
acts honestly during the protocol but performs arbitrary computations later to 
recover more information about the input of the honest player.} version of the classical Yao's protocol 
\cite{Yao86} for Secure Two-Party Computation. On the other hand, it is 
secure against $\QPPT$ Adversaries (that have a quantum computer internally 
but send classical messages), therefore showing a separation. The variant is presented to 
demonstrate unusual and counter-intuitive reasons for which protocols may be 
insecure against superposition attacks. 
%In this case, the attack vector is a 
%message that is perfectly inconspicuous classically, along with a measurement by the 
%honest player. 
%What is even more interesting is that the Adversary does not 
%even need to have an output for the attack to work.
%	This example shows that intuitions can be misleading: Adding extra 
%communication, even an exchange of classical information which is not 
%dangerous for classical security, can make the protocol become subject to 
%superposition attacks.

\paragraph{Proof Technique.} %We analyse the crucial points allowing the 
%superposition attack to go through for the Yao protocol. 
During the superposition attack, the Adversary 
essentially makes the honest player implement the oracle call in Deutsch-Jozsa's 
(DJ)
algorithm \cite{deutsch1992rapid} through its actions on a superposition 
provided by the Adversary. The binary function for which this oracle query is 
performed is linked to two possible outputs of the protocol.
%(and the fact that it is balanced or not is associated to the value of the 
%predicate that the Adversary has to guess).
The Adversary can then apply the rest of the DJ algorithm to decide the nature of 
the function\footnote{The DJ algorithm decides whether a binary 
function is balanced or constant}, which allows it to extract the XOR of the two outputs. 
Similarly to the DJ algorithm where the state containing the output 
of the oracle remains in the $\ket{-}$ state during the rest of the algorithm 
(it is not acted upon by the gates applied after the oracle call), the 
Adversary's actions during the rest of the attack do not affect the output 
register. Interestingly, this means that the attack can thus also be performed on the same protocol but where 
the Adversary has no output.

\paragraph{Superposition-Secure Two-Party Computation.} %This shows that 
%proving input-indistinguishability in the $\QPPT$ 
%case does not translate into security with superposition access. 
Counter-intuitively, it is therefore not the output that makes the attack 
possible, but in this case the attack vector is a message consisting of information that, classically, the 
Adversary should already have, along with a partial measurement on the part of 
the honest player (which is even stranger considering that it is usually 
thought that the easiest way to prevent superposition attack is to measure the 
state). 
This shows that adding extra communication, even an exchange of classical information which seems 
meaningless for classical security, can make the protocol become subject to 
superposition attacks.
Removing the point of failure by never sending back this information 
to the Adversary (as is the case in the original Yao Protocol) makes the 
protocol very similar in structure to the One-Time-Pad Protocol, where one 
party sends everything to the other, who then simply applies local operations. 
The proof for the One-Time-Pad works by showing that there is a violation of 
the no-signalling condition of quantum mechanics if the Adversary is able to 
distinguish between ideal and real scenarios (if it were able to gain any information, 
it would be solely from these local operations by the honest player, which would imply that information 
has been transferred faster than the speed of light).
%guess the predicate. 
This technique can only be reused if the honest party in Yao's protocol does not give away the 
result of the measurement on its state (by hiding the fact that it either 
succeeded in completing the protocol or aborted if it is unable to do so 
correctly). We show that Yao's protocol is secure against superposition attacks 
if the (honest) Evaluator recovers the output and does not divulge whether or not it has aborted.

%\subsection{Roadmap of the Paper} 

% Celine: on pourrait enlever cette section pour gagner de la place, certes, 
% mais c'est le seul endroit ou on parle vraiment de l'extension a l'OT, donc 
% idealement il vaut mieux le laisser.

\headingb{Contribution Summary and Outline.} After basic notations in Section~\ref{sec:preliminaries}:
%\begin{compactitem}
\begin{itemize}
\item Section~\ref{sec:security:model} gives a new security model for superposition attacks;
\item Section~\ref{sec:yao:protocol} proves the security of a variant of Yao's protocol against adversaries exchanging classical messages;
\item Section~\ref{state_gen} demonstrates a superposition attack against this same protocol, applied in Appendix~\ref{sec:OT} to an Oblivious Transfer protocol with slightly improved attack success probability;
\item Section~\ref{subsec:path} builds a superposition-resistant version of 
Yao's protocol by leveraging the knowledge acquired through the attack.
\end{itemize}
%\end{compactitem}

%In Section~\ref{sec:preliminaries}, we give all necessary notations and basic 
%definitions used throughout the paper. Then we present our new security model 
%for superposition attacks in Section~\ref{sec:security:model}. The variant of 
%Yao's protocol is given and proven secure against adversaries exchanging 
%classical messages in Section~\ref{sec:yao:protocol}, and shown to be insecure 
%against superposition attacks in Section~\ref{sec:attack}. Note that we extend 
%this to an attack on a protocol for oblivious transfer 
%(Appendix~\ref{sec:OT}), using new techniques for a slightly improved attack 
%probability.

\headingb{Open Questions.}
An interesting research direction would be to analyse what functionalities (if any) can be implemented using the ``insecure" ideal functionalities with allowed superposition access described in \cite{SSS15}. Since these functionalities necessarily leak information, they can no longer be universal: if they were, then it would be possible to construct non-leaky functionalities with protocols only making calls to these leaky functionalities. However, some limited functionalities may also be useful, as exemplified by the biased coin-toss.

The security model presented in this paper does not support any kind of composability, as can be shown with rather simple counter-examples. While it would be ideal to have a simulation-based fully-composable framework for security against superposition attacks, we leave this question open for now. 

While we prove that Yao's protocol is secure in our model if the Evaluator does not reveal the outcome of the protocol, it would also be interesting to analyse the consequence of removing the minimal oracle assumption from the symmetric encryption scheme and instead use a traditional IND-CPA symmetric encryption with the original Yao garbled table construction (therefore adding an additional entangled quantum register). The Yao protocol has recently been studied in \cite{BDK20} and found secure against Adversaries that do not have superposition access to the honest party, under the assumption that the encryption scheme is pq-IND-CPA (the quantum Adversary does not make queries to the encryption oracle in superposition but has access to a Quantum Random Oracle).

Finally, this paper show that partial measurements by honest players are not sufficient to prevent superposition attacks. It would be interesting to find the minimum requirements for the security of protocols with superposition access and measurements by honest parties so that they are as secure as classical protocols. This field of study has been somewhat initiated by the work of \cite{Unr15} with the collapsing property (measuring one message makes the other message collapse to a classical value if it passes some form of verification), but the question of whether there is a minimal amount of information that should be measured to be superposition-secure remains open.
%%%uncomment for eprint

\section{Preliminaries}
\label{prelims}
\label{sec:preliminaries}
All protocols will be two-party protocols (between parties $\Pone$ and 
$\Ptwo$). $\Pone$ will be considered as the Adversary (written $\Pone^*$ when corrupted), while $\Ptwo$ is 
honest. Although we consider purely classical protocol, in order to be able to 
execute superposition attacks, both parties will have access to multiple 
quantum registers, respectively denoted collectively $\Regone$ and $\Regtwo$.
% and additional register $\mathcal{W}$ (the work/reference register).

All communications are considered as quantum unless specified and we call 
quantum operations any completely positive and trace non-decreasing 
superoperator acting on quantum registers (see \cite{nielsenchuang} and 
Appendix~\ref{app:more_q} for more details), with $\Id_\mathcal{A}$ being 
identity operator on register $\mathcal{A}$.
%, modelled by a quantum channel $\mathcal{C}$. 
%where $dim\mathcal{W} = (dim\mathcal{A} + dim \mathcal{B})$.

The principle of superposition attacks is to consider that a player, otherwise honestly behaving, performs all of its operations on quantum states rather than on classical states. In fact, any classical operation defined as a binary circuit with bit-strings as inputs can be transformed into a unitary operation that has the same effect on each bit-string (now considered a basis state in the computational basis) as the original operation by using Toffoli gates. Although any quantum computation can be turned into a unitary operation (using a large enough ancillary quantum register to purify it), it may be that the honest player may have to take a decision based on the value of its internal computations. This is more naturally defined as a measurement, and therefore such operations will be allowed but only when required by the protocol (in particular, when the protocol branches out depending on the result of some computation being correct). The rest of the protocol (in the honest case) will be modelled as unitary operations on the quantum registers of the players (see Appendix \ref{Qembed} for the precise description of the quantum embedding of a classical protocol).

There are two ways to represent a classical function $f : \{0,1\}^n \leftarrow \{0, 1\}^m$ as a unitary operation. The most general way (called \emph{standard oracle} of $f$) is defined on basis state $\ket{x}\ket{y}$ (where $x \in \{0,1\}^n$ and $y \in \{0, 1\}^m$) by $U_f \ket{x}\ket{y} = \ket{x}\ket{y \oplus f(x)}$, where $\oplus$ corresponds to the bit-wise XOR operation. On the other hand, if $n = m$ and $f$ is a permutation over $\{0,1\}^n$, then it is possible (although in general inefficient) to represent $f$ as a \emph{minimal oracle} by $M_f\ket{x} = \ket{f(x)}$. Note that this representation is in general more powerful than the standard representation of classical functions as quantum unitaries (see \cite{KKVB02} for more information).

The security parameter will be noted $\eta$ throughout the paper (it is passed implicitly as $1^{\eta}$ to all participants in the protocol and we omit when unambiguous). A function $\mu$ is \emph{negligible in $\eta$} if, for every polynomial $p$, for $\eta$ sufficiently large it holds that $\mu(\eta) < \frac{1}{p(\eta)}$. For any positive integer $N \in \mathbb{N}$, let $[N] := \{1, \ldots, N\}$. For any element $X$, $\#X$ corresponds to the number of parts in $X$ (eg.\ size of a string, number of qubits in a register). The special symbols $\Abort$ will be used to indicate that a party in a protocol has aborted. %(as a quantum state, we assume that $\ket{\Abort}$ is orthogonal to any valid output).

\section{New Security Model for Superposition Attacks}
\label{sec:security:model}
\headingb{General Protocol Model.} 
We assume that the input of the honest player is classical, meaning it is 
a pure state in the computational basis, unentangled from the rest of the 
input state (which corresponds to the Adversary's input). This is in stark 
contrast with other papers considering superposition attacks 
\cite{SSS15,DFNS11} where the input of the honest players is always 
a uniform superposition over all possible inputs. We also consider that 
the corrupted party is chosen and fixed from the beginning of the protocol. 
We will often abuse notation and consider the corrupted 
party and the Adversary as one entity.
% We will write $U \cdot \rho$ instead of $U \rho U^{\dagger}$.

%Given two states $\rho_0$ and $\rho_1$, the trace norm distance is 
%$\Delta(\rho_{0}, \rho_{1}):= \frac{1}{2}\norm{\rho_0 - \rho_1}_{tr}$. When 
%$\Delta(\rho_0, \rho_1) \leq \epsilon$ then any process applied to $\rho_0$ 
%behaves the same as it would on $\rho_1$ except with probability at most 
%$\epsilon$.

The security of protocols will be defined using the \emph{real/ideal 
simulation paradigm}, adapted from the Stand-Alone Model of \cite{HSS15}. The 
parties involved are: an Environment $\mathcal{Z}$, the parties 
participating in the protocol, a Real-World Adversary $\mathcal{A}$ and an 
Ideal-World Adversary also called Simulator $\mathcal{S}$ that runs~$\mathcal{A}$ 
internally and interacts with an Ideal 
Functionality (that the protocol strives to emulates). 
An execution of the protocol (in the real or ideal case) works as follows:

%\begin{compactenum}
\begin{enumerate}
%\item The Environment is composed of two entities $(\mathcal{Z}_1, 
%\\mathcal{Z}_2)$.

\item The Environment $\mathcal{Z}$ produces the input $y$ of $\Ptwo$, 
the auxiliary input state~$\rho_{\mathcal{A}}$ of the Adversary (containing 
an input for corrupted party $\Pone^*$, possibly in superposition).
%, and $\Predtwo$, a predicate on the input of $\Ptwo$.

\item The Adversary interacts with either the honest player performing the 
protocol or a Simulator with single-query access to an Ideal Functionality.

\item Based on its internal state, it outputs a bit corresponding to its guess 
about whether the execution was real or ideal. If secure, no Adversary should 
be able to distinguish with high probability the two scenarios.

%\item The Adversary sends a state to the second Environment $\mathcal{Z}_2$.

%\item The Environment $\mathcal{Z}_2$ takes as input this final state and 
%\outputs a bit corresponding to its guess of predicate $\Predtwo$ on the 
%\honest players input $y$ (this Environment has no access to the actual value 
%\of $y$, there is no transmission of state between the two Environments).
\end{enumerate}
%\end{compactenum}

%When considering a party as corrupted in a given protocol (controlled by 
%Adversary $\mathcal{A}$), the security proof should construct a Simulator, 
%running~$\mathcal{A}$ internally, which interacts with an Ideal 
%Functionality (that the protocol strives to emulates) such that both executions 
%are indistinguishable.

\headingb{Adversarial Model.} 
To capture both the security against Adversaries with and without 
superposition (so that we may compare both securities for a given protocol), 
we parametrise the security Definition \ref{sec-def} below with a class of 
Adversaries $\mathfrak{X}$. This class $\mathfrak{X}$ can be either $\BQP$ or 
$\QPPT$ and the Simulator is of the same class as the Adversary.
	A $\BQP$ machine is also called a polynomial quantum Turing machine and 
recognises languages in the $\BQP$ class of complexity \cite{Yao93,nielsenchuang}. 
They can perform any polynomial-sized family of quantum 
circuits and interact quantumly with other participants (by sending quantum 
states which may or may not be in superposition).
	A $\QPPT$ machine on the other hand is a classical machine which can 
perform the same computations as a quantum computer (and therefore it is not 
required to terminate in classical polynomial-time). More formally, no party 
interacting classically with a machine should be able to distinguish whether 
it is a $\BQP$ or a $\QPPT$ machine. The formal definition of complexity class 
$\BQP$ (and by extension of efficient quantum machines) is given in 
Definition~\ref{bqp_m} \cite{Yao93}, while that of a $\QPPT$ 
machine \cite{Unr10} is given in Definition~\ref{qppt_m}, both in 
Appendix~\ref{subsec:qppt}.

The case where both Adversary and Simulator are $\QPPT$ is called 
\emph{classical-style} security (as it is simply a weaker variant of 
Stand-Alone Security in the usual sense of \cite{HSS15}), while a protocol 
that remains secure when both are $\BQP$ is said to be 
\emph{superposition-resistant}. This allows us to demonstrate a separation 
between Adversaries with and without superposition access 
(machines in $\QPPT$ have the same computing power as $\BQP$-machines but 
operate solely on classical input and output data). 
Note that $\QPPT$-machines can be seen as restricted $\BQP$-machines and so 
superposition-resistance implies classical-style security.%\celineb{c'est pas l'inverse?}
%Also, since for $\QPPT$ Adversaries our model is a variant of the Stand-Alone Security model where the Environment at the end is restricted in the distinguishing possibilities, it also inherits the sequential composability of this model against those Adversaries.

Quantifying Definition \ref{sec-def} over a subset of Adversaries in each class yields 
flavours such as Honest-but-Curious or Malicious. The behaviour of an 
Honest-but-Curious $\QPPT$ Adversary is the same as a classical 
Honest-but-Curious Adversary during the protocol but it may use its quantum 
capabilities in the post-processing phase of its attack. We define an 
extension of these Adversaries in Definition \ref{ext-hbc}: they are almost 
Honest-but-Curious in that there is an Honest-but-Curious Adversary whose 
Simulator also works for the initial Adversary (therefore satisfying the security Definition \ref{sec-def}). This is required as the
adversarial behaviour of our attack is not strictly Honest-but-Curious when
translated to classical messages, but it does follow this new definition.
%\celineb{justifier pourquoi}

\begin{definition}[Extended Honest-but-Curious Adversaries]
\label{ext-hbc}
Let $\Pi$ be a protocol that is secure according to Definition \ref{sec-def} 
against Honest-but-Curious $\QPPT$ Adversaries. We say that an Adversary 
$\mathcal{A}$ is Extended Honest-but-Curious if there exists an 
Honest-but-Curious Adversary $\mathcal{A}'$ such that the associated Simulator 
$\mathcal{S}'$ satisfies Definition \ref{sec-def} for $\mathcal{A}$ if we 
allow it to output $\Abort$ when the honest party would abort as well.
\end{definition}

%\begin{definition}[Quantum-Strong PPT Machine]
%\label{qppt_m}
%A classical machine $M$ is said to be \emph{Quantum-Strong Probabilistic 
%Polynomial-Time} (or $\QPPT$) if there exists a quantum polynomial-time (or 
%$\BQP$) machine $\tilde{M}$ such that for any classical network $N$ (as 
%defined in the Universal Composability Framework), $N \cup M$ and $N \cup 
%\tilde{M}$ are perfectly indistinguishable. For a given $\BQP$ machine $M$, 
%we 
%define the \emph{$\QPPT$-reduced machine} $M_{\QPPT}$ as the machine applying 
%each round the same unitaries as machine $M$ to its internal registers but 
%always measuring its classical communication register in the computational 
%basis after the application of said unitary.
%\end{definition}

\headingb{Ideal Functionality Behaviour and Formal Security Definition.} 
This section differs crucially from previous models of security. The Two-Party Computation 
Ideal Functionality implementing a binary function 
$f$, formally defined as Ideal Functionality \ref{IdealF2PC} (Appendix 
\ref{app:OTP}), takes as input a quantum state from each party,
measures it in the computational basis, applies the function $f$ to the 
classical measurement results and returns the classical inputs to each party 
while one of them also receives the output. 
\footnote{This is wlog.\ classically, see Appendix \ref{app:OTP} and Section \ref{subsec:path}.}

While it can seem highly counter-intuitive to consider an ideal scenario where a 
measurement is performed (since it is not present in the real scenario),
this measurement by the Ideal Functionality is necessary in order to have a 
meaningful definition of security. It is only if the protocol with superposition access
behaves similarly to a classical protocol that 
it can be considered as resistant to superposition attacks. 
It is therefore precisely because we wish to capture the security against superposition attack, 
that we define the Ideal Functionality as purely classical (hence the 
measurement). If the Ideal Adversary (a Simulator interacting classically with the Ideal 
Functionality) and the Real Adversary (which can interact in superposition 
with the honest player) are indistinguishable, 
only then is the protocol superposition-secure.

Furthermore, as argued briefly in the Introduction, 
Ideal Functionalities which do not measure the inputs of both parties when 
they receive them as they always 
allow superposition attacks, which then extract more information than the classical 
case (as proven in \cite{SSS15}). A superposition attack against a protocol 
implementing such a functionality is therefore not considered an attack since it is by definition a 
tolerated behaviour in the ideal scenario.

%The definition of security is a combination of semantic security (where the Adversary attacking an encryption scheme is supposed to guess a predicate on the plain-text given only its encryption, see \cite{Gol04sem}) and simulation-based security. 
We can now give our security Definition \ref{sec-def}. A protocol between parties $\Pone$ and $\Ptwo$ is said to securely compute two-party functions of a given set $\mathfrak{F}$ against corrupted party $\Pone^*$ if, for all functions $f : \{0, 1\}^{n_X} \times \{0, 1\}^{n_Y} \longrightarrow \{0, 1\}^{n_Z}$ with $ f \in \mathfrak{F}$, %and any predicate $\Predtwo : \{0, 1\}^{n_Y} \longrightarrow \{0, 1\}$ on the honest player's input $y \in \{0, 1\}^{n_Y}$, 
no Adversary controlling $\Pone^*$ can distinguish between the real and ideal executions with high probability. %evaluate the value of $\Predtwo$ with a higher probability than a Simulator with single-query access to the Ideal Functionality \ref{IdealF2PC} computing $f$.

\begin{definition}[Computational Security against Adversary Class 
$\mathfrak{X}$]
\label{sec-def}
Let $\epsilon(\eta) = o(1)$ be a function of the security parameter $\eta$. 
Let $f \in \mathfrak{F}$ be the function to be computed by protocol $\Pi$ 
between parties $\Pone$ and $\Ptwo$.
%and $\Predtwo$ an associated efficiently-computable predicate chosen by an 
%Environment $\mathcal{Z}_1$ on the input of $\Ptwo$.
We say that a protocol~$\Pi$ $\epsilon(\eta)$-securely emulates Ideal 
Functionality $\mathcal{F}$ computing functions from set~$\mathfrak{F}$ 
against $\mathfrak{X}$-adversarial $\Pone^*$ (with $\mathfrak{X} \in \{\QPPT, 
\BQP\}$) if for all Adversaries $\mathcal{A}$ in class $\mathfrak{X}$ 
controlling the corrupted party $\Pone^*$ and all quantum polynomial-time 
Environments $\mathcal{Z}$,
% = (\mathcal{Z}_1, \mathcal{Z}_2)$ choosing predicate $\Predtwo$,
there exists a Simulator $S_{\Pone^*}$ in class $\mathfrak{X}$ such that:
%\vspace*{-2mm}
\[
\Bigl\lvert \mathbb{P}\Bigl[b = 0 \mid b \leftarrow \mathcal{A}\Bigl(v_{\mathcal{A}}(S_{\Pone^*}, \rho_{\mathcal{A}})\Bigr)\Bigr] -  \mathbb{P}\Bigl[b = 0 \mid b \leftarrow \mathcal{A}\Bigl(v_{\mathcal{A}}(\Ptwo, \rho_{\mathcal{A}})\Bigl)\Bigr] \Bigr\rvert \leq \epsilon(\eta)
\]

%\begin{multline}
%\Bigl\lvert \mathbb{P}\Bigl[\Predtwo(y) = b \mid b \leftarrow \mathcal{Z}_2\Bigl(v_{\mathcal{A}}(S_{\Pone^*}, \rho_{\mathcal{A}}, \Predtwo)\Bigr)\Bigr] -  \\
%\mathbb{P}\Bigl[\Predtwo(y) = b \mid b \leftarrow \mathcal{Z}_2\Bigl(v_{\mathcal{A}}(\Ptwo, \rho_{\mathcal{A}}, \Predtwo)\Bigl)\Bigr] \Bigr\rvert \leq \epsilon(\eta)
%\end{multline}

In the equation above, the variable $v_{\mathcal{A}}(S_{\Pone^*}, 
\rho_{\mathcal{A}})$ corresponds to the final state (or view) of the Adversary 
in the ideal execution when interacting with Simulator~$S_{\Pone^*}$ with 
Ideal Functionality $\mathcal{F}$ and $v_{\mathcal{A}}(\Ptwo, 
\rho_{\mathcal{A}})$ corresponds to the final state of the Adversary when 
interacting with honest party $\Ptwo$ in the real protocol $\Pi$. The 
probability is taken over all executions of protocol $\Pi$.
\end{definition}

In the case where one party does not receive an output, it is possible to reduce the security property to input-indistinguishability, defined below in Definition \ref{inp-ind}.

\begin{definition}[Input-Indistinguishability]
\label{inp-ind}
Let $\Pi$ be protocol between parties $\Pone$ and $\Ptwo$ with input space 
$\{0, 1\}^{n_Y}$ for $\Ptwo$. We say that the execution of~$\Pi$ is 
\emph{$\epsilon$-input-indistinguishable} for $\Pone^*$ if there exists an 
$\epsilon(\eta) = o(1)$ such that, for all computationally-bounded quantum 
Distinguishers $\mathcal{D}$ and any two inputs $y_1, y_2 \in \{0, 1\}^{n_Y}$:
%\vspace*{-2mm}
%\begin{multline}
%\Bigl\lvert \mathbb{P}\Bigl[b = 0 \mid b \leftarrow \mathcal{D}\Bigl(v_{\mathcal{A}}(\Ptwo(y_1), \rho_{\mathcal{A}})\Bigr)\Bigr] -  \\
%\mathbb{P}\Bigl[b = 0 \mid b \leftarrow \mathcal{D}\Bigl(v_{\mathcal{A}}(\Ptwo(y_2), \rho_{\mathcal{A}})\Bigl)\Bigr] \Bigr\rvert \leq \epsilon(\eta)
%\end{multline}
%\begin{multline}
\[
\Bigl\lvert \mathbb{P}\Bigl[b = 0 \mid b \leftarrow \mathcal{D}\Bigl(v_{\mathcal{A}}(\Ptwo(y_1), \rho_{\mathcal{A}})\Bigr)\Bigr] -  \mathbb{P}\Bigl[b = 0 \mid b \leftarrow \mathcal{D}\Bigl(v_{\mathcal{A}}(\Ptwo(y_2), \rho_{\mathcal{A}})\Bigl)\Bigr] \Bigr\rvert \leq \epsilon(\eta)
\]
%\end{multline}
In the equation above, the variable $v_{\mathcal{A}}(\Ptwo(y_i), 
\rho_{\mathcal{A}})$ corresponds to the final state of the Adversary when 
interacting with honest party $\Ptwo$ (with input $y_i$) in the real protocol 
$\Pi$. The probability is taken over all executions of protocol $\Pi$.
\end{definition}

We can now state Lemma \ref{ind-to-sec} (its proof can be found in 
Appendix \ref{app:OTP}).

%We can now state the following Lemma \ref{ind-to-sec} (the proof of which can 
%be found in Appendix \ref{app:OTP}).

\begin{lemma}[Input-Indistinguishability to Security]
\label{ind-to-sec}
\newcounter{compteur:ind-to-sec}
\setcounterref{compteur:ind-to-sec}{ind-to-sec}
Let $f \in \mathfrak{F}$ be the function to be computed by protocol $\Pi$ 
between parties $\Pone$ and $\Ptwo$, where $\mathfrak{F}$ is the set of 
functions taking as input $(x, y) \in \{0, 1\}^{n_Y} \times \{0, 1\}^{n_X}$ 
and outputting $z \in \{0, 1\}^{n_Z}$ to $\Ptwo$ (and no output to $P_1$). If 
the protocol is input-indistinguishable for adversarial $\Pone^*$ in class 
$\mathfrak{X}$ (Definition \ref{inp-ind}) then it is secure against 
adversarial $\Pone^*$ in class $\mathfrak{X}$ (Definition \ref{sec-def}) with 
identical bounds.
\end{lemma}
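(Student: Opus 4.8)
The plan is to exhibit an explicit Simulator $S_{\Pone^*}$ and then reduce the distinguishing advantage of Definition~\ref{sec-def} directly to the input-indistinguishability advantage of Definition~\ref{inp-ind}, with no loss. The guiding observation is that, because $\Ptwo$ is the only party receiving an output, the view of $\Pone^*$ in the real protocol is generated entirely by the messages the honest player sends, whose dependence on the honest input $y$ is precisely the quantity controlled by Definition~\ref{inp-ind}. Note moreover that $\mathcal{F}$ returns to each party only its \emph{own} classical input, so a Simulator acting for $\Pone^*$ learns $x$ but never $y$; it must therefore run the honest side on a dummy input, and the entire content of the Lemma is that input-indistinguishability renders this substitution harmless.

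First I would fix an arbitrary $\mathfrak{X}$-Adversary $\mathcal{A}$ controlling $\Pone^*$, an Environment $\mathcal{Z}$ producing the honest input $y$ and the auxiliary state $\rho_{\mathcal{A}}$, and define $S_{\Pone^*}$ as follows: it runs $\mathcal{A}$ internally and plays the role of honest $\Ptwo$ towards it, but on the fixed default input $y_0 := 0^{n_Y}$ in place of the unknown $y$, applying all of $\Ptwo$'s operations coherently on the shared registers exactly as in the real execution. Its single query to $\mathcal{F}$ serves only to deliver the correct output to the honest party in the ideal world; since $\Pone^*$ receives no output, nothing is returned to $\mathcal{A}$ from this query and it does not enter $v_{\mathcal{A}}$. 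The adversarial view produced by $S_{\Pone^*}$ is therefore exactly $v_{\mathcal{A}}(\Ptwo(y_0), \rho_{\mathcal{A}})$, namely the view of $\mathcal{A}$ facing an honest $\Ptwo$ whose input is $y_0$. As $S_{\Pone^*}$ only runs honest-player operations and invokes $\mathcal{A}$ once, it lies in the same class $\mathfrak{X}$.

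It then remains to bound the left-hand side of Definition~\ref{sec-def}, which with this Simulator equals $\bigl\lvert \mathbb{P}[b = 0 \mid v_{\mathcal{A}}(\Ptwo(y_0), \rho_{\mathcal{A}})] - \mathbb{P}[b = 0 \mid v_{\mathcal{A}}(\Ptwo(y), \rho_{\mathcal{A}})]\bigr\rvert$, i.e.\ the advantage in telling apart the honest input $y_0$ from $y$ given the adversarial view. I would invoke Definition~\ref{inp-ind} with $y_1 := y_0$ and $y_2 := y$, taking as Distinguisher $\mathcal{D}$ the final output step of $\mathcal{A}$ that maps its view to the guess bit $b$; since $\mathcal{A}$ is in class $\mathfrak{X}$, this $\mathcal{D}$ is admissible. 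Input-indistinguishability bounds the expression by the same $\epsilon(\eta)$, yielding security with identical bounds.

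The step I expect to be most delicate is reconciling the coherently-acting honest player with the \emph{measuring} Ideal Functionality: the output handed to the honest party in the ideal world should be the correct $f(x, y)$ for $\mathcal{A}$'s effective input $x$, yet $S_{\Pone^*}$ internally runs on $y_0 \neq y$. The point to verify is that extracting $x$ for the single query to $\mathcal{F}$ can be arranged without disturbing the registers that constitute $v_{\mathcal{A}}$, so that output delivery and view generation are cleanly decoupled. This is exactly where the hypothesis that $\Pone^*$ has no output is used: the measurement performed by $\mathcal{F}$ sends its result only to $\Ptwo$ and never feeds back into $\mathcal{A}$, hence it is absent from the compared quantity $v_{\mathcal{A}}$ regardless of how the query is implemented, and the reduction to Definition~\ref{inp-ind} goes through untouched.
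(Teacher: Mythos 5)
Your proof is correct and takes essentially the same approach as the paper's: the Simulator runs the honest protocol on a dummy input (the paper uses a random $\tilde{y}$, you use the fixed $0^{n_Y}$ --- immaterial, since Definition~\ref{inp-ind} quantifies over all input pairs) and the Adversary's final guessing step is treated as the Distinguisher of Definition~\ref{inp-ind}. The paper's Simulator simply omits the call to the Ideal Functionality altogether, which is permitted because only the Adversary (not a global distinguisher) judges the two executions; this disposes of the delicate decoupling point you flag in your last paragraph.
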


%This security model is weaker than requiring a negligible distinguishing probability between the real and ideal scenario (as in the Stand-Alone Framework of \cite{HSS15}) since it only requires that whatever information about the honest player's input is extractable by the Adversary, the same information should be extractable by a Simulator having only access to the Ideal Functionality in a classical way. 

\headingb{Comments on the Security Model.} 
We show that Definition~\ref{sec-def} is achievable by giving a proof that the 
Classical One-Time Pad is secure against superposition attacks (see Appendix 
\ref{app:OTP}). 
In our security model, both the Adversary and Simulator can have superpositions of states as input. The only differences is that, if the Simulator chooses to send a state to the Ideal Functionality, it knows that this third party will perform on it a measurement in the computational basis. 
Note that in any security proof, 
the Simulator may choose not to perform the call to the Ideal Functionality. 
This is because the security definition does not force the Simulator to 
reproduce faithfully the output of the honest Client, as the distinguishing 
is done only by the Adversary and not a global distinguisher as in 
\cite{HSS15}. 
This also means that sequential composability 
explicitly does not hold with such a definition, even with the most basic 
functionalities (whereas the Stand-Alone Framework of \cite{HSS15} guarantees 
it). An interesting research direction would be to find a composable 
framework for proving security against superposition attacks and we leave 
this as an open question.

\section{The Modified Honest-but-Curious Yao Protocol}
\label{hbcyao}
\label{sec:yao:protocol}
In order to demonstrate the capabilities of our new model in the case of more 
complex two-party scenarios, we will analyse the security of the well-known 
Yao Protocol, the pioneer in Secure Two-Party Computation, against $\QPPT$ and 
$\BQP$ Adversaries.

Its purpose is to allow two Parties, the Garbler and the Evaluator, to compute a joint 
function on their two classical inputs. The Garbler starts by preparing an encrypted 
version of the function and then the Evaluator decrypts it using keys that correspond 
to the two players' inputs, the resulting decrypted value being the final output.

%\celine{Ajouter 3-4 lignes de description tres informelle: quel est le but du 
%protocole, qui est le garbler, quels sont ses input/output, qui est 
%l'evaluator, quels sont ses input/output. Je te laisse faire?}

The Original Yao Protocol secure against Honest-but-Curious classical Adversaries 
has first been described by Yao in the oral presentation for 
\cite{Yao86}, but a rigorous formal proof was only presented in 
\cite{LP04proof}. It has been proven secure against quantum Adversaries
with no superposition access to the honest player in \cite{BDK20} (for a 
quantum version of IND-CPA that only allows random oracle query to be in 
superposition).

We start by presenting informal definitions for symmetric encryption schemes in 
Section \ref{subsec:sym_enc_def}%, since our version of Yao's Protocol will require the use of an encryption scheme 
%satisfying the slightly different properties than the original protocol, the main 
%one being that its security is defined by comparison with a random permutation 
%of the message space instead of IND-CPA
(the formal definitions are presented in 
Appendix \ref{subsec:symmetric:encryption}).
We then present in Section \ref{orig_yao} the garbled table 
construction which is the main building block of Yao's Protocol and give an 
informal description of the Original Yao Protocol. Then in Section 
\ref{prot_prez} we give a description of a slight variant of the 
original protocol, resulting in the Modified Yao Protocol. The proofs of correctness and 
$\QPPT$-security are given in Appendix~\ref{proofs_CS} and show that the modifications 
do not make the protocol less secure in the classical case or against $\QPPT$ 
Adversaries, but will make superposition attacks possible as presented 
in Section~\ref{sec:attack}.

%\celine{je pense qu'il faut ajouter l'objectif final de "sur en classique MAIS 
%casse par superposition, histoire de rappeler le but au lecteur (perdu). Cf ma 
%derniere phrase ci-dessus, que je te laisse modifier comme tu veux.}

\subsection{Definitions For Symmetric Encryption Schemes}
\label{subsec:sym_enc_def}

An encryption scheme consists of two classical efficiently computable 
deterministic functions $\Enc : \mathfrak{K} \times \mathfrak{A} \times 
\mathfrak{M} \rightarrow \mathfrak{K} \times \mathfrak{A} \times \mathfrak{C}$ 
and $\Dec : \mathfrak{K} \times \mathfrak{A} \times \mathfrak{C} \rightarrow 
\mathfrak{K} \times \mathfrak{A} \times \mathfrak{M}$ (where $\mathfrak{K}$ is 
the set of valid keys, $\mathfrak{A}$ the set of auxiliary inputs, 
$\mathfrak{M}$ the set of plaintext messages and $\mathfrak{C}$ the set of 
ciphertexts, which is supposed equal to~$\mathfrak{M}$). 
We suppose that for all $(k, \mathrm{aux}, m) \in \mathfrak{K} \times \mathfrak{A} \times 
\mathfrak{M}$, we have that $\Dec_k(\mathrm{aux}, \Enc_k(\mathrm{aux}, m)) = m$.

We will use a symmetric encryption scheme with slightly different properties compared to the original protocol of \cite{Yao86} or \cite{LP04proof}. The purpose of these modifications is to make it possible to later represent the action of the honest player (the decryption of garbled values) using a minimal oracle representation when embedded as a quantum protocol (as described in Lemma \ref{min_or}). %It has to verify the following property in order to behave correctly over superpositions of inputs. 
We give in Appendix \ref{subsec:symmetric:encryption} sufficient conditions implying this definition and a concrete instantiation of a symmetric encryption scheme that satisfies them.

%\celine{Le "as above" de ta definition ci-dessous me perturbe, tu n'as pas 
%vraiment defini un schema de chiffrement si ? En particulier tu n'as meme pas 
%rappele que Enc $\circ$ Dec = id. Soit tu changes la formulation, soit tu 
%completes ton premier paragraphe de la section~\ref{subsec:sym_enc_def}.}

\begin{definition}[Minimal Oracle Representation]
\label{def_min_or}
Let $(\Enc, \Dec)$ be an encryption scheme defined as above, we say that it has a Minimal Oracle Representation if there exists efficiently computable unitaries $\mathsf{M}_{\Enc}$ and $\mathsf{M}_{\Dec}$, called minimal oracles, such that for all $k \in \mathfrak{K}$, $\mathrm{aux} \in \mathfrak{A}$ and $m \in \mathfrak{M}$, $\mathsf{M}_{\Enc}\ket{k}\ket{\mathrm{aux}}\ket{m} = \ket{e_K(k)}\ket{e_A(\mathrm{aux})}\ket{\Enc_k(\mathrm{aux}, m)}$ (in which case $\mathsf{M}_{\Enc}^{\dagger} = \mathsf{M}_{\Dec}$), where $e_K$ and $e_A$ are efficiently invertible permutations of the key and auxiliary value.
\end{definition}

The requirement above forces us to define the symmetric encryption scheme as 
secure if it is a quantum-secure pseudo-random permutation. We give an 
informal definition, the formalised version can be found as Definition 
\ref{sec_sym} in Appendix \ref{subsec:symmetric:encryption}. 
For a discussion on this choice of security definitions, see Appendix  \ref{subsec:comments}.

%\celine{Je n'ai plus tout en tete, mais as-tu justifie quelque part pourquoi 
%tu n'utilises pas les notions "classiques" de securite quantique des 
%chiffrements symetriques? Les as-tu citees d'ailleurs?}
%\luka{Je le dis plus bas dans le premier paragraphe de la section \ref{prot_prez} aussi}

\begin{definition}[Real-or-Permutation Security of Symmetric Encryption (Informal)]
\label{inf_sec_sym}
A symmetric encryption scheme is said to be secure against quantum Adversaries if the distinguishing advantage of a computationally-bounded quantum Adversary in the following game is negligible in the security parameter:
%\begin{compactenum}
\begin{enumerate}
\item The Challenger chooses either uniformly at random a permutation over the plaintext message space $\mathfrak{M}$ or samples a key uniformly at random from $\mathfrak{K}$.
\item For all encryption queries, the Adversary sends a state $\rho_i$ of its choice to the Challenger.
\item The Challenger responds by applying the minimal oracle defined in the first step and sending the result back to the Adversary. % or inverse 
\item The Adversary guesses whether it interacted with the real encryption function or a random permutation.
\end{enumerate}
%\end{compactenum}

\end{definition}

\subsection{The Original Yao Protocol}
\label{orig_yao}

The protocol will be presented in a hybrid model where both players have access to a trusted third party implementing a 1-out-of-2 String Oblivious Transfer (Ideal Functionality \ref{IdealOT}). The Garbler plays the role of the Sender of the OT while the Evaluator is the Receiver. The attack presented further below does not rely on an insecurity from the OT, which will be supposed to be perfectly implemented and, as all Ideal Functionalities in this model, without superposition access. %(equivalently, if quantum access to it is allowed, the trusted party always measures any state it receives as input in the computational basis before proceeding and always sends a pure state in the computational basis as output). 
As a consequence of the non-composability of our framework, replacing an Ideal Functionality with a protocol that constructs it does not guarantee that the global construction is secure. It deserves to be noted that our attack does not rely on anything but the classical correctness of the Oblivious Transfer, so this is not relevant to our study.

We focus on the case where the output is a single bit. Suppose that the 
Garbler and Evaluator have agreed on the binary function to be evaluated $f : 
\{0, 1\}^{n_X} \times \{0, 1\}^{n_Y} \longrightarrow \{0, 1\}$, with the 
Garbler's input being $x \in \{0, 1\}^{n_X}$ and the Evaluator's input being 
$y \in \{0, 1\}^{n_Y}$. 
The protocol can be summarised as follows. 
%It uses as primitive a symmetric encryption scheme $(\Enc, \Dec)$ and an Oblivious Transfer functionality. 
The Garbler $G$ generates a garbled circuit $\mathit{GC}_f$ (defined 
below), along with keys $\qty{k_0^{G, i}, k_1^{G, i}}_{i \in [n_X]}$ and 
$\qty{k_0^{E, i}, k_1^{E, i}}_{i \in [n_Y]}$ for the Garbler's and Evaluator's input respectively. 
To each bit of input correspond 
two keys, one (lower-indexed with $0$) if the player chooses the value $0$ for 
this bit-input and the other if it chooses the value $1$. They invoke $n_Y$ 
instances of a 1-out-of-2 String OT Ideal Functionality, the Evaluator's 
input (as Receiver of the OT) to these is $y_i$ for $i \in [n_Y]$, while the 
Garbler inputs (as Sender) the keys $(k_0^{E, i}, k_1^{E, i})$ corresponding 
to input $i$ of the Evaluator. The Evaluator therefore recovers $k_{y_i}^{E, 
i}$ at the end of each activation of the OT. The Garbler then sends the keys 
$\qty{k_{x_i}^{G, i}}_{i \in [n_X]}$ corresponding to its own input along with 
the garbled circuit which is constructed as follows.

%\celine{Le paragraphe qui suit n'est pas le plus limpide de la section, je te 
%laisse voir? Sinon je reviendrai dessus.}

Let $(\Enc, \Dec)$ be a symmetric encryption scheme. To construct the garbled table for a gate computing a two-bit function $g$, with inputs wires labelled $a$ and $b$ and output wire $z$, the Garbler first chooses keys $(k_0^a, k_1^a, k_0^b, k_1^b) \in \mathfrak{K}^4$ for the input wires and $k^z \in \{0, 1\}$ for the output\footnote{The value $k^z$ is used to One-Time-Pad the outputs, preserving security for the Garbler after decryption as only one value from the garbled table can be decrypted correctly.}. Let $\mathrm{aux}_a$ and $\mathrm{aux}_b$ be two auxiliary values for the encryption scheme. It then iterates over all possible values $\tilde{a}, \tilde{b} \in \{0, 1\}$ to compute the garbled table values $E_{\tilde{a}, \tilde{b}}^{k^z}$ defined as (with padding length $p = n_M - 1$, where $n_M$ is the bit-length of the messages of the encryption scheme and $\parallel$ represents string concatenation):

%The construction of garbled values is presented for a gate computing a function $g$ with fan-in $2$ (ie. two input bits), the garbled table for a function of an arbitrary number of bits may be derived from it in a straight-forward way. Let $(\Enc, \Dec)$ be a symmetric encryption scheme. Let $(a, b)$ be the labels of the wires functioning as inputs to the gate computing $g$, and $(k_0^a, k_1^a, k_0^b, k_1^b)$ the associated keys (chosen from $\mathfrak{K}$). Let $c$ be the label of the output wire and $k^c \in \{0, 1\}$ the associated key. Let $\mathrm{aux}_a$ and $\mathrm{aux}_b$ be two auxiliary values for the encryption scheme. The \emph{initial} garbled table $\qty{E_{\tilde{a}, \tilde{b}}^{k^c}}_{\tilde{a}, \tilde{b} \in \{0, 1\}}$ is defined as (with padding length $p = n_M - 1$, where $n_M$ is the bit-length of the messages of the encryption scheme and $\parallel$ represents string concatenation):

\[
E_{\tilde{a}, \tilde{b}}^{k^z} := \Enc_{k_{\tilde{a}}^a}\Bigl(\mathrm{aux}_a, \Enc_{k_{\tilde{b}}^b}(\mathrm{aux}_b, g(\tilde{a}, \tilde{b}) \oplus k^z \parallel 0^p)\Bigr) \\
\]

%\[
%\begin{matrix}
%E_{0, 0}^{k^c} := \Enc_{k_0^a}\Bigl(\mathrm{aux}_a, \Enc_{k_0^b}(\mathrm{aux}_b, g(0, 0) \oplus k^c \parallel 0^p)\Bigr) \\
%E_{0, 1}^{k^c} := \Enc_{k_0^a}\Bigl(\mathrm{aux}_a, \Enc_{k_1^b}(\mathrm{aux}_b, g(0, 1) \oplus k^c \parallel 0^p)\Bigr) \\
%E_{1, 0}^{k^c} := \Enc_{k_1^a}\Bigl(\mathrm{aux}_a, \Enc_{k_0^b}(\mathrm{aux}_b, g(1, 0) \oplus k^c \parallel 0^p)\Bigr) \\
%E_{1, 1}^{k^c} := \Enc_{k_1^a}\Bigl(\mathrm{aux}_a, \Enc_{k_1^b}(\mathrm{aux}_b, g(1, 1) \oplus k^c \parallel 0^p)\Bigr)
%\end{matrix}
%\]

The ordered list thus obtained is called the \emph{initial} garbled table. The Garbler then chooses a random permutation $\pi \in \mathcal{S}_4$ and applies it to this list, yielding the \emph{final} garbled table $\mathit{GT}_{g}^{(a, b, z)}$. %(the index used by the permutation and the ones used in the definition above can be linked easily via $i = 2a + b + 1$). 
For gates with fan-in $l$, the only difference is that the number of values in the table will be $2^l$, the rest may be computed in a similar way (by iterating over all possible values of the function's inputs). The keys are always used in an fixed order which is known to both players at time of execution (we suppose for example that, during encryption, all the keys of the Evaluator are applied first, followed by the keys of the Garbler).

Finally, after receiving the keys (through the OT protocols for its own, 
and via direct communication for the Garbler's) and garbled table, the 
Evaluator uses them to decrypt sequentially each entry of the 
table and considers it a success if the last $p$ bits are equal to $0$ 
(except with probability negligible in $p$, the decryption of 
a ciphertext with the wrong keys will not yield $p$ bits set to $0$, see Lemma \ref{prot_corr}). 
It~then returns the corresponding register to the Garbler.

\subsection{Presentation of the Modified Yao Protocol}
\label{prot_prez}

\paragraph{Differences with the Original Yao Protocol.} There are four 
main differences between our Modified Yao Protocol \ref{Yao} and the 
well-known protocol from \cite{Yao86} recalled above. The first two are 
trivially just as secure in the classical case (as they give no more power to 
either player): the Garbler sends one copy of its keys to the Evaluator for 
each entry in the garbled table and instructs it to use a ``fresh" copy for 
each decryption; and the Evaluator returns to the Garbler the copy of the 
Garbler's keys that were used in the successful decryption. Notice also that 
there is only one garbled table for the whole function instead of a series of 
garbled tables corresponding to gates in the function's decomposition. This is 
less efficient but no less secure than the original design in the classical 
case (and quantum case without superposition access), as a player breaking the 
scheme for this configuration would only have more power if it has access to 
intermediate keys as well. The last difference is the use of a weaker security 
assumption for the symmetric encryption function (indistinguishability from a 
random permutation instead of the quantum equivalents to IND-CPA security 
developed in \cite{BZ13,GHS16,MS16}). This lower security 
requirement is imposed in order to model the honest 
player's actions using the minimal oracle representation. This property 
influences the security against an adversarial Evaluator, but Theorem \ref{sec_yao} 
shows that this assumption is sufficient for security in our scenario. 
The reasons for these modifications, related to our attack, are developed in Appendix \ref{subsec:comments}.

The full protocol for a single bit of output is described in Protocol~\ref{Yao}. % while the version for multiple output bits is given by Protocol \ref{Yao_decomp}, both of which can be found in Appendix \ref{app:multiple_output}. 
The correctness and security against $\QPPT$-Adversaries of this Modified Yao Protocol are captured by Theorems \ref{prot_corr} and \ref{sec_yao} (see Appendix \ref{proofs_CS} for proofs), showing that the modifications above have no impact against these Adversaries.

%\celine{C'est un peu embetant de faire reference au protocol 2 a chaque fois 
%car il est en annexe... Peut-etre faire une periphrase et le citer "en plus", 
%genre "modified yao protocol (protocol 2)"?}

\begin{theorem}[Correctness of the Modified Yao Protocol]
\label{prot_corr}
\newcounter{compteur:prot_corr}
\setcounterref{compteur:prot_corr}{prot_corr}
Let $(\Enc, \Dec)$ be a symmetric encryption scheme with a Minimal Oracle Representation 
(Definition \ref{def_min_or}). Protocol \ref{Yao} is correct with probability exponentially close to $1$ in 
$\eta$ for $p = \mathit{poly}(\eta)$.
\end{theorem}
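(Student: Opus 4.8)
The plan is to reduce the statement to a purely classical correctness argument and then bound two distinct failure events separately. First I would observe that correctness concerns only the honest execution on the fixed classical inputs $x$ and $y$. Since these inputs are computational-basis states and, by Definition~\ref{def_min_or}, the minimal oracles $\mathsf{M}_{\Enc}$ and $\mathsf{M}_{\Dec}$ act on basis states exactly as the classical functions $\Enc$ and $\Dec$, the quantum embedding of Protocol~\ref{Yao} run honestly never creates superpositions and coincides step-by-step with a run of the underlying classical protocol. Hence it suffices to prove that the classical Modified Yao Protocol produces the correct output except with probability exponentially small in $\eta$.

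Next I would establish completeness. The single table entry indexed by the true inputs $(x,y)$ is always accepted: the Evaluator holds $k_{y_i}^{E,i}$ (from the OT) and receives $k_{x_i}^{G,i}$, i.e.\ precisely the keys under which that entry was encrypted. Iterating the decryption-correctness identity $\Dec_k(\mathrm{aux}, \Enc_k(\mathrm{aux}, m)) = m$ through the fixed-order encryption layers peels them off one by one and returns $f(x,y) \oplus k^z \parallel 0^p$, whose last $p$ bits are $0$. Thus this entry passes the padding check and the correct (masked) output is recovered. This part is deterministic and contributes no error.

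The real work is the soundness bound: controlling the probability that some wrongly-keyed entry also displays a $0^p$ suffix. For any entry indexed by candidate inputs $(\tilde x, \tilde y) \neq (x,y)$, at least one layer is decrypted by the Evaluator with a key different from the one used to encrypt it. I would first pass to the idealised experiment in which every call to the encryption scheme is replaced by a truly random permutation, paying only a negligible additive term by the Real-or-Permutation security of Definition~\ref{inf_sec_sym} (formally via a hybrid over the polynomially many keys, the test being the efficiently computable padding check, so a non-negligible change in acceptance probability would yield a successful distinguisher). In the idealised experiment, once the decryption cascade reaches the first mismatched key, the value fed into that layer is independent of the corresponding random permutation, so the output of that layer is uniform over $\mathfrak{M}$; any further bijective decryptions preserve uniformity, so the final decrypted value is uniform. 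As at most $2^{n_M - p}$ messages carry the $0^p$ suffix, such an entry is accepted with probability at most $2^{-p}$, and a union bound over the at most $2^{n_X + n_Y}$ entries caps the spurious-acceptance probability by $2^{n_X + n_Y} \cdot 2^{-p}$.

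Combining the two parts, the total failure probability is at most $2^{n_X + n_Y} \cdot 2^{-p} + \mathrm{negl}(\eta)$; choosing the padding length $p = \mathit{poly}(\eta)$ large enough that $p \geq n_X + n_Y + \eta$ makes this exponentially small in $\eta$, which gives the claim. I expect the main obstacle to be precisely this soundness reduction: one must argue carefully that a single mismatched key in the nested encryption already randomises the decrypted value (handling the cascade in which garbage produced at an outer layer feeds the inner decryption), and that replacing the scheme by a random permutation is legitimate even though the scheme is a permutation rather than a random function — the latter resolved by noting that a single query to an independent random permutation (or its inverse) on a fixed point still yields a uniform output.
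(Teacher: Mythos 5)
Your proposal is correct and takes essentially the same route as the paper's own proof: wrong-key decryptions are treated as applications of a random permutation (up to the negligible real-or-permutation distinguishing advantage), each spurious $0^p$ suffix then occurs with probability $2^{-p}$, and a union bound together with a choice of $p$ linear in $\eta + n_X + n_Y$ (the paper takes $p = \eta + 2(n_X+n_Y)$) yields the claim. The only differences are bookkeeping and rigour: you count just the $2^{n_X+n_Y}$ decryptions of the honest execution, whereas the paper bounds all $\approx 2^{2(n_X+n_Y)+1}$ ciphertext/key-combination pairs (including both values of $k^z$) so that the same bound can be reused in the attack analysis of Theorem~\ref{correct_att}, while your explicit quantum-to-classical reduction and the ``first mismatched layer'' uniformity argument make precise steps the paper leaves implicit.
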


%\celine{Dans le theoreme 2, tu parles de format-preserving in-place 
%non-mixing. Mais maintenant que c'est en annexe, tu n'as meme pas cite ces 
%proprietes ci-dessus (sauf erreur)? Elles arrivent comme un cheveu sur la 
%soupe la, il faudrait les introduire, au moins leur nom.}

\begin{theorem}[$\QPPT$-Security of the Modified Yao Protocol]
\label{sec_yao}
\newcounter{compteur:sec_yao}
\setcounterref{compteur:sec_yao}{sec_yao}
Consider a hybrid execution where the Oblivious Transfer is handled by a 
classical trusted third party.
%Let $\Pi_{OT}$ be a classical oblivious transfer protocol that is 
%$\epsilon_{OT}$-input-indistinguishable against $\QPPT$ Adversaries.
%in the Stand-Alone Model of \cite{HSS15} (it can simply be 
%input-indistinguishable instead of fully simulatable).
Let $(\Enc, \Dec)$ be a symmetric encryption scheme that is $\epsilon_{Sym}$-real-or-permutation-secure 
(Definition \ref{sec_sym}). Then Protocol 
\ref{Yao} is perfectly-secure against a $\QPPT$ adversarial Garbler (the 
Adversary's advantage is $0$) and $(2^{n_X + n_Y} - 1) 
\epsilon_{Sym}$-secure against $\QPPT$ adversarial Evaluator according to 
Definition \ref{sec-def}.
\end{theorem}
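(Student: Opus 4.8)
The plan is to prove the two parts of Theorem~\ref{sec_yao} separately, since the adversarial Garbler and adversarial Evaluator cases require fundamentally different arguments. For the \textbf{adversarial Garbler}, the claim is perfect security, so I would construct a Simulator $S_{\Pone^*}$ that achieves advantage exactly $0$. The key observation is that in the hybrid model the Garbler learns nothing from the honest Evaluator except (in the Modified Protocol) the returned copy of the Garbler's own keys used in the successful decryption. Since these keys are values the Garbler itself generated and sent, the Simulator can reproduce the Garbler's entire view without any call to the Ideal Functionality: it runs $\mathcal{A}$ internally, plays the role of the OT functionality (absorbing the Garbler's input keys), and returns to $\mathcal{A}$ the appropriate key register. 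Because $\QPPT$ messages are classical, the measured inputs the Simulator extracts let it feed the Ideal Functionality consistently, and the returned keys are a deterministic function of what the Garbler already holds. I would argue that the joint distribution of the Garbler's view is identical in both worlds, giving advantage $0$.

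For the \textbf{adversarial Evaluator}, the bound $(2^{n_X+n_Y}-1)\epsilon_{Sym}$ signals a hybrid argument over the garbled table entries. The honest Garbler's only messages are the keys $\{k_{x_i}^{G,i}\}$, the keys delivered via the OT functionality (which reveals only $k_{y_i}^{E,i}$, not both), and the final garbled table of $2^{n_X+n_Y}$ encrypted entries. The Simulator knows the Evaluator's effective input (since it interacts classically and measures) and hence knows which single entry decrypts correctly; it can obtain the corresponding output bit from the Ideal Functionality and plant it in that one entry. The remaining $2^{n_X+n_Y}-1$ entries encrypt values under at least one key the Evaluator never receives, so by real-or-permutation security (Definition~\ref{sec_sym}) each can be replaced by an encryption of a dummy/random value. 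I would set up a sequence of hybrids, replacing one ``unreachable'' entry at a time with an encryption under a fresh random permutation, bounding each step by $\epsilon_{Sym}$ via a reduction to the security game, and summing the $2^{n_X+n_Y}-1$ steps to get the stated advantage.

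The main obstacle, and where I would spend the most care, is the reduction in the Evaluator hybrid argument: I must ensure that each unreachable table entry genuinely depends on a key the Evaluator does not hold, so that swapping it for a random-permutation output is indistinguishable. This requires tracking the nested double-encryption structure $E_{\tilde a,\tilde b}^{k^z}=\Enc_{k_{\tilde a}^a}(\mathrm{aux}_a,\Enc_{k_{\tilde b}^b}(\mathrm{aux}_b,\cdot))$ and verifying that for every wrong combination $(\tilde a,\tilde b)$, at least one of the two wrapping keys is absent from the Evaluator's view (guaranteed because the OT delivers only one key per input wire and the Garbler sends only $k_{x_i}^{G,i}$). A subtlety is that the minimal-oracle (permutation) formulation means the decryption with a wrong key yields some value whose last $p$ bits are $0$ only with negligible probability (Lemma~\ref{prot_corr}), so I would fold in the correctness bound to argue the Evaluator cannot distinguish a planted entry from a real one by the padding check. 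The padding-collision term is negligible in $p$ and can be absorbed, but I would state explicitly how it interacts with the $(2^{n_X+n_Y}-1)\epsilon_{Sym}$ term.

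Finally, I would note that because the output goes only to the Evaluator in the single-bit case, the adversarial-Evaluator direction could alternatively be routed through Lemma~\ref{ind-to-sec}, reducing security to input-indistinguishability of the honest Garbler's messages; but since here the Evaluator \emph{is} the one receiving output, the direct Simulator-based hybrid argument above is the cleaner route, and I would make sure the Simulator is itself $\QPPT$ as required by Definition~\ref{sec-def}.
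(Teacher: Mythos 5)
There is a genuine gap, and it stems from a single misreading that propagates through both halves of your proof: in the Modified Yao Protocol (Protocol \ref{Yao}) it is the \emph{Garbler} who receives the one bit of output and the Evaluator who receives none --- the Evaluator returns the decrypted output register (containing $f(x, y) \oplus k^z$) together with the Garbler's keys, and the Garbler then strips the one-time pad $k^z$. Your proposal has this backwards (``the output goes only to the Evaluator'', ``the Garbler learns nothing \ldots except the returned copy of [its] own keys''). Consequently your adversarial-Garbler simulation is unsound: the Garbler's view contains not only its own returned keys but also the output register holding $f(\hat{x}, \hat{y}) \oplus k^z$, which depends on the honest Evaluator's actual input $\hat{y}$. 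Since the Adversary knows $k^z$ and the Environment (which chose $\hat{y}$) can embed checking information in the auxiliary input $\rho_{\mathcal{A}}$, a Simulator that never queries the Ideal Functionality cannot reproduce this register, and the claimed advantage of $0$ fails. The paper's Simulator must (i) run the OTs with a fake input $\tilde{y}$, (ii) measure the classically-sent keys to extract $\hat{x}$, (iii) call the Ideal Functionality to obtain $f(\hat{x}, \hat{y})$, and (iv) recover $k^z$ by decrypting the garbled table with its fake-input keys, computing $f(\hat{x}, \tilde{y})$ on its own and taking the XOR; only then can it hand back $f(\hat{x}, \hat{y}) \oplus k^z$. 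Step (iv) is the non-trivial content of that direction and is entirely absent from your sketch (your text is also internally inconsistent on whether the Ideal Functionality is called at all).

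On the Evaluator side, your hybrid structure --- replace each of the $2^{n_X + n_Y} - 1$ entries whose decryption requires a key the Evaluator never holds, at cost $\epsilon_{Sym}$ each --- matches the paper's Games 1--2, including the care needed for the nested double encryption. But your treatment of the single reachable entry is again inverted: the corrupted Evaluator has no output in this protocol, so the Ideal Functionality gives the Simulator nothing to ``plant''; and nothing needs planting, because $k^z$ (unknown to the Evaluator) one-time-pads the plaintext of that entry, so encrypting a random bit is perfectly equivalent to the real encryption --- this is exactly the paper's Game 3, which costs nothing and is why the total is $(2^{n_X + n_Y} - 1)\epsilon_{Sym}$ rather than $2^{n_X + n_Y}\epsilon_{Sym}$. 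For the same reason your closing remark is reversed as well: it is the Evaluator direction (where the Adversary has no output) that the paper explicitly routes through input-indistinguishability via Lemma \ref{ind-to-sec}, not an alternative to be avoided.
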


\section{Analysis of Yao's Protocol with Superposition Access}
\label{supponyao}
\label{sec:attack}
%In Section \ref{Qembed} we first describe how the actions performed during the 
%protocol are transcribed into quantum operations, then 
Section \ref{state_gen} presents a superposition attack on the Modified Yao Protocol (Protocol~\ref{Yao}). % and Section \ref{action_env} describes the actions of the 
%Adversary after the execution of the protocol (therefore finalising the 
%attack). 
%A few comments on this attack are given in 
%Section~\ref{subsec:comments}, while 
The formal version of the Attacks are found in Appendix~\ref{app:form_att} and the proofs of the accompanying Theorems in Appendix~\ref{app:proof_insec}. 
The attack is further optimised in Appendix \ref{sec:OT} using the free-XOR 
technique, and applied to an instance of Yao's Protocol computing an 
Oblivious Transfer. 
Section~\ref{subsec:path} then analyses it post-mortem to build a Superposition-Resistant Yao Protocol.

Note that this attack does not simply distinguish between the ideal and real executions, but allows the Adversary to extract one bit of information from the honest player's input. It is therefore a concrete attack on the Modified Yao Protocol~\ref{Yao} (as opposed to a weaker statement about not being able to perform an indistinguishable simulation in our model).

\subsection{Attacking the Modified Yao Protocol via Superpositions}
\label{state_gen}

In the following, the classical protocol is embedded in a quantum framework, all message are stored in quantum registers as quantum states that can be in superposition. The encryption and decryption procedures are performed using the Minimal Oracle Representation from Definition \ref{def_min_or}. The OT Ideal Functionality \ref{IdealOT} from Appendix \ref{app:OTP} measures the inputs and outputs states in the computational basis. The checks of the Evaluator on the padding for successful decryption are modelled as a quantum measurement of the corresponding register.

We start by presenting the action of the adversarial Garbler during the execution of Protocol~\ref{Yao} (its later actions are described below). Its aim is to generate a state containing a superposition of its inputs and the corresponding outputs for a fixed value of the Evaluator's input. This State Generation Procedure on the Modified Yao Protocol \ref{Yao} (Attack \ref{Yao_att}) can be summarised as follows (see Theorem \ref{correct_att} for its analysis):
%\begin{compactenum}
\begin{enumerate}
\item The Adversary's choice of keys, garbled table generation (but for both values of $k^z$) and actions in the OT are performed honestly.
\item Instead of sending one set of keys as its input, it sends a superposition of keys for two different non-trivial values of the Garbler's input $(\widehat{x_0}, \widehat{x_1})$ (they do not uniquely determine the output).
%\celine{Je ne sais pas si tu veux expliquer phase-flip?}
\item For each value in the garbled table, it instead sends a uniform superposition over all calculated values (with a phase of $-1$ for states representing garbled values where $k^z = 1$).
\item It then waits for the Evaluator to perform the decryption procedure and, if the Evaluator succeeded in decrypting one of the garbled values and returns the output and register containing the Garbler's keys, the Adversary performs a clean-up procedure which translates each key for bit-input $0$ (respectively $1$) into a logical encoding of $0$ (respectively $1$). This procedure depends only on its own choice of keys.
%\item Finally, it transmits the state to the second Environment.
\end{enumerate}
\begin{theorem}[State Generation Analysis]
\label{correct_att}
\newcounter{compteur:correct_att}
\setcounterref{compteur:correct_att}{correct_att}
The state contained in the Garbler's attack registers at the end of a 
successful Superposition Generation Procedure (Attack \ref{Yao_att}) is negligibly close to 
$\frac{1}{2}\sum\limits_{x, k^z} (-1)^{k^z}\ket{x^L}\ket{f(x, \hat{y}) \oplus 
k^z}$, where $x^L$ is a logical encoding of $x$ and $x \in \{\widehat{x_0}, 
\widehat{x_1}\}$. Its success probability is lower bounded by $1 - e^{-1}$ for all values of $n_X$ 
and $n_Y$.
\end{theorem}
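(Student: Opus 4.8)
The plan is to follow the joint state of the Garbler's and Evaluator's registers through every step of Attack~\ref{Yao_att} and to show that, \emph{conditioned} on the Evaluator accepting a decryption, it equals the claimed phase-kickback state up to negligible error, while separately lower-bounding the probability of that conditioning event. The conceptual engine is the Deutsch--Jozsa analogy highlighted in the proof-technique discussion: by sending each garbled-table slot as a uniform superposition over \emph{all} computed garbled values, with the prescribed $(-1)^{k^z}$ phase, the Garbler turns the honest decryption into an oracle call on a target prepared in $\ket{-}$, so that $f(x,\hat y)$ is deposited as a relative phase on the branch $x\in\{\widehat{x_0},\widehat{x_1}\}$ rather than into a register that would need to be disentangled.

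Concretely, I would first write the state the adversarial Garbler emits: the entangled key superposition $\tfrac{1}{\sqrt2}\bigl(\ket{K(\widehat{x_0})}+\ket{K(\widehat{x_1})}\bigr)$ together with, in each of the $N:=2^{n_X+n_Y}$ slots, the uniform superposition $\tfrac{1}{\sqrt{2N}}\sum_{j,k^z}(-1)^{k^z}\ket{E_{j}^{k^z}}$. By the Minimal Oracle Representation (Definition~\ref{def_min_or}) the Evaluator's decryption is the unitary $\mathsf{M}_{\Dec}=\mathsf{M}_{\Enc}^{\dagger}$, so for branch $x$ it maps the two values $E_{(x,\hat y)}^{0},E_{(x,\hat y)}^{1}$ to the all-zero-padded plaintexts $\ket{f(x,\hat y)\oplus k^z}\ket{0^{p}}$ and every other value to a plaintext whose padding equals $0^{p}$ only with probability $\approx 2^{-p}$. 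Projecting a slot onto the event ``padding $=0^{p}$'' (the honest success check) therefore keeps exactly the two matching terms and yields $(-1)^{f(x,\hat y)}\ket{-}\ket{0^{p}}$; crucially this outcome is the \emph{same} measured bit in both branches (each branch carries its own matching term inside the identical superposition), so the measurement does not record which-path information about $x$. Reading off the output register $\ket{-}=\tfrac{1}{\sqrt2}\sum_{k^z}(-1)^{k^z}\ket{f(x,\hat y)\oplus k^z}$ on the accepted slot and applying the clean-up map $\ket{K(x)}\mapsto\ket{x^{L}}$ --- well defined because the Garbler chose its own keys and $\widehat{x_0}\neq\widehat{x_1}$ --- then produces precisely $\tfrac{1}{2}\sum_{x,k^z}(-1)^{k^z}\ket{x^{L}}\ket{f(x,\hat y)\oplus k^z}$.

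The probability statement is the cleanest part: each slot is projected onto the correct $0^{p}$-padding subspace with probability $2\cdot\tfrac{1}{2N}=\tfrac{1}{N}$, and with $N$ independent slots the Evaluator accepts at least one with probability $1-\left(1-\tfrac{1}{N}\right)^{N}$, which is $\geq 1-e^{-1}$ for every $N\geq 1$ and hence for all $n_X,n_Y$. The main obstacle, and where I would spend the most care, is guaranteeing that no which-path information about $x$ survives: the danger is a false positive (an incorrect term whose garbage padding is accidentally $0^{p}$) or a rejected slot whose residual garbage differs between the two branches, either of which would partially collapse the $x$-superposition and spoil the purity of the final state. I would argue that, using the unitarity of $\mathsf{M}_{\Enc}$ to uncompute every touched-but-rejected slot and restricting the Evaluator's recorded decision to the single $x$-invariant bit ``some slot succeeded'', the only obstruction is the false-positive event, and then bound its contribution.

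Finally, the ``negligibly close'' qualifier is discharged by two reductions applied on the accepting branch. Replacing the real symmetric scheme by a genuine random permutation costs at most the $\epsilon_{Sym}$ advantage of the real-or-permutation game (Definition~\ref{inf_sec_sym}) through a hybrid over the polynomially many minimal-oracle calls; and under a random permutation the probability that any rejected term or residual garbage has padding $0^{p}$ is negligible in $p=\mathrm{poly}(\eta)$, exactly as in the correctness analysis underlying Theorem~\ref{prot_corr}. Summing these two negligible trace-distance contributions bounds the distance of the post-measurement state from the target, completing the argument.
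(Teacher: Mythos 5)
Your proposal is correct and follows essentially the same route as the paper's own two-part proof: tracking the joint state through the minimal-oracle decryption, using orthogonality of the correctly padded terms against garbage decryptions (with false positives made negligible by the random-permutation reduction underlying Theorem~\ref{prot_corr}), observing that the $0^p$-projection succeeds with the same amplitude in both $x$-branches so the superposition survives, and then cleaning up the returned keys into a logical encoding. Your probability argument (per-slot success $1/N$, $N$ independent slots, $1-(1-1/N)^N \geq 1-e^{-1}$) is exactly the paper's counting argument, so there is nothing substantive to add.
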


%\celine{Ce n'est pas $\hat{y}$ dans la formule ci-dessous?}

We show also in Appendix \ref{app:form_att} that, if $U_f^{\hat{y}}$ is the Standard Oracle applying function $f(\cdot, \hat{y})$ (ie.\ $U_f^{\hat{y}}\ket{x}\ket{k^z} = \ket{x}\ket{f(x, \hat{y}) \oplus k^z}$), then it is possible to generate $U_f^{\hat{y}}\ket{\psi}\ket{\phi}$ for any states $\ket{\psi}$ (over $n_X$ qubits) and $\ket{\phi}$ (over one qubit) with efficient classical descriptions by using the same technique.

%\subsection{Applying the State Generation Procedure}
%\label{action_env} 
 
%While Attack \ref{Yao_att} described the actions of the Adversary during the protocol, w
We can now analyse the actions of the Adversary after the protocol has terminated. 
%This is expressed through Theorem \ref{sup_att_analysis}, which shows that Protocol \ref{Yao} is vulnerable to superposition attacks, % by attacking one of the components (the Adversary against the composed protocol runs the Adversary described in the following proof against one component and acts honestly in the other executions), 
%therefore breaking Definition \ref{sec-def}. 
The Full Attack \ref{Yao_full_att} breaking the security of the Modified Yao Protocol \ref{Yao} (Theorem \ref{sup_att_analysis}, proof in Appendix \ref{app:proof_insec}) can be summarised as follows:
%\begin{compactenum}
\begin{enumerate}
\item The Environment provides the Adversary with the values of the Garbler's input $(\widehat{x_0}, \widehat{x_1})$. The input of the honest Evaluator is $\hat{y}$.
\item The Adversary performs the State Generation Procedure with these inputs.
\item If it has terminated successfully, the Adversary performs an additional clean-up procedure (which only depends on the values of $(\widehat{x_0}, \widehat{x_1})$) to change the logical encoding of $\widehat{x_b}$ into an encoding of $b$. The resulting state is (omitting this logical encoding, with $b_i := f(\widehat{x_i}, \hat{y})$ and up to a global phase):
\[
\frac{1}{\sqrt{2}}\bigl(\ket{0} + (-1)^{b_0 \oplus b_1}\ket{1}\bigr)\otimes\ket{-}
\]
\item The Adversary applies the final steps of the DJ algorithm (after the application of the oracle, see Appendix \ref{app:more_q}) to recover the XOR of the output values for the two inputs: it applies a Hadamard gate to its first register and measures it in the computational basis. 
%See Appendix \ref{app:more_q} for a more precise description of the Deutsch-Jozsa algorithm.
\end{enumerate}

\begin{theorem}[Vulnerability to Superposition Attacks of the Modified Yao Protocol]
\label{sup_att_analysis}
\newcounter{compteur:sup_att_analysis}
\setcounterref{compteur:sup_att_analysis}{sup_att_analysis}
For any non-trivial two-party function $f : \{0, 1\}^{n_X} \times \{0, 
1\}^{n_Y} \rightarrow \{0, 1\}$, let $(\widehat{x_0}, \widehat{x_1})$ be a 
pair of non-trivial values in $\{0, 1\}^{n_X}$. % (they do not determine the 
%value of the output uniquely). 
For all inputs $\hat{y}$ of honest Evaluator in 
Protocol \ref{Yao}, let $\PredE(\hat{y}) = f(\widehat{x_0}, \hat{y}) \oplus 
f(\widehat{x_1}, \hat{y})$. Then there exists a real-world $\BQP$ Adversary 
$\mathcal{A}$ against Protocol \ref{Yao} implementing $f$ such that for any 
$\BQP$ Simulator $\mathcal{S}$, the advantage of the Adversary over the 
Simulator in guessing the value of $\PredE(\hat{y})$ is lower-bounded by 
$\frac{1}{2}(1 - e^{-1})$.
\end{theorem}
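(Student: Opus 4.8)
The plan is to exhibit the concrete real-world Adversary of Full Attack \ref{Yao_full_att} and lower-bound its probability of correctly outputting $\PredE(\hat y)$, then upper-bound the same probability for \emph{any} $\BQP$ Simulator; the gap is the claimed advantage. For the real Adversary I would first invoke Theorem \ref{correct_att}: with probability at least $1 - e^{-1}$ the State Generation Procedure (Attack \ref{Yao_att}) succeeds, leaving the attack registers negligibly close to $\frac{1}{2}\sum_{x, k^z}(-1)^{k^z}\ket{x^L}\ket{f(x,\hat y)\oplus k^z}$. The clean-up that relabels the logical encoding of $\widehat{x_b}$ to $b$ turns this, up to a global phase, into $\frac{1}{\sqrt 2}\bigl(\ket 0 + (-1)^{b_0\oplus b_1}\ket 1\bigr)\otimes\ket{-}$ with $b_i = f(\widehat{x_i},\hat y)$. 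The final Hadamard-and-measure step of the Deutsch--Jozsa algorithm then reads off $b_0 \oplus b_1 = \PredE(\hat y)$ deterministically, since $\Ha\tfrac{1}{\sqrt 2}(\ket 0+(-1)^{c}\ket 1)=\ket{c}$. On the complementary event (failure, probability at most $e^{-1}$) the Adversary simply guesses, so its overall success probability is at least $(1-e^{-1}) + \tfrac12 e^{-1} = 1 - \tfrac12 e^{-1}$, up to the negligible fidelity error from Theorem \ref{correct_att}.

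Next I would bound the Simulator. The crucial structural feature is that the Ideal Functionality measures its inputs in the computational basis and the Simulator has only single-query access (Definition \ref{sec-def}). Hence, whatever (possibly superposed) state the Simulator submits, the query collapses to a single classical input $x^\ast$ and the Simulator learns at most the one value $f(x^\ast,\hat y)$; nothing else in the ideal execution depends on $\hat y$. Since $\PredE(\hat y) = f(\widehat{x_0},\hat y)\oplus f(\widehat{x_1},\hat y)$ is a parity of the function at two points, one evaluation is information-theoretically insufficient to determine it. To make this rigorous I would fix an Environment that draws $\hat y$ from a distribution under which $\PredE(\hat y)$ is uniform and remains independent of any single evaluation $f(x^\ast,\hat y)$; the non-triviality hypotheses on $f$ and on the pair $(\widehat{x_0},\widehat{x_1})$ (they do not uniquely determine the output) guarantee such a distribution exists. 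Against this Environment the Simulator's guess is independent of $\PredE(\hat y)$, so its success probability is at most $\tfrac12$.

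Combining the two bounds yields an advantage of at least $\bigl(1-\tfrac12 e^{-1}\bigr) - \tfrac12 = \tfrac12(1-e^{-1})$, as claimed. Since the Adversary merely runs the honest Garbler's strategy coherently, applies two efficient clean-up permutations and one Hadamard, and performs a computational-basis measurement, it is manifestly $\BQP$, so the separation holds in the stated class.

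The step I expect to be the main obstacle is the Simulator upper bound: I must argue carefully that no choice of the single (possibly superposed, hence collapsed) query together with arbitrary quantum post-processing can extract the parity of two function values from one measured evaluation, and in particular rule out that the Simulator exploits a cleverly chosen third query point $x^\ast \notin \{\widehat{x_0},\widehat{x_1}\}$ whose value correlates with $\PredE(\hat y)$. Pinning down precisely the non-triviality condition on $(\widehat{x_0},\widehat{x_1})$ that forces $\PredE$ to be unpredictable from any single point, and verifying it coincides with the condition under which Theorem \ref{correct_att} produces a genuine superposition, is where the real care is needed; by contrast, the real-Adversary analysis is an essentially mechanical composition of Theorem \ref{correct_att} with the textbook Deutsch--Jozsa readout.
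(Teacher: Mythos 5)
Your proposal is, in substance, the paper's own proof: the same exhibited adversary (Attack \ref{Yao_full_att}), the same appeal to Theorem \ref{correct_att} for the $1-e^{-1}$ state-generation bound, the same Deutsch--Jozsa readout recovering $b_0\oplus b_1$ with certainty on success, guessing on failure, and the same final accounting --- the paper writes the advantage as $p_{\mathit{Gen}}\,(1-p_{\mathit{Guess}})$ with $p_{\mathit{Guess}}=\tfrac12$, which is numerically identical to your $\bigl(1-\tfrac12 e^{-1}\bigr)-\tfrac12$.

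The one point of divergence is exactly the step you flagged as the main obstacle, and you should know the paper does not overcome it. The paper's simulator bound is a matching-pennies argument: the Environment and Simulator are treated as playing a simultaneous bit-guessing game whose equilibrium value is $\tfrac12$, under the bare assertion that the Simulator's only information is one of $f(\widehat{x_0},\hat y)$ or $f(\widehat{x_1},\hat y)$ and that this value is uncorrelated with $\PredE(\hat y)$. Neither the third-query-point issue you raise, nor even correlation through the two attack points themselves, is addressed there. Your worry is substantive: non-triviality alone does \emph{not} guarantee the distribution you want exists. Take $n_X=n_Y=1$ and $f(x,y)=\bar x\wedge y$ with the pair $(\widehat{x_0},\widehat{x_1})=(0,1)$; this pair is non-trivial in the paper's sense ($f(0,0)=f(1,0)$ and $f(0,1)\neq f(1,1)$), yet $\PredE(\hat y)=f(0,\hat y)\oplus f(1,\hat y)=\hat y=f(0,\hat y)$, so a Simulator that submits $\widehat{x_0}=0$ to the Ideal Functionality learns the parity exactly, under \emph{every} distribution on $\hat y$, and the claimed advantage bound fails for this $f$. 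So both your sketch and the paper's proof need an additional hypothesis on $f$ (satisfied, e.g., by the OT function of Appendix \ref{sec:OT}, where any single evaluation is independent of the parity under the uniform distribution) or a weakening of the ``for any non-trivial $f$'' quantifier. In short: your proof matches the paper's, and the step you correctly identified as needing real care is a genuine gap --- one the paper shares rather than resolves.
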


If the ideal and real executions were indistinguishable according to 
Definition~\ref{sec-def}, such a feat would be impossible for the Adversary 
since the Simulator can at most access one value of the output through the 
Ideal Functionality.
%
%\celine{En fait c'est bizarre d'avoir le paragraphe qui suit intercale ici. Tu 
%ne veux pas mettre quelques lignes "extension" en 5.3?}
%\luka{Je l'ai mis au debut de la section 5 finalement, quand on annonce le plan de la section.}

Finally, the following lemma captures the fact that the 
previously described Adversary does not break the Honest-but-Curious security of the 
Modified Yao Protocol if it does not have superposition access (a 
fully-malicious one can trivially break it), thereby demonstrating the 
separation between Adversaries with and without superposition access 
(see Appendix~\ref{app:form_att} for the proof).
%\celineb{Je ne comprends pas le choix de l'endroit o\`u il est prouv\'e lui ?}

\begin{lemma}[Adversarial Behaviour Analysis]
\label{lem:adv_beh}
\newcounter{compteur:lem:adv_beh}
\setcounterref{compteur:lem:adv_beh}{lem:adv_beh}
The \emph{$\QPPT$-reduced machine} (Definition \ref{qppt_m}) corresponding to 
the $\BQP$ Adversary described in Attack~\ref{Yao_full_att} is an \emph{Extended 
Honest-but-Curious Adversary} (Definition \ref{ext-hbc}).
\end{lemma}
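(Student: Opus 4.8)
The plan is to produce an Honest-but-Curious adversary $\mathcal{A}'$ whose guaranteed simulator also accounts for the $\QPPT$-reduced machine of Attack~\ref{Yao_full_att}, once that simulator is granted the right to abort whenever the honest Evaluator does. The premise of Definition~\ref{ext-hbc} is met, since Theorem~\ref{sec_yao} gives (perfect) security of Protocol~\ref{Yao} against any $\QPPT$ adversarial Garbler, and Honest-but-Curious Garblers form a subclass. First I would make explicit what the reduced machine (Definition~\ref{qppt_m}) transmits: it is the $\BQP$ attacker with every outgoing register measured in the computational basis before transmission. Tracing through the State Generation Procedure (Attack~\ref{Yao_att}), the two superpositions it prepares collapse --- the superposition of key-sets for $\widehat{x_0}$ and $\widehat{x_1}$ collapses to the keys for a single uniformly random input $\widehat{x_b}$, $b \in \{0,1\}$, while the uniform superposition over computed garbled values collapses entry-by-entry to a classical table in which the $(-1)^{k^z}$ phases are washed out by the measurement. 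All other actions are honest already. Thus during the interaction the reduced machine behaves like a Garbler running the honest garbling procedure on a random input, the sole deviation being that the transmitted table is a measured sample rather than the canonical honestly-ordered one.

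Second, I would take $\mathcal{A}'$ to be the adversary that samples $b \in \{0,1\}$ uniformly, runs the honest Garbler of Protocol~\ref{Yao} on input $\widehat{x_b}$ with honestly-chosen keys, output mask $k^z$ and permutation, completes the protocol faithfully, and only then applies the identical quantum post-processing of Attack~\ref{Yao_full_att} (the clean-up maps followed by the Hadamard-and-measure closing step of the DJ algorithm) to the registers returned by the Evaluator. Because $\mathcal{A}'$ deviates from the protocol only after the honest interaction has finished, it is Honest-but-Curious, so Theorem~\ref{sec_yao} furnishes a $\QPPT$ simulator $\mathcal{S}'$ for it.

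Third, and this is the crux, I would show that $\mathcal{S}'$ discharges Definition~\ref{sec-def} for the reduced $\mathcal{A}$ under the abort allowance. Both adversaries use the same random input $\widehat{x_b}$, so they can differ only through the garbled table and, downstream, through whether the honest Evaluator's padding check succeeds. Conditioned on the Evaluator not aborting, I would argue that the measured table sent by $\mathcal{A}$ is distributed identically (up to negligible terms) to the honest table of $\mathcal{A}'$: the single entry the Evaluator successfully decrypts yields $f(\widehat{x_b}, \hat{y}) \oplus k^z$ for the realised mask exactly as in the honest run, and the real-or-permutation security of $(\Enc, \Dec)$ makes the remaining entries indistinguishable from honestly-generated ciphertexts. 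On this branch the two views coincide and $\mathcal{S}'$ already reproduces $\mathcal{A}$'s view. On the complementary branch the Evaluator aborts; here Definition~\ref{ext-hbc} permits $\mathcal{S}'$ to output $\Abort$, matching $\mathcal{A}$. By Theorem~\ref{correct_att} the abort branch carries bounded probability, but its exact weight is irrelevant since it is entirely absorbed by the definitional allowance.

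The main obstacle is precisely this conditional-distribution matching: establishing that, conditioned on non-abort, the measurement statistics of the collapsed garbled-table superposition coincide with the distribution of an honest table generated with a uniformly random output mask. Carrying this out rigorously means combining the uniform structure of the attack's superposition (as quantified in Theorem~\ref{correct_att}) with the real-or-permutation indistinguishability of the encryption scheme to exclude any efficiently-detectable discrepancy between the two tables on the success branch.
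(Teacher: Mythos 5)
Your high-level skeleton (reuse the Honest-but-Curious Simulator guaranteed by Theorem~\ref{sec_yao}, invoke the $\Abort$ allowance of Definition~\ref{ext-hbc}) is the same as the paper's, but the step you yourself identify as the crux is both factually wrong and aimed at the wrong object. First, your description of the $\QPPT$-reduced machine is inaccurate. In Protocol~\ref{Yao} the Garbler sends $2^{n_X+n_Y}$ \emph{separate} key registers and $2^{n_X+n_Y}$ \emph{separate} garbled-table registers, and Attack~\ref{Yao_att} places each of them in its own unentangled superposition; measuring (Definition~\ref{qppt_m}) therefore collapses each register \emph{independently}. Every table entry becomes an independent uniform sample, \emph{with replacement}, from all $2^{n_X+n_Y+1}$ values $E_{x,y}^{k^z}$ (both masks mixed together), and every key copy is independently a key-set for $\widehat{x_0}$ or $\widehat{x_1}$ --- not one coherent choice $\widehat{x_b}$ used across all copies, as you assume. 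Consequently your central claim --- that conditioned on non-abort the measured table is distributed identically, up to negligible terms, to the honest table of $\mathcal{A}'$ --- is false: the measured table contains duplicate ciphertext strings with high probability, mixes entries for $k^z=0$ and $k^z=1$, and pairs them with inconsistent key copies, whereas an honest table consists of $2^{n_X+n_Y}$ distinct entries under a single mask. No real-or-permutation reduction can hide an identical-string collision, since that is a visible property of the strings themselves. This is exactly why the paper does not try to make the reduced adversary look honest: it concedes the adversary is \emph{not} strictly Honest-but-Curious and works with this messy distribution directly.

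Second, the distributional matching you attempt is not the quantity that Definition~\ref{sec-def} requires. The table is produced by $\mathcal{A}$ in \emph{both} worlds (handed to the honest Evaluator in the real one, to the Simulator in the ideal one), and the distinguishing is done by $\mathcal{A}$ on its own returned view, not by a global distinguisher holding the table. What must be matched is only (i) the abort probability --- which is identical in the two worlds because each iid table entry decrypts under the Simulator's fake keys $k^E_{\tilde{y}}$ with exactly the same probability $2^{-(n_X+n_Y)}$ as under the real keys $k^E_{\hat{y}}$ --- and (ii) conditioned on non-abort, the distribution of the returned registers, which the Theorem~\ref{sec_yao} Simulator reproduces by extracting $x\in\{\widehat{x_0},\widehat{x_1}\}$ and $k^z$ from its own successful decryption, making its single Ideal Functionality query to learn $f(x,\hat{y})$, and returning $f(x,\hat{y})\oplus k^z$ together with the corresponding keys. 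Moreover, this matching is only sound because the reduced machine keeps \emph{no record} of which values it sent (the sent registers are unentangled from its internal state and not copied): the real and ideal non-abort events are different events over the table randomness, and an adversary remembering which garbled value sat at which position could test the returned pair for consistency with its records and the environment-supplied $\hat{y}$, distinguishing the two worlds with constant advantage. Your proposal never establishes or uses this no-memory property, so even setting aside the false table claim, the simulation step you outline does not go through as written.
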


\subsection{Superposition-Resistant Yao Protocol}
\label{subsec:path}

%We can now analyse precisely the crucial points allowing the superposition 
%attack to go through. 
We can now analyse the crucial points where the security 
breaks down and propose counter-measures. 
We notice that all actions of the Adversary only act on 
the registers that contain its own keys (recall that the Evaluator 
sends back the Garbler's keys after a successful decryption) and have no effect on the output 
register, which stays in the $\ket{-}$ state the whole time. It is thus 
unentangled from the rest of the state and the attack on the protocol can therefore also 
be performed if the Garbler has no output. As the security in this case 
still holds for $\QPPT$ Adversaries via input-indistinguishability,
it means that this security property does not carry over from $\QPPT$ to $\BQP$ either.

%However, it shows precisely the point where the protocol's security 
%fails.
Therefore, as counter-intuitive as it may seem, the precise point that makes the attack possible 
is a seemingly innocuous message consisting of 
information that the Adversary should (classically) already have, 
along with a partial measurement on the part of the honest player (which is even 
stranger considering that it is usually thought that the easiest way to 
prevent superposition attack is to measure the state).

Not sending back this register to the 
Adversary (as in the Original Yao Protocol) makes the protocol 
structurally similar to the One-Time-Pad Protocol~\ref{otp}: one 
party sends everything to the other, who then simply applies local 
operations. The proof for the One-Time-Pad works by showing that there is a 
violation of the no-signalling condition if the Adversary is able to guess whether it is in the real or ideal situation. 
This technique can be reused if the Evaluator does not give 
away the result of the measurement on its state (by hiding the success or failure of the garbled table decryption\footnote{This contradicts the remark in Appendix \ref{app:IF} after Ideal Functionality \ref{IdealF2PC} since the proof works if there is no future communication between the two players.}). This Superposition-Resistant Yao Protocol \ref{Yao_sup} and proof of Theorem \ref{Yao_sup_sec}\footnote{As noted in Section \ref{sec:security:model}, superposition-resistance implies classical-style security.} are described in Appendix \ref{app:sup_sec_yao}.

\begin{theorem}[$\BQP$-Security of Superposition-Resistant Yao Protocol]
\label{Yao_sup_sec}
\label{Yao-sup-sec}
\newcounter{compteur:Yao-sup-sec}
\setcounterref{compteur:Yao-sup-sec}{Yao-sup-sec}
%Consider an hybrid execution where the Oblivious Transfer is handled by a 
%classical trusted third party. 
%Let $(\Enc, \Dec)$ be a correct symmetric 
%encryption scheme. 
The Superposition-Resistant Yao Protocol \ref{Yao_sup} is perfectly-secure against a 
$\BQP$ adversarial Garbler according to Definition \ref{sec-def} in an OT-hybrid execution.
%, where $n_Y$ is the size of the Evaluator's input.
\end{theorem}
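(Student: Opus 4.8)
The plan is to exploit the structural similarity with the One-Time-Pad Protocol noted in Section~\ref{subsec:path}: in the Superposition-Resistant Yao Protocol~\ref{Yao_sup}, the adversarial Garbler sends its entire message (the garbled table together with all copies of its own keys) to the honest Evaluator $\Ptwo$ and thereafter receives nothing in return, since $\Ptwo$ neither sends back the Garbler's keys nor reveals whether its decryption succeeded or it aborted. In the OT-hybrid model the Garbler plays the Sender, and the classical (non-superposition) OT Ideal Functionality returns no output to the Sender. Consequently, every register that ever reaches the adversary's side $\Regone$ is produced before any action of $\Ptwo$, and no classical or quantum channel carries information from $\Ptwo$ back to $\Pone^*$ at any later point.

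First I would construct the Simulator $S_{\Pone^*}$ explicitly. It runs the real-world adversary $\mathcal{A}$ internally, plays the trusted third party for each OT instance on the Sender side (absorbing the key-pairs fed in by $\mathcal{A}$ and returning no output, exactly as the real OT does), and receives but does not answer the final message that $\mathcal{A}$ addresses to the Evaluator. As permitted by the remark following Definition~\ref{sec-def}, the Simulator may forgo its single call to the Ideal Functionality entirely, since nothing about the ideal output ever needs to be fed back to $\mathcal{A}$. It outputs whatever $\mathcal{A}$ outputs.

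Next I would show that the reduced state of the adversary's registers $\Regone$ is identical in the real and ideal executions. The two executions coincide by construction up to and including the moment $\mathcal{A}$ emits its last message: the inputs supplied by $\mathcal{Z}$, the key and garbled-table generation, and the Sender-side OT interaction are reproduced verbatim by $S_{\Pone^*}$. The only subsequent difference is that in the real protocol $\Ptwo$ applies the decryption minimal oracle, measures its padding register to test for success, and recovers the output, all acting solely on $\Regtwo$, whereas in the ideal world these steps are simply absent. By the no-signalling principle, the combined effect on $\Regone$ of any quantum operation $\Ptwo$ performs on $\Regtwo$ --- including the padding measurement, whose outcome is never communicated and must therefore be averaged over, giving a trace-preserving map --- commutes with tracing out $\Regtwo$ and leaves the reduced density operator on $\Regone$ invariant. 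Hence $v_{\mathcal{A}}(\Ptwo,\rho_{\mathcal{A}})$ and $v_{\mathcal{A}}(S_{\Pone^*},\rho_{\mathcal{A}})$ are the same state, the two probabilities in Definition~\ref{sec-def} are equal, and the advantage is exactly $0$.

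The delicate point --- and the reason this theorem holds while Theorem~\ref{sup_att_analysis} exhibited an attack on Protocol~\ref{Yao} --- is establishing rigorously that Protocol~\ref{Yao_sup} leaks genuinely nothing back to the Garbler. I would therefore verify carefully, against the formal description of Protocol~\ref{Yao_sup}, each potential backward channel: that the returned-keys message of the Modified protocol has been removed, that the Evaluator's success/abort decision stays private (so that even the classical bit distinguishing $\Abort$ from completion never reaches $\Pone^*$), and that the OT-hybrid indeed grants no Sender-side output. Once the absence of every such channel is confirmed, the no-signalling argument applies without any appeal to the computational security of the encryption scheme, yielding perfect rather than merely computational security.
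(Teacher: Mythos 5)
Your proposal is correct and follows essentially the same route as the paper's own proof: the paper likewise observes that after the classical OT-hybrid interaction the protocol reduces to the Garbler sending a single state on which the Evaluator acts only locally (never revealing success or failure), and concludes via the no-signalling principle that the adversary's advantage is exactly $0$, just as in the One-Time-Pad analysis. Your write-up merely fleshes out what the paper leaves as a sketch --- the explicit Simulator that skips the Ideal Functionality call, and the observation that the unrevealed padding measurement averages to a trace-preserving map on $\Regtwo$ leaving the reduced state on $\Regone$ untouched.
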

%
%\begin{proof}
%
%The Garbler cannot break the security of the OT ideal execution, which furthermore is classical. The rest of the protocol can be summarised by the Garbler sending one quantum state and then the Evaluator performing a local operation on it and stopping. This is exactly the same scenario as in the One-Time Pad protocol and the same analysis applies in this case: the Garbler recovering any information from this operation would violate the no-signalling principle. The resulting security bound is therefore $0$.
%
%\qed
%\end{proof}
%
%The proof above does not translate into a proof for an actual instance of the Yao Protocol since our model does not hold under sequential composability, but it gives a hint as to which steps are crucial for securing it. Another path for obtaining security could be to replace the encryption scheme with one for which there is no Minimal Oracle Representation. We leave this case as an open question.

%\section{Conclusion}
%\label{sec:conclusion}
%\input{tex_files/99_conclusion.tex}

\section{Conclusion}
\label{sec:concl}
Our security model and the attack analysis performed in 
this paper lie completely outside of the existing models of security against superposition attacks. They either consider the computational security of basic primitives or, for more complex protocols with multiple interactions between distrustful parties, the protocols are all considered to be statistically-secure (and are therefore essentially extensions of \cite{commit2}). This leads to many simplifications which have no equivalent in the computational setting.
%Comparatively, since the same simplifications cannot be performed in our setting, the same attacks do not translate directly.
We develop a novel security framework, based on the simple premise that to be secure from superposition attacks means emulating a purely classical functionality. We show that, given slight modifications that preserves classical security, it is possible to show superposition attacks on computationally-secure protocols. The intuition gained from the attack allows us to build a computationally superposition-resistant protocol for Two-Party Secure Function Evaluation, a task never achieved before.

Our results demonstrate once again the counter-intuitive nature of 
quantum effects, regarding not only the 
vulnerability of real-world protocols to superposition attacks (most would 
require heavy modifications for known attacks to work), 
but also attack vectors and the optimal ways to counter them (as partial 
measurements can even lead to attacks).
% \celine{From abstract:}
%
%    In this work, we propose the first computational security model 
%considering superposition attacks for multiparty protocols. We show that our 
%new security model is satisfiable by proving the security of the well-known 
%One-Time-Pad protocol and show an attack on a variant of the equally 
%reputable Yao Protocol for Secure Two-Party Computations. The post-mortem of 
%this attack reveals the precise points of failure, yielding highly 
%counter-intuitive results: The attack vector consists of a (classically) 
%seemingly inoffensive message and a measurement performed by the honest 
%player. This example shows that adding extra classical communication, which 
%is harmless for classical security, can make the protocol become subject to 
%superposition attacks.
%    Our results show that intuitions can be misleading when reasoning about 
%cryptographic protocols in a quantum world, and that there is no evident 
%answer to provide for either the vulnerabilities of classical protocols to 
%superposition attacks or the adapted countermeasures.

\subsection*{Acknowledgments}%%%uncomment for eprint
\label{subsec:ack}%%%uncomment for eprint
This work was supported in part by the French ANR project CryptiQ (ANR-18-CE39-0015). 
We acknowledge support of the European Union’s Horizon 2020 Research and 
Innovation Program under Grant Agreement No. 820445 (QIA).
We would like to thank Michele Minelli, Marc Kaplan and Ehsan Ebrahimi 
for fruitful discussions.
%%%uncomment for eprint

%\newpage%%%comment for eprint

\bibliographystyle{alpha}
\bibliography{biblio,abbrev3,crypto}

\appendix

\section{Additional Quantum Notations}
\label{app:more_q}
We give here a brief overview of quantum systems and a few basic operations and refer to \cite{nielsenchuang} for a more detailed presentation.

Any pure quantum state is represented by a vector $\ket{\psi}$ in a given 
Hilbert space $\mathcal{H}$, which in the simplest case is $\mathbb{C}^2$ for qubits (which will always be the case in this paper). For $n$ qubits, the joint system is given by $\mathbb{C}^{2^n} = \mathbb{C}^2 \otimes \ldots \otimes \mathbb{C}^2$ for $n$ subspaces, where $\otimes$ designates the tensor product of Hilbert spaces. We will use the term quantum register in the same sense as a classical memory register in a classical computer (as a way to reference specific qubits or subsystems). For $n$ qubits, we call computational basis the family of classical bit-string states $\mathcal{B}_{C} = \qty{\ket{x} \mid x \in \{0,1\}^n}$. Let $\1_{\mathcal{A}}$ be the identity operation on quantum register $\mathcal{A}$. We write $\dagger$ for the conjugate transpose operation and $\bra{\phi} = \ket{\phi}^{\dagger}$. This in turn gives us the inner-product $\braket{\phi}{\psi}$ and projector $\dyad{\psi}$.

More generally, if the quantum state is in pure state $\ket{\phi_i}$ with probability $p_i$ then the system is described as the density matrix $\rho = \sum\limits_i p_i \dyad{\phi_i}$ (also called mixed state). Let $D(\mathcal{Q})$ be the set of all possible quantum states in a given quantum register $\mathcal{A}$: it is the set of all Hermitian mixed states with trace equal to $1$ and positive eigenvalues. In general, the input to a protocol is a mixed state $\rho_{in} \in D(\Regone\otimes\Regtwo\otimes\mathcal{W})$, where $\mathcal{W}$ is an auxiliary register (the inputs are potentially entangled to this reference register).

Unitaries acting on register $\mathcal{Q}$ are linear operations $U$ such that $U^{\dagger} U = \Id_{\mathcal{Q}}$. On the other hand, a measurement on a quantum register $\mathcal{Q}$ is represented in the simplest case (which will be sufficient here) by a complete set of orthogonal projectors $\qty{P_m}$ satisfying $\sum\limits_m P_m = \Id_{\mathcal{Q}}$ and $P_m P_m' = \delta_{m, m'} P_m$, where $\delta_{m, m'}$ is Kronecker's delta. Then the probability of obtaining output $m$ by the measurement defined above on state $\ket{\psi}$ is given by $p(m) = \ev{P_m}{\psi}$, the post-measurement state is then $\frac{P_m \ket{\psi}}{\sqrt{p(m)}}$.

Let $L(\mathcal{A})$ be the set of linear mappings from $\mathcal{A}$ to 
itself. If $\mathcal{E} : L(\mathcal{A}) \rightarrow L(\mathcal{B})$ is a completely positive and trace non-decreasing superoperator, it is called \emph{quantum operation} or CPTP-map. It can always be decomposed into unitaries followed by measurements in the computational basis. For any quantum register $\mathcal{Q}$ and any state $\rho_{Q}$ is it always possible to define, given another sufficiently large quantum system $\mathcal{R}$, a pure state $\ket{\phi_{RQ}}$ such that looking at the restriction of the system to register $\mathcal{Q}$ (by tracing out $\mathcal{R}$) gives $\rho_Q$. This technique is called purification, the register $\mathcal{R}$ is called the reference or ancillary register, and allows to represent any CPTP-map as a unitary on a larger system.

We can now give some standard quantum operations used throughout the paper. The Pauli $\X$ operator is defined as $\X = \begin{pmatrix} 0 &&& 1 \\ 1 &&& 0 \end{pmatrix}$ (corresponding to a bit-flip classically), while the $\CNOT$ gate (with the first qubit being the control) is defined through $\CNOT\ket{0}\ket{\phi} = \ket{0}\ket{\phi}$ and $\CNOT\ket{1}\ket{\phi} = \ket{1}\X\ket{\phi}$ for any state $\ket{\phi}$. The Pauli $\Z$ operator is defined as $\Z = \begin{pmatrix} 1 &&& 0 \\ 0 &&& -1 \end{pmatrix}$. A logical Hadamard gate $\Ha^L$ is defined by $\Ha_L\ket{0}^{\otimes L} = \ket{+_L} = \frac{1}{\sqrt{2}}\Bigl(\ket{0}^{\otimes L} + \ket{1}^{\otimes L}\Bigr)$, $\Ha_L\ket{1}^{\otimes L} = \ket{-_L} = \frac{1}{\sqrt{2}}\Bigl(\ket{0}^{\otimes L} - \ket{1}^{\otimes L}\Bigr)$ ($\Ha_L$ acts as identity on the remaining basis states). The state $\ket{+_L}$ is more commonly referred to as the $\ket{\GHZ_L}$ state, and $\ket{-_L} = \Z_1\ket{\GHZ_L}$.

It is possible to represent any classical operation using a quantum implementation of the reversible classical Toffoli gate computing the function $T(a, b, c) = (a \cdot b) \oplus c$ where $(\oplus, \cdot)$ are defined in $\mathbb{Z}_2$. This can be defined as a unitary on three qubits (any reversible classical gate is simply a permutation of the computational basis states) and is universal for classical computations. Any binary function $f: \{0, 1\}^n \rightarrow \{0, 1\}^m$ can therefore be implemented as a unitary $U_f$ defined on computational basis states $\ket{x}\ket{y}$ (with $x \in \{0, 1\}^n$ and $y \in \{0, 1\}^m$) as $U_f\ket{x}\ket{y} = \ket{x}\ket{y \oplus f(x)}$ (called standard oracle of $f$).

Since our attack resembles in spirit the Deutsch-Jozsa algorithm, we recall here the principle. The point of this algorithm is to solve the following promise problem: given a function $f$ outputting a single bit, determine whether it is constant (the output bit is the same for all inputs) or balanced (half of the inputs output $0$ and the other half output $1$). The DJ algorithm solves this problem by using a single call to the standard oracle implementing the function $f$ (with probability $1$). It works in the following way (for a single bit of input):

%\begin{compactenum}
\begin{enumerate}
\item The player prepares two qubits in the $\ket{0}\ket{1}$.
\item It applies a Hadamard gate to the two qubits.
\item It applies $U_f$ with the second qubit receiving the output.
\item It applies a Hadamard to the first qubit.
\item It measures the first qubit in the computational basis and outputs the result.
\end{enumerate}
%\end{compactenum}

This is represented as the following circuit:

\[
\Qcircuit @C=1em @R=.7em @!R{
\lstick{\ket{0}} & \gate{\Ha} & \multigate{1}{U_f} & \gate{\Ha} & \meter \\
\lstick{\ket{1}} & \gate{\Ha} & \ghost{U_f} & \qw & \qw
}
\]

A simple calculation gives that the state right before the application of the last Hadamard on the first qubit is (with $b_i = f(i)$ for $i \in \{0, 1\}$ in the case of DJ for one input qubit):

\[
\frac{1}{\sqrt{2}}\bigl(\ket{0} + (-1)^{b_0 \oplus b_1}\ket{1}\bigr)\otimes\ket{-}
\]

\section{Security Model: Auxiliary Definitions and Results}
\label{app:OTP}
\subsection{Quantum Machines and Complexity Classes}
\label{subsec:qppt}

We first give the formal definition of the complexity class Bounded-Error Quantum Polynomial Time \cite{Yao93} ($\BQP$). This is considered to be the class of languages efficiently accepted by quantum machines (ie.\ it is the quantum equivalent of $\BPP$), or equivalently the class of decision problems efficiently solvable by quantum computers. By extension we use it to describe any efficient quantum computation, regardless of whether it solves a problem in the complexity-theoretic sense.

\begin{definition}[Languages in $\BQP$]
\label{bqp_m}
We say that a language $L$ is in $\BQP$ if and only if there exists a 
polynomial-time uniform family of quantum circuits \footnote{There exists a 
polynomial-time deterministic Turing machine taking as input $1^n$ for $n \in 
\mathbb{N}$ and outputting a classical description of $Q_n$.} $\qty{Q_n \mid n 
\in \mathbb{N}}$ such that:
%\begin{compactitem}
\begin{itemize}
\item $Q_n$ takes $n$ qubits as input and outputs a single classical bit.
\item For all $x \in L$, $\mathbb{P}\Bigl[b = 1 \mid b \leftarrow Q_{\#x}(x) \Bigr] \geq \frac{2}{3}$.
\item For all $x \notin L$, $\mathbb{P}\Bigl[b = 0 \mid b \leftarrow Q_{\#x}(x) \Bigr] \geq \frac{2}{3}$.
\end{itemize}
%\end{compactitem}
\end{definition}

We now give the formal definition of a Quantum-Strong Probabilistic Polynomial-Time Machine ($\QPPT$), taken from 
\cite{Unr10}.

\begin{definition}[Quantum-Strong PPT Machine]
\label{qppt_m}
A classical machine $M$ is said to be \emph{Quantum-Strong Probabilistic
Polynomial-Time} (or $\QPPT$) if there exists a quantum polynomial-time (or
$\BQP$) machine $\tilde{M}$ such that for any classical network $N$ (as
defined in the Universal Composability Framework), $N \cup M$ and $N \cup
\tilde{M}$ are perfectly indistinguishable. For a given $\BQP$ machine $M$, we
define the \emph{$\QPPT$-reduced machine} $M_{\QPPT}$ as the machine applying
each round the same unitaries as machine $M$ to its internal registers but
always measuring its classical communication register in the computational
basis after the application of said unitary.
\end{definition}

\subsection{Proof of Lemma \ref{ind-to-sec}}
\label{subsec:proof_ind}

We now give the proof of Lemma~\ref{ind-to-sec} stated on 
page~\pageref{ind-to-sec}.

\setcounter{tempresult}{\value{lemma}}
\setcounter{lemma}{\value{compteur:ind-to-sec}-1}
\begin{lemma}[Input-Indistinguishability to Security]
Let $f \in \mathfrak{F}$ be the function to be computed by protocol $\Pi$
between parties $\Pone$ and $\Ptwo$, where $\mathfrak{F}$ is the set of
functions taking as input $(x, y) \in \{0, 1\}^{n_Y} \times \{0, 1\}^{n_X}$
and outputting $z \in \{0, 1\}^{n_Z}$ to $\Ptwo$ (and no output to $\Pone$). If
the protocol is input-indistinguishable for adversarial $\Pone^*$ in class
$\mathfrak{X}$ (Definition \ref{inp-ind}) then it is secure against
adversarial $\Pone^*$ in class $\mathfrak{X}$ (Definition \ref{sec-def}) with
identical bounds.
\end{lemma}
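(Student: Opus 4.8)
The plan is to build a Simulator that does nothing more than emulate the honest party $\Ptwo$ on an arbitrary fixed input, and then to transfer the distinguishing advantage of Definition~\ref{inp-ind} directly into the bound of Definition~\ref{sec-def}, letting the Adversary $\mathcal{A}$ itself serve as the Distinguisher. The feature that makes this possible is the hypothesis that $\Pone$ receives \emph{no} output: the Simulator never has to produce an output value, so it need not invoke the Ideal Functionality at all (which the model explicitly permits), and in particular it does not need to know the honest input $y$.

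First I would fix an arbitrary Adversary $\mathcal{A}$ in class $\mathfrak{X}$ together with an arbitrary environment $\mathcal{Z}$ producing the honest input $y$ and auxiliary state $\rho_{\mathcal{A}}$, and define $S_{\Pone^*}$ to run $\mathcal{A}$ internally while playing towards it exactly the honest strategy of $\Ptwo$ on a fixed dummy input $y^* = 0^{n_Y}$. Since the honest strategy is efficient and $\mathcal{A}$ lies in $\mathfrak{X}$, the resulting $S_{\Pone^*}$ stays in class $\mathfrak{X}$, as required. I would then record the two key identifications: the ideal view $v_{\mathcal{A}}(S_{\Pone^*}, \rho_{\mathcal{A}})$ is by construction equal to $v_{\mathcal{A}}(\Ptwo(y^*), \rho_{\mathcal{A}})$, whereas the real view $v_{\mathcal{A}}(\Ptwo, \rho_{\mathcal{A}})$ equals $v_{\mathcal{A}}(\Ptwo(y), \rho_{\mathcal{A}})$ for the actual input $y$ chosen by $\mathcal{Z}$.

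With these identifications in hand, the final step is to invoke Definition~\ref{inp-ind} with $y_1 = y$, $y_2 = y^*$, and the Distinguisher $\mathcal{D}$ instantiated as the final decision stage of $\mathcal{A}$ that outputs the guess bit. Since a class-$\mathfrak{X}$ machine is an admissible quantum Distinguisher in both the $\QPPT$ and $\BQP$ cases, this immediately produces the inequality of Definition~\ref{sec-def} with the very same $\epsilon(\eta)$, giving the claimed identical bounds.

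The main obstacle I anticipate is bookkeeping around the auxiliary input $\rho_{\mathcal{A}}$ rather than anything mathematically deep: I must ensure that input-indistinguishability applies to the \emph{particular} $\rho_{\mathcal{A}}$ handed down by $\mathcal{Z}$, which I would justify by reading Definition~\ref{inp-ind} as holding for all admissible auxiliary states (equivalently, by absorbing $\rho_{\mathcal{A}}$ into the Distinguisher). A secondary point worth stating explicitly is that in this model the guessing bit is produced by the Adversary itself and not by a separate global distinguisher, so identifying $\mathcal{A}$'s decision stage with $\mathcal{D}$ is legitimate; once this is spelled out, the reduction is exact and loses nothing.
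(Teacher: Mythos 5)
Your proposal is correct and follows essentially the same route as the paper's proof: the Simulator simply runs the honest party $\Ptwo$ on a dummy input (the paper uses a random $\tilde{y}$, you use a fixed $y^* = 0^{n_Y}$, which is immaterial since Definition~\ref{inp-ind} holds for any pair of inputs), it never calls the Ideal Functionality because $\Pone^*$ has no output, and the bound of Definition~\ref{inp-ind} is transferred verbatim to Definition~\ref{sec-def} by viewing the Adversary's final decision stage as the Distinguisher. Your explicit bookkeeping about $\rho_{\mathcal{A}}$ and about the Adversary-as-Distinguisher identification only makes explicit what the paper leaves implicit.
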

\setcounter{lemma}{\value{tempresult}}

\begin{proof}[Input-Indistinguishability to Security (Lemma \ref{ind-to-sec})]

If we suppose that the protocol is input-indistinguishable for a given class of Adversaries $\mathfrak{X}$, then no computationally-bounded quantum Distinguisher (represented as a an efficient quantum machine acting on the state returned by the Adversary) can distinguish between an execution with inputs $y_1$ and $y_2$. The Simulator against an Adversary in class $\mathfrak{X}$ then simply runs the protocol honestly with a random input $\tilde{y}$ (it does not need to call the Ideal Functionality as the adversarial player has no output). Therefore:

\[
\Bigl\lvert \mathbb{P}\Bigl[b = 0 \mid b \leftarrow \mathcal{D}\Bigl(v_{\mathcal{A}}(\Ptwo(y), \rho_{\mathcal{A}})\Bigr)\Bigr] -  \mathbb{P}\Bigl[b = 0 \mid b \leftarrow \mathcal{D}\Bigl(v_{\mathcal{A}}(S_{\Pone^*}(\tilde{y}), \rho_{\mathcal{A}})\Bigl)\Bigr] \Bigr\rvert \leq \epsilon(\eta)
\]

Since this is the case for any efficient distinguisher, it also means that the probability that the Adversary outputs a given bit as the guess for the real or ideal execution is the same up to $\epsilon$ in both cases. Therefore the protocol is secure.

\qed
\end{proof}

\subsection{Ideal Functionalities}
\label{app:IF}

We present here the Ideal Functionalities used throughout the paper, starting with the Two-Party Computation Ideal Functionality \ref{IdealF2PC} which Yao's Protocol implements.

\begin{idealf}[!htp]
\caption{Two-Party Secure Function Evaluation.}
\label{IdealF2PC}
\begin{itemize}

\item \textbf{Public information:} Binary function $f : \{0, 1\}^{n_X} \times 
\{0, 1\}^{n_Y} \longrightarrow \{0, 1\}^{n_Z}$ to be computed (where $n_X$, 
respectively $n_Y$, is the size of the input of $\Pone$, respectively 
$\Ptwo$, and $n_Z$ is the size of the output).

\item \textbf{Inputs:} $\Pone$ has classical input $x \in \{0, 1\}^{n_X}$ and $\Ptwo$ has classical input $y \in \{0, 1\}^{n_Y}$.

\item \textbf{Computation by the trusted party:}
\begin{enumerate}
\item If the trusted party receives an input which is inconsistent with the required format (different input size) or $\Abort$, it sends $\Abort$ to both parties. Otherwise, let $\tilde{\rho}_{in}$ be the input state it received from $\Pone$ and $\Ptwo$.
\item The trusted party measures the parts of $\tilde{\rho}_{in}$ in registers $\Regone$ and $\Regtwo$ in the computational basis, let $(\tilde{x}, \tilde{y})$ be the outcomes of the measurement.
\item The trusted party computes $\tilde{z} = f(\tilde{x}, \tilde{y})$ and sends $(\tilde{x}, \tilde{z})$ to $\Pone$ and $\tilde{y}$ to $\Ptwo$.
\end{enumerate} 

\end{itemize}
\end{idealf}

%\footnote{\label{foot:single_output} 
In the classical case, it is argued in 
\cite{LP04proof} that it suffices without loss of generality to describe the 
ideal functionality for functions where only one party receives an output, in this 
case $\Pone$, via the following transformation: any function inputs $(x, y)$ 
and two outputs $(w, z)$ to two parties can be transformed into a function 
taking as input $((x, p, a, b), y)$ and outputting to a single party $(w, 
\alpha:= z \oplus p,
\beta := a \odot \alpha \oplus b)$, where $\oplus$ and $\odot$ are the 
addition and 
multiplication operations in a well-chosen finite field ($p$ serves as a 
perfect One-Time-Pad of the output $z$ and $\beta$ serves as a perfect 
One-Time Message Authentication Code of $\alpha$). 
As shown in Section \ref{subsec:path} this is not so clear in our model. 
%}

Then the 1-out-of-2 String Oblivious 
Transfer (Ideal Functionality \ref{IdealOT}), in which one 
party ($\Pone$ in our case) has two strings $(k_0, k_1)$ and the other 
($\Ptwo$) has a bit $b \in \{0, 1\}$. The output of $\Ptwo$ is the string 
$k_b$ (with no knowledge about $k_{\bar{b}}$), while on the other hand $\Pone$ 
has no output and no knowledge about choice-bit $b$. This Ideal Functionality is used by Yao's Protocol to implement the previous one.

\begin{idealf}
\caption{1-out-of-2 String OT.}
\label{IdealOT}
\begin{itemize}

\item \textbf{Inputs:} $\Pone$ has as input $(k_0, k_1)$ and $\Ptwo$ has as input $b \in \{0, 1\}$.

\item \textbf{Computation by the trusted party:}
\begin{enumerate} 
\item If the trusted party receives $\Abort$ or an incorrectly formatted input from either party, it sends $\Abort$ to both parties and halts.
\item Otherwise, let $(\widehat{k_0}, \widehat{k_1})$ and $\hat{b}$ be the inputs received. The Ideal Functionality sends $\widehat{k_{\hat{b}}}$ to $\Ptwo$ and halts.
\end{enumerate} 
\end{itemize}
\end{idealf}

\subsection{Superposition-Resistance of the Classical One-Time Pad}
\label{subsec:sec_otp}

As a cryptographic ``Hello World", we prove that the classical One-Time-Pad remains secure in our new model. The OTP (Protocol \ref{otp}) uses a Key Distribution (Ideal Functionality \ref{k-dist}, see \cite{Por17a}) for two parties to emulates a Confidential Channel (Ideal Functionality \ref{Sec-chan}), which assures that only the length of the message is leaked to the Eavesdropper but does not guarantee that it was not tampered with (see also \cite{DFPR14} and \cite{CMT13}). The security of the classical OTP Protocol against $\QPPT$ Adversaries is proven in \cite{DFPR14}.

\begin{idealf}[!htp]
\caption{Confidential Channel.}
\label{Sec-chan}
\begin{itemize}

\item \textbf{Inputs:} The Sender has a message $m \in \{0, 1\}^n$. The Receiver has no input and the Eavesdropper has an auxiliary input $\rho_{aux}$.

\item \textbf{Computation by the trusted party:} The Ideal Functionality sends $n$ to the Eavesdropper. If it has not received anything from the Eavesdropper, the trusted party sends $m$ to the Receiver. Otherwise if it has received quantum state from $\rho_{aux}$ from the Eavesdropper over $n$ qubits, it measures the state in the computational basis. Let $\hat{m}$ be the result of the measurement, it then sends $\hat{m}$ it to the Receiver.
\end{itemize}
\end{idealf}

\begin{idealf}[!htp]
\caption{Key Distribution.}
\label{k-dist}
\begin{itemize}

\item \textbf{Inputs:} Parties $\Pone$ and $\Ptwo$ have as input the size $n$ of the key.

\item \textbf{Computation by the trusted party:} It samples uniformly at random $k \in \{0, 1\}^n$ and sends $k$ to $\Pone$ and $\Ptwo$.
\end{itemize}
\end{idealf}

\begin{algorithm}[!htp]
\caption{OTP Protocol.}
\label{otp}
\begin{algorithmic}[0]
\STATE \textbf{Input:}
\begin{itemize}
\item The Sender has a message $m \in \{0, 1\}^n$. 
\item The Receiver has as input the size of the message.
\item The Eavesdropper has an auxiliary input $\rho_{aux}$.
\end{itemize}
\STATE \textbf{The Protocol:}
\begin{enumerate}
\item The Sender and Receiver call the Key Distribution Ideal Functionality on input $n$ and receive a key $k$ of size $n$.
\item The Sender computes $y = m \oplus k$ (where $\oplus$ corresponds to an bit-wise XOR) and sends it to the Eavesdropper.
\item The Eavesdropper sends a message $\hat{y}$ to the Receiver.
\item The Receiver compute $\hat{m} = \hat{y} \oplus k$
\end{enumerate}
\end{algorithmic}
\end{algorithm}

We will now prove the security of the protocol against malicious $\BQP$ Receiver (with superposition access), as captured by the following Lemma \ref{sec-otp}.

\begin{lemma}[Security of One-Time-Pad against Adversaries with Superposition Access]
\label{sec-otp}
Protocol \ref{otp} is superposition-resistant against a malicious Eavesdropper with advantage $\epsilon = 0$ (ie.\ it satisfies Definition \ref{sec-def} against $\BQP$ Adversaries).
\end{lemma}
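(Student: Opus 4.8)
The plan is to build a Simulator whose interaction with the Confidential Channel (Ideal Functionality \ref{Sec-chan}) produces an Eavesdropper view that is \emph{identical}---not merely negligibly close---to the real one, so that the advantage is exactly $0$. Upon receiving the length $n$ from the functionality, the Simulator samples a uniformly random string $\tilde{y} \in \{0,1\}^n$ (equivalently, prepares the maximally mixed state on $n$ qubits), hands it to the internally-run Eavesdropper in place of the intercepted message $y$, and then optionally forwards the Eavesdropper's outgoing message $\hat{y}$ to the functionality (by the remark following Definition \ref{sec-def}, faithful reproduction of the Receiver's output is not required, since only the Adversary distinguishes). The Simulator then outputs the Eavesdropper's final state.

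First I would establish that the state the Eavesdropper receives as the intercepted message is the maximally mixed state in both worlds. In the real protocol the Sender transmits $y = m \oplus k$, where $k$ is drawn uniformly by the Key Distribution functionality (Ideal Functionality \ref{k-dist}) and is classical (the functionality measures). Since $m$ is a fixed classical input and $k$ ranges uniformly over $\{0,1\}^n$, the reduced state on the Eavesdropper's message register, obtained by averaging over $k$, equals $\frac{1}{2^n}\sum_{y}\dyad{y} = \frac{1}{2^n}\Id$, independent of $m$. In the ideal world the Simulator feeds exactly this distribution. Combined with the common auxiliary input $\rho_{aux}$, the Eavesdropper's input is therefore identical in the two executions.

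The crux is to argue that nothing occurring after the Eavesdropper transmits $\hat{y}$ can affect its final view, so that identical inputs yield identical outputs. This is where the no-signalling principle enters: after the third step the Eavesdropper receives no further communication, and the honest Receiver's computation $\hat{m} = \hat{y} \oplus k$ (real world), or the functionality's measurement of $\hat{y}$ (ideal world), is a CPTP map acting only on registers no longer in the Eavesdropper's possession. Even if the Eavesdropper sent $\hat{y}$ in a register entangled with its own, any trace-preserving channel applied to the transmitted register leaves the reduced state on the Eavesdropper's registers invariant; in particular the ``measure-and-forget'' channel of the ideal functionality and the coherent XOR performed by the real Receiver induce the \emph{same} (namely, no) change on the Eavesdropper's side. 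I expect this to be the main subtlety, precisely because it is the only place where the difference between coherent honest processing and ideal measurement could conceivably surface, and the argument must rule this out rigorously rather than rely on the classical uniformity of the key alone.

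Putting these together, the Eavesdropper's final state is the output of the same operations on the same input in both worlds, so the two views coincide exactly and the distinguishing advantage is $\epsilon = 0$, establishing superposition-resistance in the sense of Definition \ref{sec-def} for $\mathfrak{X} = \BQP$.
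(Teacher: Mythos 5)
Your proposal is correct and takes essentially the same route as the paper's proof: the same Simulator (send a uniformly random $\tilde{y}$, forward the Eavesdropper's state to the functionality), the same observation that the intercepted message is uniformly distributed---hence identical---in both worlds, and the same no-signalling argument that the Receiver's key-dependent operation (real) versus the functionality's measurement (ideal) cannot alter the Eavesdropper's reduced state. The only difference is presentational: you phrase no-signalling directly as invariance of the partial trace under CPTP maps on transmitted registers, where the paper invokes it as a physical principle by contradiction.
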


\begin{proof}

We start by defining the quantum equivalent of all operations in Protocol \ref{otp}. The initial message is represented as a quantum register containing $\ket{m}$. The call to the Key Distribution Ideal Functionality yields a quantum register for both parties containing a state $\ket{k}$ in the computational basis. The bit-wise $XOR$ is applied using $\CNOT$ gates where the key corresponds to the control. The definition of the $\CNOT$ gate implies that if the control is in a computational basis state, it remains unentangled from the rest of the state after application of the gate. The state is then sent to the Eavesdropper. It can perform any CPTP map on the state $\ket{y}\otimes \rho_{aux} \otimes \ket{0^{\otimes n}}$ and send the last register to the Receiver. The Receiver applies the XOR using $\CNOT$ gates with its key as control.

The Eavesdropper has no output in this protocol. As stated in Lemma \ref{ind-to-sec}, it would be sufficient to show that two executions with different inputs are indistinguishable. However we will now describe the Simulator for clarity. It receives the size of the message $n$ from the Confidential Channel Ideal Functionality. It chooses uniformly at random a value $\tilde{y} \in \{0, 1\}^n$ and sends $\ket{\tilde{y}}$ to the Eavesdropper. It receives in return a state $\rho$ on $n$ qubits and sends it to the Confidential Channel Ideal Functionality (which then measures the state in the computational basis). %chooses uniformly at random $\tilde{k} \in \{0, 1\}^n$ and applies $\CNOT$ gates with its sampled key as control.

%In the real protocol, the Adversary guessing a predicate on the message is equivalent to guessing a predicate on the secret key. 
Before the message sent by the Adversary, the protocol is equivalent to its classical execution, so the Adversary has no additional advantage compared to the classical execution (which is perfectly secure). The only advantage possibly obtained by the Adversary compared to a fully classical one comes from the state that it sent to the Receiver (repsectively Simulator) and the application by the Receiver of an operation dependent on its secret key (respectively a measurement in the computational basis by the Ideal Functionality). It is a well known fact (No-Communication Theorem of quantum information) that the Adversary obtaining any bit of information with probability higher than $0$ via this method (using only local operation on the Receiver's side or by the Ideal Functionality) would violate the no-signalling principle \cite{GGRW88,ER89}, therefore the distinguishing advantage of the Adversary between the real and ideal executions is $0$, thereby concluding the proof.

\qed
\end{proof}

\section{Formal Modified Yao Protocol}

\subsection{Formal Definitions for Symmetric Encryption Schemes}
\label{subsec:symmetric:encryption}
Protocol \ref{Yao} (given on page~\pageref{Yao}) uses a symmetric encryption 
scheme as a primitive. We give here the definitions of the properties required 
from such a scheme for the correctness and security of the protocol. Recall 
that such a scheme is defined as two classical efficiently computable 
deterministic functions $(\Enc, \Dec)$ over $\mathfrak{K}$, $\mathfrak{A}$, 
$\mathfrak{M}$ and $\mathfrak{C}$. For simplicity we will suppose that the 
key-generation algorithm simply samples the key uniformly at random from the 
set of valid keys.

\subsubsection{Security and Superposition-Compatibility of Symmetric 
Encryption.}

The encryption scheme must satisfy the following properties in order to behave well when called on superpositions of inputs (so that they may be represented as minimal oracles, shown by Lemma \ref{min_or}). The first of those is that the encryption and decryption functions are perfect inverses of each other.

\begin{definition}[Correctness]
\label{perf-corr}
An encryption scheme $(\Enc, \Dec)$ as defined above is said to be \emph{correct} if, for all keys $k \in \mathfrak{K}$ and auxiliary input $\mathrm{aux} \in \mathfrak{A}$, $\Dec_{k}(\mathrm{aux}, \cdot) \circ \Enc_{k}(\mathrm{aux}, \cdot) = \mathsf{Id}_{\mathfrak{M}}$ where $\circ$ is the composition of functions and $\mathsf{Id}_{\mathfrak{M}}$ is the identity function over set $\mathfrak{M}$.

\end{definition}

It is also necessary that the plaintexts and ciphertexts belong to the same set.

\begin{definition}[Format-Preserving Encryption]
\label{format-pres}
An encryption scheme $(\Enc, \Dec)$ defined as above is said to be \emph{format-preserving} if $\mathfrak{M} = \mathfrak{C}$.

\end{definition}

The encryption scheme is called \emph{in-place} if no additional memory is required for performing encryptions and decryptions.

\begin{definition}[In-Place Permutations]
\label{in-place}

A permutation $\sigma$ over set $X$ is said to be \emph{in-place} if it can be computed efficiently using exactly the memory registers storing $x$ using reversible operations.

\end{definition}

Finally, the encryption scheme is called \emph{non-mixing} if the registers containing the key and auxiliary value are not modified in a way that depends on anything other than themselves. In another sense, the only mixing that is allowed is when modifying the message register during the encryption and decryption process.

\begin{definition}[Non-Mixing Encryption Scheme]
\label{non-mix}
Let $(k', \mathrm{aux'}, c) = \Enc_k(\mathrm{aux}, m)$ be the contents of the three memory registers at the end of the in-place encryption algorithm under key $k$ (transformed into $k'$ by the end of the encryption), where $m$ is the message encrypted into ciphertext $c$, $\mathrm{aux}$ is the auxiliary value and $\mathrm{aux'}$ is the content of the auxiliary register at the end of the encryption. The encryption scheme is said to be non-mixing if there exists two \emph{in-place} permutations $e_K : \mathfrak{K} \rightarrow \mathfrak{K}$ and $e_A : \mathfrak{A} \rightarrow \mathfrak{A}$ such that $k' = e_K(k)$ and $\mathrm{aux'} = e_A(\mathrm{aux})$ (and similarly for the decryption algorithm with functions $d_K$ and $d_A$).

\end{definition}

The function corresponding to the key may represent the key-expansion phase which is often present in symmetric encryption schemes (in which case $e_K$ and $d_K$ are injective functions from $\mathfrak{K}$ to a larger space, but the same results apply), while the one linked to the auxiliary value may be the updating of an initialisation value used in a block cipher mode of operation.

These four previous definitions ensure that the encryption and decryption algorithms can be represented as unitaries when acting on quantum systems without the use of ancillae (which is usual way of transforming a classical function into a quantum operation).

\begin{lemma}[Sufficient Conditions for Minimal Oracle Representation]
\label{min_or}
Let $(\Enc, \Dec)$ be a correct, format-preserving, in-place and non-mixing encryption scheme defined as above (satisfying Definitions \ref{perf-corr} through \ref{non-mix}). Then is has a Minimal Oracle Representation according to Definition \ref{def_min_or}. Furthermore for all superpositions $\ket{\phi} = \sum_{m \in \mathfrak{M}}\alpha_m \ket{m}$ (where $\ket{\tilde{\phi}}$ is the same superposition over encrypted values):

\[
 \mathsf{M}_{\Enc}\ket{k}\ket{\mathrm{aux}}\ket{\phi} = \ket{e_K(k)}\ket{e_A(\mathrm{aux})}\ket{\tilde{\phi}}
\]

\end{lemma}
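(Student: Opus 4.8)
Given: An encryption scheme $(\Enc, \Dec)$ that is:
1. Correct: $\Dec_k(\mathrm{aux}, \cdot) \circ \Enc_k(\mathrm{aux}, \cdot) = \mathsf{Id}_{\mathfrak{M}}$
2. Format-preserving: $\mathfrak{M} = \mathfrak{C}$
3. In-place: permutations computed using exactly the memory registers storing the input
4. Non-mixing: there exist in-place permutations $e_K, e_A$ such that $k' = e_K(k)$, $\mathrm{aux'} = e_A(\mathrm{aux})$

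I need to show: It has a Minimal Oracle Representation, i.e., there exist efficiently computable unitaries $\mathsf{M}_{\Enc}, \mathsf{M}_{\Dec}$ such that
$$\mathsf{M}_{\Enc}\ket{k}\ket{\mathrm{aux}}\ket{m} = \ket{e_K(k)}\ket{e_A(\mathrm{aux})}\ket{\Enc_k(\mathrm{aux}, m)}$$
with $\mathsf{M}_{\Enc}^{\dagger} = \mathsf{M}_{\Dec}$.

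And furthermore, the linearity property for superpositions.

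Let me think about the proof strategy.

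The plan is to reduce the entire claim to a single structural observation: under the four hypotheses, the global classical map $\Phi : (k, \mathrm{aux}, m) \mapsto \bigl(e_K(k), e_A(\mathrm{aux}), \Enc_k(\mathrm{aux}, m)\bigr)$ is a \emph{permutation} of the finite set $\mathfrak{K} \times \mathfrak{A} \times \mathfrak{M}$. Once this is established, the minimal oracle $\mathsf{M}_{\Enc}$ is simply the permutation unitary sending each computational basis state $\ket{k}\ket{\mathrm{aux}}\ket{m}$ to $\ket{\Phi(k, \mathrm{aux}, m)}$; permutation matrices are unitary, so no ancilla register is required, which is exactly what distinguishes a minimal oracle from the standard oracle of Definition \ref{def_min_or}.

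First I would prove that $\Phi$ is injective, hence bijective on its finite domain. Suppose two triples collide under $\Phi$. Since $e_K$ and $e_A$ are permutations by the non-mixing hypothesis (Definition \ref{non-mix}), the equalities $e_K(k_1) = e_K(k_2)$ and $e_A(\mathrm{aux}_1) = e_A(\mathrm{aux}_2)$ force $k_1 = k_2$ and $\mathrm{aux}_1 = \mathrm{aux}_2$. With the key and auxiliary value now pinned down, correctness (Definition \ref{perf-corr}) guarantees that $\Enc_{k_1}(\mathrm{aux}_1, \cdot)$ admits $\Dec_{k_1}(\mathrm{aux}_1, \cdot)$ as a left inverse, so it is injective, while format-preservation (Definition \ref{format-pres}) makes it a self-map of $\mathfrak{M}$; hence $\Enc_{k_1}(\mathrm{aux}_1, m_1) = \Enc_{k_1}(\mathrm{aux}_1, m_2)$ yields $m_1 = m_2$. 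Thus $\Phi$ is a permutation and $\mathsf{M}_{\Enc}$ is well-defined and unitary.

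Next I would address efficiency and the adjoint relation. The in-place hypothesis (Definition \ref{in-place}) states that each of the three component permutations is computed by reversible operations acting only on the registers holding their inputs; translating each reversible classical gate into its corresponding basis-permuting unitary (for instance via Toffoli gates) yields an efficient circuit for $\mathsf{M}_{\Enc}$ that uses no additional registers. For the adjoint, note that $\mathsf{M}_{\Enc}^{\dagger}$ is the unitary implementing $\Phi^{-1}$; correctness ensures that $\Phi^{-1}$ maps $\bigl(e_K(k), e_A(\mathrm{aux}), \Enc_k(\mathrm{aux}, m)\bigr)$ back to $(k, \mathrm{aux}, m)$ through $\Dec$ together with $d_K = e_K^{-1}$ and $d_A = e_A^{-1}$, so $\mathsf{M}_{\Enc}^{\dagger}$ is precisely the decryption minimal oracle $\mathsf{M}_{\Dec}$.

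Finally, the superposition identity is immediate from linearity of $\mathsf{M}_{\Enc}$: expanding $\ket{\phi} = \sum_{m} \alpha_m \ket{m}$ and applying $\mathsf{M}_{\Enc}$ term by term (the key and auxiliary registers being the fixed basis states $\ket{k}$ and $\ket{\mathrm{aux}}$) produces $\sum_{m} \alpha_m \ket{e_K(k)}\ket{e_A(\mathrm{aux})}\ket{\Enc_k(\mathrm{aux}, m)} = \ket{e_K(k)}\ket{e_A(\mathrm{aux})}\ket{\tilde{\phi}}$. I expect the only genuinely delicate step to be the injectivity argument above, and in particular the care needed to invoke non-mixing to \emph{decouple} the key and auxiliary transformations from the message transformation before applying correctness; everything downstream is bookkeeping and the standard reversible-to-unitary translation.
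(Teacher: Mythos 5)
Your proposal is correct and takes essentially the same route as the paper's proof: both argue that correctness, format-preservation and non-mixing make the global map $(k, \mathrm{aux}, m) \mapsto \bigl(e_K(k), e_A(\mathrm{aux}), \Enc_k(\mathrm{aux}, m)\bigr)$ a permutation of $\mathfrak{K} \times \mathfrak{A} \times \mathfrak{M}$, realise $\mathsf{M}_{\Enc}$ as the corresponding permutation unitary, invoke the in-place property to turn the reversible classical circuit into an efficient ancilla-free unitary, and obtain the superposition identity by linearity (the paper phrases this as the key and auxiliary registers remaining unentangled since $e_K$ and $e_A$ do not depend on the message). Your explicit injectivity argument and treatment of the adjoint relation $\mathsf{M}_{\Enc}^{\dagger} = \mathsf{M}_{\Dec}$ merely spell out steps the paper leaves implicit.
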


\begin{proof}

The encryption scheme is correct and format preserving, which implies that, for every key $k \in \mathfrak{K}$ and any value $\mathrm{aux} \in \mathfrak{A}$, the functions $\Enc_{k}$ and $\Dec_{k}$ are permutations of $\mathfrak{M}$. In the case of a non-mixing encryption scheme, the functions $e_K$ and $e_A$ are also invertible and so the overall scheme is a permutation over $\mathfrak{K} \times \mathfrak{A} \times \mathfrak{M}$. Any such classical permutations can be represented as minimal oracles: given a permutation $\sigma$ over a set $X$ it is always possible, although costly, to compute $\sigma(x)$ for all $x \in X$ and define the minimal oracle $\mathsf{M}_{\sigma}$ by its matrix elements ($\mathsf{M}_{\sigma}[x][\sigma(x)] = 1$ and $0$ everywhere else).

The efficiency of the scheme lies in the fact that all these permutations are in-place, meaning that they can each be computed without using additional memory and using only reversible operations even in the classical case. The classical reversible operations can easily be implemented using unitaries (mainly $\X$, $\CNOT$ and Toffoli gates) and so the classical efficiency directly translates to the quantum case.

Finally, since the functions $e_K$ and $e_A$ do not depend on the message being encrypted (or decrypted in the case of $d_K$ and $d_A$), they remain unentangled from the message register if they were in a basis state prior to the application of the minimal oracle.

\qed
\end{proof}

The last definition below presents the security of the encryption scheme. In the following, the key-space, auxiliary-space and message-space are fixed to $\mathfrak{K} = \{0, 1\}^{n_K}$, $\mathfrak{A} = \{0, 1\}^{n_A}$ and $\mathfrak{M} = \{0, 1\}^{n_M}$ for some $n_K(\eta)$, $n_A(\eta)$ and $n_M(\eta) > n_K(\eta)$ (mainly for simplicity of presentation, the ideas transpose to other sets). We define the quantum security of such a symmetric encryption scheme by imposing that sampling the key and giving a black-box access to an encryption quantum oracle is indistinguishable for a quantum Adversary from giving it superposition access to a random permutation. This is simply a quantum game-based version of the definition for pseudo-random permutations \cite{Gol04prp}.

\begin{definition}[Real-or-Permutation Security of Symmetric Encryption]
\label{sec_sym}
Let $(\Enc, \Dec)$ be a symmetric encryption scheme with Minimal Oracle Representation. Let $\mathcal{S}_{n_M}$ be the set of permutations over $\{0, 1\}^{n_M}$. Consider the following game $\Gamma$ between a Challenger and the Adversary:

\begin{enumerate}
\item The Challenger chooses uniformly at random a bit $b \in \{0, 1\}$ and:
\begin{itemize}
\item If $b = 0$, it samples a key $k \in \{0, 1\}^{n_K}$ uniformly at random, and sets the oracle $\mathcal{O}$ by defining it over the computational basis states $\ket{\mathrm{aux}}\ket{m}$ for $m \in \{0, 1\}^{n_M}$ and $\mathrm{aux} \in \{0, 1\}^{n_A}$ as $\mathcal{O}\ket{\mathrm{aux}}\ket{m} = U_{\Enc}\ket{k}\ket{\mathrm{aux}}\ket{m} = \ket{k}\ket{e_A(\mathrm{aux})}\ket{\Enc_k(\mathrm{aux}, m)}$ (the oracle first applies the minimal encryption oracle $\mathsf{M}_{\Enc}$ and then the inverse of $d_K$ to the register containing the key).% The inverse oracle $\mathcal{O}^{-1}$ is similarly defined with $\Dec$ instead of $\Enc$.
\item If $b = 1$, it samples a permutation over $\{0, 1\}^{n_M}$ uniformly at random $\sigma \in \mathcal{S}_{n_M}$ and sets the oracle $\mathcal{O}$ as $\mathcal{O}\ket{\mathrm{aux}}\ket{m} = U_{\sigma, e_A}\ket{\mathrm{aux}}\ket{m} = \ket{e_A(\mathrm{aux})}\ket{\sigma(m)}$. %The inverse oracle is defined using the inverse permutation $\sigma^{-1}$.
\end{itemize}
\item For $i \leq q$ with $q = \mathrm{poly}(\eta)$, the Adversary sends a state $\rho_i$ of its choice (composed of $n_M$ qubits) to the Challenger. % and a bit $c_i$ (with $c_i = 0$ for an encryption query and $c_i = 1$ for a decryption query).
The Challenger responds by sampling an auxiliary value at random $\mathrm{aux}_i \in \{0, 1\}^{n_A}$, applying the oracle % or inverse 
to the state $\ket{\mathrm{aux}_i} \otimes \rho_i$ and sending the result back to the Adversary along with the modified auxiliary value (notice that the oracle has no effect on the key if there is one and so it remains unentangled from the Adversary's system).
\item The Adversary outputs a bit $\tilde{b}$ and stops.
\end{enumerate}

A symmetric encryption scheme is said to be secure against quantum Adversaries if there exists $\epsilon(\eta)$ negligible in $\eta$ such that, for any $BQP$-Adversary $\mathcal{A}$ with initial auxiliary state $\rho_{\mathit{aux}}$:

\[
\mathit{Adv}_{\Gamma}(\mathcal{A}) := \abs{\frac{1}{2} - \mathbb{P}\Bigl[b = \tilde{b} \mid \tilde{b} \leftarrow \mathcal{A}(\rho_{\mathit{aux}}, \Gamma)\Bigr]} \leq \epsilon(\eta)
\]

\end{definition}

We show below how to instantiate an encryption scheme satisfying these definitions.

%Note that this definition implies the one for Quantum IND-CPA security: informally, a randomly chosen permutation is Q-IND-CPA if the message is divided into $M = (m, r)$ where $r$ is chosen uniformly at random from a large enough set, or, equivalently, an adversary breaking the Q-IND-CPA security of an encryption scheme can be transformed into an adversary that has a non-negligible advantage in the previous game by playing the Q-IND-CPA game internally while using the external oracle for encryption and decryption: if it succeeds in its attack, then it output $\tilde{b} = 0$, and $\tilde{b} = 1$ otherwise. This way of modelling the security of symmetric encryption schemes has been first presented in \cite{MS16}, based on the classically-defined Real-or-Random experiments.

\subsubsection{Instantiating the Symmetric Encryption Scheme.}
\label{app:enc_inst}

In a sense, the perfect (but inefficient) symmetric encryption is given by associating each key $k \in [n_M!]$ to a different permutation from $\mathcal{S}_{n_M}$ in a canonical way (sampling the key is then equivalent to sampling the permutation). The encryption scheme that is used in the protocol may even be considered to be exactly this perfect encryption scheme since the superposition attack does not use the specifics of the underlying encryption scheme, or even supposes a negligible advantage in breaking the encryption scheme (it simply requires it to have a Minimal Oracle Representation).

Note however that a large number of symmetric encryption schemes can be made to fit these conditions. For example, the most widely used block-cipher AES \cite{DR02} operates by performing during a certain number of rounds $N$ the following process (the message consists of one block of $128$ bits, presented a four-by-four square matrix with $8$ bits in each cell):

\begin{enumerate}
\item It applies a round key (different for each round) by XORing it into the message-block.
\item It applies to each cell a fixed permutation $S : \mathbb{Z}_8 \leftarrow \mathbb{Z}_8$ over $8$ bits.
\item Each row $i$ is shifted cyclically by $i$ places to the left (the indices start at $0$).
\item Each column is multiplied by a fixed invertible matrix.
\end{enumerate}

This is clearly a permutation of the message block which is also in-place and non-mixing if implemented without optimisations (note that there is no auxiliary value apart from the invertible matrix and the key remains unchanged during the rounds). The security of AES is well studied classically and cryptanalysis has been performed recently in the quantum setting in \cite{BNPS19} (albeit not against Adversaries with superposition access).

The only place where the AES cipher is not in-place is during the 
key-derivation phase, during which the round keys are generated. The simple 
way to fix this is to make the register containing the original key large 
enough to contain the expanded key as well and initialise the additional 
qubits in the $\ket{0}$ state. The operation producing the larger key from the 
initial key then corresponds to the functions $e_K$ and $d_K$ from 
Definition~\ref{non-mix} (as mentioned in the remark below the definition, 
these can then be injective only without changing the result).

If the message to be encrypted is longer than a block, it can easily be extended by using the CBC operation mode, which is secure under the assumption that the underlying block-cipher is a quantum-secure pseudo-random permutation (based on the security analysis of \cite{ATTU16}). In this mode, the Initialisation Value (or $IV$) is a uniformly random string of the same size as the blocks upon which the block-cipher operates (it corresponds to the auxiliary value discussed previously). The encryption of the CBC mode operates by applying the function $c_i = \Enc_k(m_i \oplus c_{i - 1})$ for all $i$, where $m_i$ is the message block of index $i$ and $c_i$ is the corresponding ciphertext (with the convention that $c_0 = IV$), and $\Enc$ is the encryption algorithm of the underlying block-cipher. Conversely the decryption is given by $m_i = c_{i - 1} \oplus \Dec_k(c_i)$ with the same conventions. This is also clearly in-place and non-mixing (the IV and key are never modified so $e_A = d_A = Id_A$ and $e_K = d_K = Id_K$) as well as secure.

However, as mentioned above, the CBC mode of operation is only secure with superposition access under the assumption that the underlying block-cipher is also secure against Adversaries with superposition access. To our knowledge, no block-ciphers have been proven secure under this type of access (and some have been broken, such as the 3-round Feistel cipher in \cite{KM10} and the Even-Mansour cipher in \cite{KM12}, although both were patched in \cite{AR17} based on the Hidden Shift Problem). The CTR mode of operation on the other hand has been proven quantum-secure in \cite{ATTU16} even if the underlying block-cipher is only secure against quantum Adversaries with \emph{classical} access (which, as mentionned above, has been done in \cite{BNPS19} for AES). This mode also satisfies all the requirements stated in the definitions above with only a slight modification. The $IV$ in this mode is also initialised to a uniformly random value of the same size the blocks to be encrypted. The $IV$ is encrypted with the key and XORed to the first block, then the $IV$ in incremented and the same process is repeated for the subsequent blocks. This means that the IV needs to be copied (it can be done quantumly using a $\CNOT$ gate but it is then no longer in-place). In order to be in-place and non-mixing, the following procedure may be applied instead for each block (with $\mathcal{IV}$ being the register containing the $IV$ and $\mathcal{B}_i$ the register containing block at position $i$, with $\leftarrow$ being used for the assignment operator and $\mathsf{Inc}$ corresponding to the incrementation operator):
\\

\begin{tabular}{rccc}
1. & $\mathcal{IV}$ & $\leftarrow$ & $\Enc_k(\mathcal{IV})$\\
2. & $\mathcal{B}_i$ & $\leftarrow$ & $\mathcal{B}_i \oplus \mathcal{IV}$\\
3. & $\mathcal{IV}$ & $\leftarrow$ & $\Dec_k(\mathcal{IV})$\\
4. & $\mathcal{IV}$ & $\leftarrow$ & $\mathsf{Inc}(\mathcal{IV})$
\end{tabular}
\\

The overall result is that the value contained in the IV has only been updates by incrementing it as many times as there are blocks $n_b$ to be encrypted and therefore the function $e_A = d_A = \mathsf{Inc}^{\circ n_b}$ only acts on the auxiliary value $IV$.

Note that these methods work for messages whose length is a multiple of the block-size, but can be further extended by using a correctly chosen padding.

%\section{Oblivious Transfer Ideal Functionality}
%\label{app:IF}
%\input{tex_files/A2_IF.tex}

%\subsection{Description of the Modified Yao Protocol}
\subsection{Description, Correctness and $\QPPT$-Security of the Modified Yao Protocol}
\label{app:multiple_output}
%\input{tex_files/A3_formal-yao.tex}
%
%\subsection{Analysis of Correctness and Security against $\QPPT$ Adversaries }
\label{proofs_CS}
\label{app:proof_sec_yao}
We first present in Protocol \ref{Yao} the formal version of our Modified Yao Protocol for a single bit of output.

\begin{algorithm}[ht!]
\caption{Modified Yao Protocol for One Output Bit.}
\label{Yao}
\begin{algorithmic}[0]
\STATE \textbf{Input:} The Garbler and Evaluator have inputs $x \in \{0, 1\}^{n_X}$ and $y \in \{0, 1\}^{n_Y}$ respectively.
%\begin{itemize}
%\item The Garbler has as input $x \in \{0, 1\}^{n_X}$.
%\item The Evaluator has as input $y \in \{0, 1\}^{n_Y}$.
%\end{itemize}
\STATE \textbf{Output:} The Garbler has one bit of output, the Evaluator has no output.
\STATE \textbf{Public Information:} The function $f$ to be evaluated, the encryption scheme $(\Enc, \Dec)$ and the size of the padding $p$.
\STATE \textbf{The Protocol:}
\begin{enumerate}
\item The Garbler chooses uniformly at random the values $\qty{k_0^{G, 1}, k_1^{G, 1}, \ldots, k_0^{G, n_X}, k_1^{G, n_X}}$, $\qty{k_0^{E, 1}, k_1^{E, 1}, \ldots, k_0^{E, n_Y}, k_1^{E, n_Y}}$ from $\mathfrak{K}$ and $k^z \in \{0, 1\}$. It uses those values to compute the garbled table $\mathit{GT}_{f}^{(X, Y, Z)}$, with $X$ being the set of wires for the Garbler's input, $Y$ the set of wires for the evaluators input, and $Z$ the output wire.
\item The Garbler and Evaluator perform $n_Y$ interactions with the trusted third party performing the OT Ideal Functionality. In interaction $i$:
\begin{itemize}
\item The Garbler's inputs are the keys $(k_0^{E, i}, k_1^{E, i})$.
\item The Evaluator's input is $y_i$.
\item The Evaluator's output is the key $k_{y_i}^{E, i}$.
\end{itemize}
\item The Garbler sends the garbled table $\mathit{GT}_{f}^{(X, Y, Z)}$ and $2^{n_X + n_Y}$ copies of the keys corresponding to its input $\qty{k_{x_i}^{G, 1}}_{i \in [n_X]}$. It also sends the auxiliary values $\qty{\mathrm{aux}_i}_{i \in [n_X + n_Y]}$ that were used for the encryption of the garbled values.
\item For each entry in the garbled table:
\begin{enumerate}
\item The Evaluator uses the next ``fresh" copy of the keys supplied by the Garbler along with the keys that it received from the OT Ideal Functionality to decrypt the entry in the garbled table.
\item It checks that the last $p$ bits of the decrypted value are all equal to $0$. If so it returns the register containing the output value and the ones containing the Garbler's keys to the Garbler.
\item Otherwise it discards this ``used" copy of the keys and repeats the process with the next entry in the garbled table. If this was the last entry it outputs $\Abort$ and halts.
\end{enumerate}
\item If the Evaluator did no output $\Abort$, the Garbler applies the One-Time-Pad defined by the key associated with wire $z$ to decrypt the output: if $k^z = 1$, it flips the corresponding output bit, otherwise it does nothing. It then sets the bit in the output register as its output.
\end{enumerate}
\end{algorithmic}
\end{algorithm}

%Having defined different security requirements for the symmetric encryption scheme compared to the original Yao Protocol, w

We then give the proofs of correctness (Theorem \ref{prot_corr}) and 
$\QPPT$-security (Theorem \ref{sec_yao}) of Protocol \ref{Yao} against 
adversarial Honest-but-Curious Sender and Receiver (as defined in our new model through 
Definition \ref{sec-def}). The modifications to the original protocol 
therefore do not impact security against classical or $\QPPT$ Adversaries.

\subsubsection{Proof of Correctness of Protocol \ref{Yao}}

\setcounter{tempresult}{\value{theorem}}
\setcounter{theorem}{\value{compteur:prot_corr}-1}
\begin{theorem}[Protocol Correctness]
Let $(\Enc, \Dec)$ be a symmetric encryption scheme with a Minimal Oracle Representation 
(Definition \ref{def_min_or}). Protocol \ref{Yao} is correct with probability exponentially close to $1$ in 
$\eta$ for $p = \mathit{poly}(\eta)$.
\end{theorem}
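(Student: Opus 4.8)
The plan is to reduce the statement to the correctness of the ordinary classical execution and then to bound, via a union bound, the probability that the Evaluator ever accepts a wrongly decrypted garbled entry. First I would observe that, since the honest players hold classical inputs and the quantum embedding acts on computational-basis states, the honest run of Protocol~\ref{Yao} coincides with its classical execution; in particular the $2^{n_X + n_Y}$ ``fresh'' copies of the Garbler's keys carry identical classical values and play no role in correctness. It therefore suffices to analyse the classical protocol.

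The heart of the argument splits into the correct entry and the incorrect ones. For the correct entry, the Evaluator holds exactly the keys under which it was produced: $k_{y_i}^{E, i}$ for each $i$ (obtained from the OT) and $k_{x_i}^{G, i}$ (sent directly). By the inverse property of the encryption scheme (Definition~\ref{perf-corr}), undoing the nested encryption with these keys returns $f(x, y) \oplus k^z \parallel 0^p$, whose last $p$ bits are $0$; hence this entry always passes the padding check. Unless some incorrect entry passes the check before the correct one is reached in the permuted list, the Evaluator returns the register holding $f(x, y) \oplus k^z$, after which the Garbler strips the one-time pad $k^z$ and outputs $f(x, y)$.

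For an incorrect entry, indexed by inputs $(\tilde{x}, \tilde{y}) \neq (x, y)$, at least one of the keys the Evaluator applies differs from the key under which that entry was encrypted. Modelling each key as an independent uniformly random permutation of $\{0, 1\}^{n_M}$ (the perfect encryption scheme of Appendix~\ref{app:enc_inst}; for a pseudorandom instantiation the same bound holds up to a negligible additive term by real-or-permutation security, Definition~\ref{sec_sym}), the first wrong-key decryption maps the stored value to a near-uniform element of $\{0, 1\}^{n_M}$, and the remaining decryptions, being further permutations, preserve uniformity. Consequently the decrypted string has its last $p$ bits all equal to $0$ with probability at most $2^{-p}$. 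Taking a union bound over the $2^{n_X + n_Y} - 1$ incorrect entries, the probability that any of them passes the check is at most $(2^{n_X + n_Y} - 1)\, 2^{-p}$. Since $n_X, n_Y = \mathrm{poly}(\eta)$, choosing $p = \mathrm{poly}(\eta)$ with $p \geq n_X + n_Y + \eta$ makes this bound at most $2^{-\eta}$, so correctness fails only with probability exponentially small in $\eta$.

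I expect the main obstacle to be the wrong-key decryption bound: with only a Minimal Oracle Representation the scheme is a permutation but not a priori random, so one must either work in the perfect-encryption model or reduce to the real-or-permutation security in order to justify the $2^{-p}$ estimate. One also has to check that a partially correct key set (correct outer layer but wrong inner key, or vice versa) still fully randomizes the final message register, so that the padding check behaves as if applied to a uniform string. Everything else—the deterministic success of the correct entry, the removal of the mask $k^z$, and the union bound—is routine.
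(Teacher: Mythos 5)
Your proof is correct and follows essentially the same route as the paper's: wrong-key decryptions behave as random permutations (up to the negligible real-or-permutation distinguishing advantage), each passes the padding check with probability $2^{-p}$, and a union bound together with a polynomial choice of $p$ (the paper sets $p = \eta + 2(n_X + n_Y)$) finishes the argument. The only notable difference is the scope of the union bound: you count the $2^{n_X + n_Y} - 1$ wrong decryptions arising in an honest run, which suffices for the theorem as stated, whereas the paper conservatively bounds over all $\approx 2^{2n_X + 2n_Y + 1}$ ciphertext/key combinations (both values of $k^z$ and all key sets), a deliberately stronger statement that it later reuses in the superposition-attack analysis of Theorem~\ref{correct_att}.
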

\setcounter{theorem}{\value{tempresult}}

\begin{proof}[Correctness of Yao's Protocol (Theorem \ref{prot_corr})]
We suppose here that both players are honest. Note that the protocol will only fail if one decryption which should have been incorrectly decrypted is instead decrypted as valid. The parameter $p$ must be chosen such that the probability of failure is negligible (in the security parameter in this instance). If at least one of the keys used in decrypting an entry in the garbled table does not correspond to the key used in encrypting it, the encryption and decryption procedure is equivalent to applying a random permutation on $r \parallel 0^p$ for uniformly random $r$ (up to negligible probability in $\eta$ that the encryption scheme is distinguishable from a random permutation). The probability that the resulting element also has $p$ bits equal to $0$ at the end is therefore $2^{-p}$.

For $p = \mathit{poly}(\eta)$, we show that the failure probability corresponding to one such event happening across any possible ``wrong" decryption is negligible in $\eta$. In fact, there are $2^{n_X + n_Y + 1}$ ciphertexts (counting both possibilities for $k^z$) and $2^{n_X + n_Y}$ possible input key combinations, all but one being wrong for each ciphertext. This results in $2^{n_X + n_Y + 1}(2^{n_X + n_Y} - 1) \approx 2^{2n_X + 2n_Y + 1}$ random values being potentially generated through incorrect decryption. The probability that none of these random values has the string $0^p$ as suffix (let $\mathsf{Good}$ be the associated event) is given by:

\[
\mathbb{P}[\mathsf{Good}] \approx \bigl(1 - 2^{-p}\bigr)^{2^{2n_X + 2n_Y + 1}} \approx 1 - 2^{-p} \cdot 2^{2n_X + 2n_Y + 1}
\]

The first approximation comes from the aforementioned negligible probability that the encryption scheme is not a random permutation while the second stems from $p \gg n_X + n_Y$. This probability should be negligibly close to $1$ in $\eta$, in which case setting $p = \eta + 2(n_X + n_Y)$ is sufficient.

\qed
\end{proof}

\subsubsection{Proof of $\QPPT$-Security of Protocol \ref{Yao}}

\setcounter{tempresult}{\value{theorem}}
\setcounter{theorem}{\value{compteur:sec_yao}-1}
\begin{theorem}[$\QPPT$-Security of Yao's Protocol]
Consider a hybrid execution where the Oblivious Transfer is handled by a 
classical trusted third party.
%Let $\Pi_{OT}$ be a classical oblivious transfer protocol that is 
%$\epsilon_{OT}$-input-indistinguishable against $\QPPT$ Adversaries.
%in the Stand-Alone Model of \cite{HSS15} (it can simply be 
%input-indistinguishable instead of fully simulatable).
Let $(\Enc, \Dec)$ be a symmetric encryption scheme that is $\epsilon_{Sym}$-real-or-permutation-secure 
(Definition \ref{sec_sym}). Then Protocol 
\ref{Yao} is perfectly-secure against a $\QPPT$ adversarial Garbler (the 
Adversary's advantage is $0$) and $(2^{n_X + n_Y} - 1) 
\epsilon_{Sym}$-secure against $\QPPT$ adversarial Evaluator according to 
Definition \ref{sec-def}.
\end{theorem}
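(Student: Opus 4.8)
The plan is to prove the two cases separately inside the simulation paradigm of Definition~\ref{sec-def}, using that a $\QPPT$ execution measures every classical message, so the quantum embedding behaves exactly as the classical protocol and one can adapt the classical Yao arguments. In each case the corrupted party is treated as $\Pone^*$ and the Simulator (also $\QPPT$) runs the Adversary while impersonating all the interfaces it sees, namely the other party and the OT trusted third party. When the Garbler is corrupted it is the party holding the output, so the Simulator uses single-query access to Ideal Functionality~\ref{IdealF2PC}; when the Evaluator is corrupted it has no output (the honest Garbler holds it), so we instead invoke Lemma~\ref{ind-to-sec} and reduce to input-indistinguishability.

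For the adversarial Garbler I would build a Simulator that plays the OT Sender interface of every instance, thereby learning both Evaluator key pairs $(k_0^{E,i}, k_1^{E,i})$, and that also receives the garbled table, the directly-sent input keys $\{k_{x_i}^{G,i}\}$ and the auxiliary values. Holding both Evaluator keys together with the sent Garbler keys, it can correctly decrypt every entry on the Garbler's input row and recover the function $h(\tilde y) = f(x,\tilde y)\oplus k^z$. It then picks any pair $(x', c')$ with $f(x',\cdot)\oplus c' = h$ (the true $(x,k^z)$ being one such pair), submits $x'$ to the Ideal Functionality to obtain $f(x',\hat y)$, and returns to the Adversary the register holding $f(x',\hat y)\oplus c' \parallel 0^p$ together with the unchanged key register. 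Since $f(x',\hat y)\oplus c' = h(\hat y) = f(x,\hat y)\oplus k^z$, this is exactly the string the honest Evaluator would have returned; as the Garbler receives nothing else during the protocol, the two views coincide identically (the Simulator mirroring an $\Abort$ in the negligible-probability event that the real Evaluator aborts), giving advantage~$0$.

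For the adversarial Evaluator, which has no output while the honest Garbler holds it, Lemma~\ref{ind-to-sec} lets me replace the security goal by input-indistinguishability, so it suffices to show that the Evaluator's view is essentially independent of the honest Garbler's input. The argument is a hybrid over the garbled table: of its $2^{n_X+n_Y}$ entries, exactly one (the active entry for the true inputs) can be decrypted, while each of the remaining $2^{n_X+n_Y}-1$ is encrypted under at least one key the Evaluator never receives. Replacing these inactive entries one at a time by uniformly random strings, each step justified by the real-or-permutation security of Definition~\ref{sec_sym} applied to the outermost key absent from the Evaluator's view, changes the view by at most $\epsilon_{Sym}$ per step, for a total of $(2^{n_X+n_Y}-1)\epsilon_{Sym}$. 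In the final hybrid the active entry carries the value $f(x,\hat y)\oplus k^z$, which is a uniform bit since $k^z$ is secret and uniform, sitting in a uniformly random slot fixed by the shuffle $\pi$, while all other entries and the sent keys $\{k_{x_i}^{G,i}\}$ are independently uniform; this distribution does not depend on the Garbler's input, which establishes input-indistinguishability with the claimed bound and hence, by Lemma~\ref{ind-to-sec}, security with identical bound.

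The step I expect to be the main obstacle is the bookkeeping of this last hybrid. Each garbled entry is a nested encryption under all $n_X+n_Y$ input keys, and each key is reused across roughly half of the entries, so the reductions to the game $\Gamma$ of Definition~\ref{sec_sym} must be set up consistently: for a given inactive entry one identifies the first key (in the fixed encryption order) that the Evaluator lacks, replaces that encryption layer (whatever it wraps having already become, or being treated as, a random value), and must answer all queries under the same absent key through a single oracle instance so that repeated uses remain consistent. Checking that this ordering is well-defined, that the $0^p$ suffix of the single active entry does not accidentally produce a second valid decryption (which fails only with probability negligible in $p$), and that the per-entry contributions aggregate to exactly $(2^{n_X+n_Y}-1)\epsilon_{Sym}$ is the delicate part; everything else follows routinely once the classical structure is fixed.
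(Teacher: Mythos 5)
Your proposal is correct and follows essentially the same route as the paper's proof: for the corrupted Garbler, a Simulator that extracts a consistent input/mask pair from the Adversary's messages, makes a single query to the Ideal Functionality, and returns the correctly masked output $f(x,\hat y)\oplus k^z \parallel 0^p$ (advantage $0$ in the OT-hybrid model); for the corrupted Evaluator, a reduction to input-indistinguishability via Lemma~\ref{ind-to-sec} and a hybrid argument replacing the $2^{n_X+n_Y}-1$ undecryptable garbled entries by random values at cost $\epsilon_{Sym}$ each, with $k^z$ acting as a one-time pad on the single active entry. The only mechanical difference is in the Garbler case: the paper's Simulator plays the Receiver of the ideal OTs with a random input $\tilde y$, reads off the Honest-but-Curious Adversary's (classical) input $\hat x$, and recovers $k^z$ by decrypting the table, whereas yours impersonates the OT trusted party to learn both Evaluator key pairs and extracts any $(x',c')$ consistent with the decrypted row --- both produce identical Adversary views and the same bounds.
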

\setcounter{theorem}{\value{tempresult}}

\begin{proof}[$\QPPT$-Security of Yao's Protocol (Theorem \ref{sec_yao})]

In both cases (adversarial Garbler and Evaluator) we will construct a Simulator that runs the Adversary against the real protocol internally and show that the Adversary's advantage in distinguishing the real and ideal executions is negligible. First note that by the definition of the $\QPPT$ Adversary, there exists a $\BQP$ machine such that the two are indistinguishable in any network, meaning in particular in a classical network. This implies that the machine always sends and receives messages in the computational basis when in a quantum network.

\paragraph{Security against $\QPPT$ Garbler.} The Simulator works as follows:
\begin{enumerate}
\item During each OT, it performs the same interaction as an honest player would, but with a random value for the input $\tilde{y}_i$ of each OT.
\item The Adversary's machine then necessarily sends the Garbler's keys and the circuit in the computational basis.
\item This automatically fixes the value of the Adversary's input $\hat{x}$ (the Adversary being Honest-but-Curious, it has generated the keys correctly and sent the keys corresponding to its input). The Simulator can therefore measure the register containing the input of the Garbler to recover $\hat{x}$.
\item The Simulator then sends $\hat{x}$ to the Ideal Functionality computing the function $f$ and gets $f(\hat{x}, \hat{y})$ (for the actual value of the honest player's input $\hat{y}$).
\item The Simulator can compute the value $f(\hat{x}, \tilde{y})$ and decrypt the garbled table values to recover $f(\hat{x}, \tilde{y}) \oplus k^z$ using the keys that were giving to it through the OTs (for its ``fake" input $\tilde{y}$). It uses both values to recover $k^z$.
\item The Simulator then computes $f(\hat{x}, \hat{y}) \oplus k^z$ and sends this value to the Adversary.
\end{enumerate}

The only advantage of the Adversary in the real protocol compared to this ideal execution stems from its potential usage of the execution of the OT protocols for distinguishing the real and ideal world. This execution is ideal in the hybrid model and so the advantage of the Adversary is $0$. %The OT protocol is input-indistinguishable (as mentioned in the remark below Definition \ref{sec-def}, this is sufficient in the case of Adversaries with no output) with advantage $\epsilon_{OT}$ and so the total advantage of the Adversary compared to the Simulator is $n_Y \epsilon_{OT}$. This stems from the sequential composability of our model in the case of $\QPPT$ Adversaries (as a restriction on the model from \cite{HSS15}).

\paragraph{Security against $\QPPT$ Evaluator.} The messages sent to the adversarial Evaluator consist of $n_Y$ instances of OTs, $2^{n_X + n_Y}$ garbled table entries and the keys corresponding to the input of the honest player. The Simulator performs all these steps similarly to an honest Garbler but sends the keys corresponding to a randomly chosen input $\tilde{x}$. We can show through a series of games that this does not give any information to a computationally-bounded Evaluator (we show that the protocol is input-indistinguishable according to Definition \ref{inp-ind}, which as stated Lemma \ref{ind-to-sec} is equivalent since the Adversary has no output):

\begin{itemize}
\item \textbf{Game 0:} The Simulator performs Protocol \ref{Yao} with the Adversary, with random input $\tilde{x}$.
\item \textbf{Game 1:} In the execution of the OTs the Simulator replaces the values of the keys that are not chosen by the Adversary with random values (that were not used to compute any of the encryptions). The advantage in the real-world for the Adversary compared to this situation is $0$ since the execution of the OTs is perfectly-secure in the hybrid model. 
\item \textbf{Game 2:} The encryptions that use those (now random) keys can be replaced by random values, with a security cost of $\epsilon_{Sym}$ per replaced encryption (as the encryption can be considered to be random permutations without having access to the key). It is a double encryption, so for some values the Adversary may posses either the inner or the outer key. This means that it could invert one of the encryptions, but since it does not have the other this is meaningless.
\item \textbf{Game 3:} Finally, the key $k^z$ only appears in one encryption as a One-Time-Pad of the output of the computation (the others are now independent from it). It can therefore be replaced by an encryption of a random value, meaning that it is as well a random value (this is perfectly equivalent).
\end{itemize}

Finally, at the end of Game $3$, only the keys received through the OT remain and they are random values chosen independently from one another and from any input. The Adversary has no advantage in this scenario, meaning that the overall advantage is at most $(2^{n_X + n_Y} - 1) \epsilon_{Sym}$.

\qed
\end{proof}

The proof above shows that proving the security of some protocols does not require the Simulator to call the Ideal Functionality, in particular if the adversarial party does not have an output in the protocol. This is contrary to the usual simulation-based proofs, where the Simulator must extract the input of the Adversary to send it to the Ideal Functionality (for the sake of composition). However, the exact same proofs of security work in the Stand-Alone Framework of \cite{HSS15} if the Simulator does send the input value of the Adversary to the Ideal Functionality (any Adversary against a classical protocol in the Stand-Alone Framework is $\QPPT$ as well).

\section{Superposition Attack on the Modified Yao Protocol}

\subsection{Attack Description}
\label{app:form_att}
In \ref{Qembed} we first describe how the actions performed during the 
protocol are transcribed into quantum operations, then give the formal description 
of the superposition attack in two parts. 
In \ref{subsec:supp_gen} we first describe the 
actions of the Adversary during the protocol, resulting in a state that contains a 
superposition of inputs and outputs, 
%then in \ref{subsec:full_att} we describe 
followed by how 
it is used by the Adversary to break the protocol by extracting one bit of information 
from the honest player's input. Finally, in \ref{subsec:comments} elaborates on a few 
comments about the attack that are sketched in the main text.

\subsubsection{Quantum Embedding of the Classical Protocol.}
\label{Qembed}

The inputs of each party are stored in one register each, as $\ket{x}$ and $\ket{y}$ respectively. For each key $k$ that is created as part of the protocol, a different quantum register is initialised with the state $\ket{k}$ (there are therefore $n_Y$ registers for the Evaluator's keys and $n_X2^{n_X + n_Y}$ for the Garbler's keys due to the copies being generated). Similarly, for each value $E_i$ of the garbled tables, a quantum register is initialised with the value $\ket{E_i}$ (there are $2^{n_X + n_Y}$ such registers). The auxiliary values are also all stored in separate quantum registers. All of these values are encoded in the computational basis.

The OT trusted party works as described in the Ideal Functionality \ref{IdealOT} in Appendix \ref{app:OTP}. The inputs and outputs are considered to be pure quantum states in the computational basis (no superposition attack is allowed to go through the OT). Sending messages in the other parts of the protocol is modelled as taking place over perfect quantum channels (no noise is present on the channel and superpositions are allowed to pass undisturbed). A decryption of ciphertext $c$ using a key $k$ and auxiliary value $\mathrm{aux}$ is modelled using the Minimal Oracle Representation from Definition \ref{def_min_or} as $\mathsf{M}_{\Dec}\ket{k}\ket{\mathrm{aux}}\ket{c} = \ket{d_K(k)}\ket{d_A(\mathrm{aux})}\ket{\Dec_k(\mathrm{aux}, c)}$ on the states of the computational basis.

Checking whether the final $p$ bits are equal to $0$ corresponds to performing 
a measurement $\mathcal{M}_C$ on the corresponding register $\mathcal{P}$ in 
the basis $\qty{\dyad{0^p}, \Id_{\mathcal{P}} - \dyad{0^p}}$. If the 
measurement fails, the Evaluator applies the inverses of $d_K$ and $d_A$ to 
the registers containing respectively its keys and the auxiliary values so 
that they may be reused in the next decryption. Finally, the correction 
applied at the end which depends on the choice of key for wire $Z$ is modelled 
as classically controlled Pauli operators $\X^{k^z}$ (this corresponds to the 
quantum application of a classical One-Time-Pad and the value $k^z$ can be 
seen as internal classical values of the Garbler for simplicity).

For simplicity of notations, let $k_y^E := k_{y_1}^{E, 1} \parallel \ldots 
\parallel k_{y_{n_Y}}^{E, n_Y}$ for $y \in \{0, 1\}^{n_Y}$ (and similarly for 
$x \in \{0, 1\}^{n_X}$). Also, let $\widetilde{\Enc}$ be the sequential 
encryption by all keys corresponding to strings $x$ and $y$, using the set of 
auxiliary values $\widetilde{\mathrm{aux}} := \mathrm{aux}_1 \parallel \ldots 
\parallel \mathrm{aux}_{n_X + n_Y}$. Then $E_{x, y}^{k^z} = 
\widetilde{\Enc}_{k_x^G, k_y^E}(\widetilde{\mathrm{aux}}, f(x, y) \oplus c 
\parallel 0^p)$. Finally, $\widetilde{d_K}$ is the function applying $d_K$ to 
each key, and similarly for $\tilde{d}_A$.

\subsubsection{Formal Presentation of the Superposition Attack.}
\label{subsec:supp_gen}
\label{subsec:full_att}

We give the formal presentation of the Superposition Generation Procedure in 
Attack~\ref{Yao_att} %. 
and then of the full attack on the Modified Yao Protocol in 
Attack~\ref{Yao_full_att}. We show that this Adversary is Extended 
Honest-but-Curious (Definition~\ref{ext-hbc}) and therefore that its 
$\QPPT$ equivalent would not break Theorem \ref{sec_yao}.

\begin{attack}[ht!]
\caption{Superposition Generation Procedure on the Modified Yao Protocol.}
\label{Yao_att}
\begin{algorithmic}[0]
\STATE \textbf{Inputs:}
\begin{itemize}
\item The (adversarial) Garbler has as input the quantum state $\phi_{\mathit{inp}, G} = \ket{\widehat{x_0}} \otimes \ket{\widehat{x_1}}$, with $\widehat{x_0}, \widehat{x_1} \in \{0, 1\}^{n_X}$ received from Environment $\mathcal{Z}$ (this describes classically the superposition of inputs that it should use).
\item The (honest) Evaluator has as input $\hat{y} \in \{0, 1\}^{n_Y}$, received from Environment $\mathcal{Z}$.
\end{itemize}
\STATE \textbf{The Attack:}
\begin{enumerate}
\item The Garbler chooses uniformly at random the values $\qty{k_0^{G, i}, k_1^{G, i}}_{i \in [n_X]}$ and $\qty{k_0^{E, i}, k_1^{E, i}}_{i \in [n_Y]}$ and computes the \emph{initial} garbled tables $\mathit{GT}_{f}^{(X, Y, Z), 0} = \qty{E_{x, y}^0}_{x, y}$ and $\mathit{GT}_{f}^{(X, Y, Z), 1} = \qty{E_{x, y}^1}_{x, y}$ (where the index $0$ corresponds to $k^z = 0$ and similarly for $1$, or equivalently that the value encrypted is $f(x, y)$ in the first case and $f(x, y) \oplus 1$ in the second). This computation is the same as in the honest protocol (but done for both values of $k^z$). Note that there is no need to permute the values as they will be sent in superposition anyway.
\item The Garbler and Evaluator perform $n_Y$ interactions with the trusted third party performing the OT Ideal Functionality. At the end of all interactions, the Evaluator has a quantum register initialised in the state $\smashoperator{\bigotimes\limits_{i \in [n_Y]}}\ket{k_{\hat{y}_i}^{E, i}} = \ket{k_{\hat{y}}^E}$.
\item The Garbler sends the auxiliary values as it would in the original protocol. The corresponding state is $\ket{\widetilde{\mathrm{aux}}} = \smashoperator{\bigotimes\limits_{i = 1}^{n_X + n_Y}}\ket{\mathrm{aux}_i}$. For each key $k_x^G$ that it would send to the Evaluator, the Garbler instead sends a uniform superposition $\frac{1}{\sqrt{2}}\bigl(\ket{k_{\widehat{x_0}}^G} +\ket{k_{\widehat{x_1}}^G}\bigr)$. For each entry in the garbled table that it would send, it instead sends the following superposition over all garbled values:

\[
\ket{\mathit{GT}} = \frac{1}{\sqrt{2^{n_X + n_Y + 1}}}\smashoperator{\sum\limits_{x, y, k^z}} (-1)^{k^z}\ket{E_{x, y}^{k^z}}
\]

\item For each entry in the garbled table, the Evaluator proceeds as it would in the protocol, decrypting the ciphertexts sequentially, performing a measurement on register $\mathcal{P}$ in the basis $\qty{\dyad{0^p}, \Id_{\mathcal{P}} - \dyad{0^p}}$ and returning the corresponding output and the register containing the Garbler's keys if successful.
\item If the Evaluator is successful and returns a state after one of its measurements, the Garbler applies the following clean-up procedure:
\begin{enumerate}
\item For each register containing one of its keys, it applied the inverse of $d_K$.
\item For each index $i$ such that $\widehat{x_0}^i \neq \widehat{x_1}^i$, if there is an index $j$ such that $k_0^{G, i, j} \neq k_1^{G, i, j}$ and $k_0^{G, i, j} = 1$ it applies an $\X$ Pauli operation on the qubit containing this bit of the key.
\item The first register then contains a superposition of logical encodings of the inputs $\widehat{x_0}^{L'}$ and $\widehat{x_1}^{L'}$. The register containing the output is unchanged.
\end{enumerate}
\item It then sets these registers (called \emph{attack registers}) as its output, along with a register containing $\ket{\widehat{x_0}} \otimes \ket{\widehat{x_1}} \otimes \ket{\mathbf{L'}}$, with $\mathbf{L'}$ being a list of integers corresponding to the size of a given logical repetition encoding of the inputs (see the proof of Theorem \ref{correct_att}).
%\item If the Evaluator was not successful, the Adversary samples a bit $b$ (equal to $0$ with probability $p_{\mathit{Guess}}$, which corresponds to the optimal guessing probability and whose value is defined in the next subsection) sets $\ket{\Abort}\otimes\ket{b}$ as its final state. The value of the last qubit is never used in the successful case but is present for sake of consistency with the abort scenario.
\end{enumerate}
\end{algorithmic}
\end{attack}

%\subsubsection{Formal Presentation of the Full Attack on Yao's Protocol.}
%\label{subsec:full_att}

%We then give the formal presentation of the full attack on the Modified Yao Protocol in 
%Attack~\ref{Yao_full_att}. We show that this Adversary is Extended 
%Honest-but-Curious (Definition~\ref{ext-hbc}) and therefore that its 
%$\QPPT$ equivalent would not break Theorem \ref{sec_yao}.

\begin{attack}[ht!]
\caption{Full Superposition Attack on the Modified Yao Protocol.}
\label{Yao_full_att}
\begin{algorithmic}[0]
\STATE \textbf{The Attack:}
\begin{enumerate}
\item The Environment $\mathcal{Z}$ generates values $(\widehat{x_0}, \widehat{x_1}, \hat{y})$ and sends $\ket{\widehat{x_0}} \otimes \ket{\widehat{x_1}}$ to the Adversary. The values $(\widehat{x_0}, \widehat{x_1})$ are non-trivial in the sense that they do not uniquely determine the value of the output.
\item The Adversary applies the Superposition Generation Procedure described in Attack \ref{Yao_att}, using a superposition of keys for $\widehat{x_0}$ and $\widehat{x_1}$. If the Evaluator was not successful, the Adversary samples and outputs a bit $b$ (equal to $0$ with probability $p_{\mathit{Guess}}$, which corresponds to the optimal guessing probability and whose value is defined in the proof of Theorem \ref{sup_att_analysis}) and halts. 
%Otherwise, let $\mathcal{R}_{\mathit{out}}$ be a quantum register containing the outcome state of the procedure $\rho_{\mathit{out}}$ if the procedure terminated successfully.
\item Otherwise, the Adversary applies the following clean-up procedure on the output state of the Superposition Generation Procedure, similar to the one described in Attack \ref{Yao_att} (recall that the first register then contains a logical encoding of the inputs $\widehat{x_0}^{L'}$ and $\widehat{x_1}^{L'}$, obtained after the first clean-up procedure described in Attack \ref{Yao_att}):
\begin{enumerate}
\item If there is an index $j$ such that $\widehat{x_0}^j \neq \widehat{x_1}^j$ and $\widehat{x_0}^j = 1$, it applies a Pauli $\X$ operation on the qubits corresponding to the logical encoding of bit $j$ (each bit is encoded with a repetition code of varying length given by the list $\mathbf{L'}$).
\item The qubits corresponding to a value $j$ such that $\widehat{x_0}^j = \widehat{x_1}^j$ are unentangled from the rest of the state and so can be discarded.
\end{enumerate}
\item The result of the previous step is that the first register now contains a superposition of a logical encoding $0^L$ and $1^L$ (for another logical encoding $L$). The Adversary then applies a logical Hadamard gate $\Ha^L$ on this register.
\item The Adversary measures the first qubit in the computational basis and outputs the result.
\end{enumerate}
\end{algorithmic}
\end{attack}

\paragraph{Adversarial Behaviour.} Note that the Original Yao Protocol is 
secure against Honest-but-Curious Adversaries. The equivalent in terms of 
superposition attacks is to send exactly the same messages but computed over 
an arbitrary superposition of the randomness used by the Adversary (be it the 
inputs of other random values). That is to say that, if the honest party would 
have measured the state sent by the Adversary, it would recover perfectly 
honest classical messages. The Adversary described in Attack 
\ref{Yao_full_att} is not strictly Honest-but-Curious but we prove below that 
its classical-message equivalent does not break the security of the Modified Yao Protocol.

\setcounter{tempresult}{\value{lemma}}
\setcounter{lemma}{\value{compteur:lem:adv_beh}-1}
\begin{lemma}[Adversarial Behaviour Analysis]
The \emph{$\QPPT$-reduced machine} (Definition \ref{qppt_m}) corresponding to
the $\BQP$ Adversary described in Attack \ref{Yao_full_att} is \emph{Extended
Honest-but-Curious Adversary} (Definition \ref{ext-hbc}).
\end{lemma}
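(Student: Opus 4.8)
The plan is to make Definition \ref{qppt_m} explicit for the machine of Attack \ref{Yao_full_att}, exhibit a genuinely Honest-but-Curious adversary $\mathcal{A}'$ together with its simulator $\mathcal{S}'$ guaranteed by Theorem \ref{sec_yao}, and show that $\mathcal{S}'$ still fools the reduced machine once the $\Abort$ relaxation of Definition \ref{ext-hbc} is granted. First I would track the effect of measuring every outgoing register: the uniform key superposition $\frac{1}{\sqrt{2}}\bigl(\ket{k_{\widehat{x_0}}^G} + \ket{k_{\widehat{x_1}}^G}\bigr)$ collapses, independently for each copy, to the honest keys of a single input $\widehat{x_b}$ with $b$ uniform, and each slot of $\ket{\mathit{GT}}$ collapses to an independent uniform element of the precomputed set $\{E_{x,y}^{k^z}\}$; the OT steps are honest regardless, since the OT Ideal Functionality measures its inputs. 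Hence the reduced machine hands the Evaluator the honest input keys for a random $\widehat{x_b}$ together with a garbled table of i.i.d.\ uniform garbled values.

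For $\mathcal{A}'$ I would take the honest Garbler committing to $\widehat{x_b}$ and running the post-processing of Attack \ref{Yao_full_att} only afterwards; this is Honest-but-Curious, and Theorem \ref{sec_yao} supplies the perfect simulator $\mathcal{S}'$ that reads $\widehat{x_b}$ off the measured key register, queries the Ideal Functionality, and returns the output. The sole discrepancy between the reduced machine and $\mathcal{A}'$ is the substitution of the random table for the honest one. By real-or-permutation security (Definition \ref{sec_sym}) every entry that the Evaluator cannot open with its keys $\bigl(k_{\widehat{x_b}}^G, k_{\hat{y}}^E\bigr)$ is indistinguishable from a random ciphertext in both worlds, so the substitution is visible only through the decryptable entries: the honest table always contains exactly one and never aborts, whereas the random table contains a decryptable $E_{\widehat{x_b}, \hat{y}}^{k^z}$ only with probability $1 - e^{-1}$ by the Poisson count already used for Theorem \ref{correct_att}, and aborts otherwise.

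I would then split on this event. Conditioned on a decryptable entry being present, the Evaluator returns $f(\widehat{x_b}, \hat{y}) \oplus k^z$ and the Garbler keys exactly as in an honest run; because the uniform table is symmetric under $k^z \mapsto k^z \oplus 1$, the label $k^z$ of the opened entry is itself uniform, matching the honest Garbler's uniform choice, so $\mathcal{S}'$ reproduces the output distribution after its single query. Conditioned on no decryptable entry the Evaluator aborts, and the relaxation of Definition \ref{ext-hbc} lets $\mathcal{S}'$ emit $\Abort$; the abort is triggered only by the random table missing a value decryptable under a fixed key pair, an event of probability $e^{-1}$ that does not depend on $\hat{y}$. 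Assembling the two branches shows that $\mathcal{S}'$ meets Definition \ref{sec-def} for the reduced machine up to aborts, which is precisely the conclusion of Definition \ref{ext-hbc}.

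The step I expect to be the main obstacle is bounding what the reduced adversary can extract from its complete knowledge of the table, since it sampled every entry and holds every key it generated: conditioned on the table, both the abort flag and the returned bit are fixed functions of the unknown input $\hat{y}$ in the real execution, while in the simulation they are governed by fresh randomness. The leverage is that the table is drawn independently of $\hat{y}$, so the set of inputs admitting a decryptable entry is a random subset uncorrelated with $\hat{y}$ and the opened label $k^z$ is uniform. The delicate part is to convert these symmetries into the statement that any test separating the two joint distributions of (abort flag, returned bit, returned keys) must in effect single out $\hat{y}$, so that the distinguishing advantage collapses to order $2^{-n_Y}$ and the reduced machine is rendered provably harmless.
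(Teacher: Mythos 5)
Your first three paragraphs essentially retrace the paper's own route: you identify the reduced machine's messages correctly (independent per-copy collapse of the key superposition, i.i.d.\ uniform garbled entries, honest OT interaction), you reuse the Simulator of Theorem \ref{sec_yao} with the $\Abort$ relaxation of Definition \ref{ext-hbc}, and your observations that the abort probability is independent of $\hat{y}$ and that the opened $k^z$ is uniform both appear in the paper's argument. The problem is your final paragraph: what you flag there as ``the main obstacle'' rests on a premise that is wrong, and the resolution you hope for does not exist.

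You assume the reduced adversary ``sampled every entry and holds every key it generated'' and therefore has complete knowledge of which garbled value sits in which slot. It does not, and this is the crux of the whole lemma. Definition \ref{qppt_m} says the $\QPPT$-reduced machine applies the \emph{same unitaries} as the $\BQP$ machine and then measures its communication register; in Attack \ref{Yao_full_att} those unitaries prepare the superpositions $\frac{1}{\sqrt{2}}\bigl(\ket{k_{\widehat{x_0}}^G}+\ket{k_{\widehat{x_1}}^G}\bigr)$ and $\ket{\mathit{GT}}$ directly in the outgoing registers, unentangled from every register the adversary keeps, so the measurement outcomes leave no trace in its internal state. The paper's proof states this explicitly (``it does not store in memory which values have been sent''), and it is precisely why the Simulator's fake input $\tilde{y}$ is undetectable: the adversary cannot correlate the returned keys and output bit with slot contents it never recorded. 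Moreover, the bound you hoped to prove --- that even with full table knowledge the advantage collapses to order $2^{-n_Y}$ --- is false. An adversary that does record the per-slot data $(x_i, y_i, k^z_i)$ and the accompanying key choice learns, from which copy's registers come back after the Evaluator's padding measurement, the value $y_{i^*}$ of the decrypted slot; in the real execution this equals $\hat{y}$, in the ideal one it equals the random $\tilde{y}$, so an Environment that places $\hat{y}$ in the adversary's auxiliary input distinguishes with constant advantage. The paper says exactly this in the remark following its proof: the record-keeping variant ``can recover the full input of the Evaluator'' and ``is not simulatable after being translated into a $\QPPT$ machine''. So the delicate step is closed not by a finer probabilistic or symmetry argument, but by unpacking Definition \ref{qppt_m} for this particular adversary; with that correction, the remainder of your argument essentially coincides with the paper's proof.
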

\setcounter{lemma}{\value{tempresult}}

\begin{proof}[Adversarial Behaviour Analysis (Lemma \ref{lem:adv_beh})]
Attack \ref{Yao_full_att} is not strictly 
Honest-but-Curious since a player that measures honestly and tries to decrypt 
after can also fail with probability $1 - e^{-1}$ if it never gets the correct 
ciphertext in the table after measuring. % Note first that this is exactly the 
%same probability as the probability of the attack going through.
	The \emph{$\QPPT$-reduced machine} (Definition \ref{qppt_m}) corresponding 
to this $\BQP$ Adversary works as follows:
\begin{enumerate}
\item It generates all values for the garbled table (for both values of $k^z$).
\item For each garbled table entry that it is supposed to send, it instead chooses uniformly at random one of the generated values (with 
replacement) and a key for either $\widehat{x_0}$ or $\widehat{x_1}$ and sends them (it does not store in memory which values have been sent).
\item It then waits to see if the honest player has been able to decrypt one of the values or not.
\item If it has, then it receives (as classical messages) the key that was used to decrypt (either for $\widehat{x_0}$ or $\widehat{x_1}$) and the decrypted value.
\end{enumerate}

This Adversary is precisely an 
Extended Honest-but-Curious Adversary according to Definition \ref{ext-hbc} as 
the Simulator presented in the security proof of Theorem \ref{sec_yao} works 
as well for this Adversary, with the difference that with a probability of $1 
- e^{-1}$ it cannot recover the value of $k^z$ if it is unable to decrypt (but 
then this is also the case when interacting with an honest party) and so must 
abort. Since the Adversary does not store which values have been sent it does 
not know whether this value has been decrypted from the keys from the honest 
player or the Simulator (using a random input). On the other hand this action by 
the Simulator is necessary to simulate the probability that none of the keys decrypt 
correctly the garbled values (this happens with the same probability in the simulated 
and real executions)

\qed
\end{proof}

The core reason why the Honest-but-Curious Simulator works is that the Adversary's 
internal register is never entangled with the states that are sent to the 
honest party: much more efficient attacks exist in that case, for example the 
Adversary can recover the full input of the Evaluator if it keeps a register 
containing the index of the garbled table value, which collapses along with 
the output register when it is measured by the honest player while checking 
the padding, therefore revealing the input of the Evaluator. However this 
Adversary is not simulatable after being translated into a $\QPPT$ machine 
(and therefore this attack does not show a separation between the two scenarii 
as it would be similar to subjecting the protocol to a Malicious classical 
Adversary, that can trivially recover the honest player's input).

\subsubsection{Comments on the Superposition Attack.}
\label{subsec:comments}

We now give a few comments on the superposition generated by the Adversary, as 
described above and in Section~\ref{state_gen}. We first show how to generalize this 
generation for an (almost) arbitrary superposition. Then, we justify some of the 
choices made in the design of the variant of Yao's protocol 
(Section~\ref{prot_prez}).

\paragraph{Generalisation for Almost Separable Superpositions.} For binary function $f: \{0, 1\}^{n_X} \times \{0, 1\}^{n_Y} \rightarrow \{0, 1\}$ and $\hat{y}$, let $U_f^{\hat{y}}$ be the unitary defined through its action of computational basis states by $U_f^{\hat{y}}\ket{x}\ket{k^z} = \ket{x}\ket{f(x, y) \oplus k^z}$. The above procedure allows the Adversary to generate $U_f^{\hat{y}}\ket{\psi}\ket{\phi}$ for any states $\ket{\psi}$ (over $n_X$ qubits) and $\ket{\phi}$ (over one qubit) whose classical descriptions $\psi$ and $\phi$ are efficient (notice that the state $\ket{\psi}\ket{\phi}$ must be separable). The description of state $\psi$ is used to generate the superposition of keys (if an input appears in the superposition $\psi$, then the key corresponding to it should appear in the superposition of keys with the same amplitude) while $\phi$ is used when generating the superposition over garbled table entries (if $\ket{\phi} = \alpha\ket{0} + \beta\ket{1}$, the corresponding superposition over garbled values is $\ket{GT_{\alpha, \beta}} =  \smashoperator{\sum\limits_{x, y}}\alpha\ket{E_{x, y}^0} + \beta\ket{E_{x, y}^1}$). The same results and bounds are applicable (with similar corresponding proofs).

\paragraph{Justifying the Differences in the Protocol Variant.} We can now 
more easily explain the choices straying from the Original 
Yao Protocol mentioned in Section \ref{prot_prez}. The first remark 
is that the fact that the Garbler sends multiple 
copies of its keys is what allows the success probability to be constant and 
independent from the size of the inputs (see Theorem \ref{correct_att}). Otherwise it would decrease exponentially with the 
number of entries in the garbles table, which might not be too bad if it is a 
small constant (the minimum being $4$ for the naive implementation). On the 
other hand, returning the Garbler's keys to the Adversary is an essential part 
of the attack, as otherwise it would not be able to correct them (the final 
operations described in the full attack are all performed on these 
registers). If they stay in the hands of the Evaluator, it is unclear 
how the Adversary would perform the attack (as the state is then similar to 
the entangled one described in the introduction as something that we seek to avoid). 
Similarly, the fact that we do not use an IND-CPA secure symmetric encryption 
scheme is linked to the fact that it adds an additional register containing 
the randomness used to encrypt (for quantum notions of IND-CPA developed in 
\cite{BZ13} and \cite{MS16}), and which is then entangled with the rest of the 
state (this register can not be given back to the Adversary as it would break 
the security, even in the classical case, by revealing the index of the correctly decrypted 
garbled entry). On the other hand, in \cite{GHS16} they show that the notion 
of quantum IND-CPA they define is impossible for 
\emph{quasi-length-preserving} encryption scheme (which is equivalent to 
\emph{format-preserving} from Definition \ref{format-pres}). Finally, if we 
were to follow the same principle as in the original protocol and decompose 
the binary function into separate gates, then the intermediate keys would 
similarly add another register which is entangled with the rest of the state. 
This is why we require that the garbled table represents the whole function.

\subsection{Proof of Insecurity of the Modified Yao Protocol against Superposition 
Attacks}
\label{app:proof_insec}
We present here the proof of the Theorems from Section \ref{sec:attack}.

\subsubsection{Proof of Theorem \ref{correct_att}}
\label{subsec:proof_sgc}

\setcounter{tempresult}{\value{theorem}}
\setcounter{theorem}{\value{compteur:correct_att}-1}
%\begin{theorem}[State Generation Correctness]
%The state contained in the Garbler's attack registers at the end of a
%successful procedure described in Attack \ref{Yao_att} is negligibly close to
%$\frac{1}{2}\sum\limits_{x, k^z} (-1)^{k^z}\ket{x^L}\ket{f(x, \hat{y}) \oplus
%k^z}$, where $x^L$ is a logical encoding of $x$ and $x \in \{\widehat{x_0},
%\widehat{x_1}\}$.
%\end{theorem}

\begin{theorem}[State Generation Analysis]
The state contained in the Garbler's attack registers at the end of a 
successful Superposition Generation Procedure (Attack \ref{Yao_att}) is negligibly close to 
$\frac{1}{2}\sum\limits_{x, k^z} (-1)^{k^z}\ket{x^{L'}}\ket{f(x, \hat{y}) \oplus 
k^z}$, where $x^{L'}$ is a logical encoding of $x$ and $x \in \{\widehat{x_0}, 
\widehat{x_1}\}$. Its success probability is lower bounded by $1 - e^{-1}$ for all values of $n_X$ 
and $n_Y$.
\end{theorem}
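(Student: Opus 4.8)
The plan is to propagate the global state through the Superposition Generation Procedure (Attack~\ref{Yao_att}), handle the padding measurement and the two clean-up flips, and finally bound the probability that one of the $N := 2^{n_X + n_Y}$ sequential decryptions succeeds. First I would write the joint state of the registers involved in a single garbled-table entry just before decryption: the fresh key superposition $\frac{1}{\sqrt{2}}\bigl(\ket{k_{\widehat{x_0}}^G} + \ket{k_{\widehat{x_1}}^G}\bigr)$, the fixed OT output $\ket{k_{\hat{y}}^E}$, and the garbled-value superposition $\ket{\mathit{GT}}$. Applying the minimal decryption oracle $\mathsf{M}_{\Dec}$ (Definition~\ref{def_min_or}) with the Garbler-key register as the key sends each basis term $\ket{k_{x'}^G}\ket{E_{x,y}^{k^z}}$ to $\ket{d_K(k_{x'}^G)}\ket{\Dec_{k_{x'}^G, k_{\hat{y}}^E}(E_{x,y}^{k^z})}$, and by correctness of the scheme the decrypted value ends in $0^p$ exactly on the four matching terms with $x' = x \in \{\widehat{x_0}, \widehat{x_1}\}$, $y = \hat{y}$ and $k^z \in \{0, 1\}$.

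Second, I would show that the measurement $\{\dyad{0^p}, \Id_{\mathcal{P}} - \dyad{0^p}\}$ projects onto exactly these four terms up to negligible error. This is the quantitative heart of the argument and reuses the counting from the correctness proof (Theorem~\ref{prot_corr}): replacing $\Enc$ by a uniform permutation at a cost of $\epsilon_{Sym}$ (Definition~\ref{sec_sym}), every non-matching decryption is a uniform string whose last $p$ bits vanish only with probability $2^{-p}$, and a union bound over the $O(2^{2(n_X + n_Y)})$ non-matching terms keeps their total weight negligible once $p = \eta + 2(n_X + n_Y)$. Each of the four survivors keeps amplitude $1/\sqrt{2^{n_X + n_Y + 2}}$ while the success weight is $4/2^{n_X + n_Y + 2} = 1/N$; renormalising therefore rescales each survivor to $1/2$, and the phases $(-1)^{k^z}$ carry through untouched, yielding $\frac{1}{2}\sum_{x, k^z}(-1)^{k^z}\ket{d_K(k_x^G)}\ket{f(x, \hat{y}) \oplus k^z}$ once the constant $\ket{0^p}$ is dropped.

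Third, I would check that the clean-up of step~5 turns the key register into the logical encoding $\ket{x^{L'}}$. Undoing $d_K$ restores $\ket{k_x^G}$; then, for each input index $i$ with $\widehat{x_0}^i \neq \widehat{x_1}^i$, the positions $j$ on which $k_0^{G, i}$ and $k_1^{G, i}$ disagree are precisely the qubits distinguishing the two branches, and flipping exactly those with $k_0^{G, i, j} = 1$ forces the bit-value-$0$ branch to $0$ and the bit-value-$1$ branch to $1$ on all of them, producing a repetition code of length $\#\{j : k_0^{G, i, j} \neq k_1^{G, i, j}\}$ (the entry of $\mathbf{L'}$). Indices with $\widehat{x_0}^i = \widehat{x_1}^i$ carry the same constant key in both branches and factor out. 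This gives the claimed $\frac{1}{2}\sum_{x, k^z}(-1)^{k^z}\ket{x^{L'}}\ket{f(x, \hat{y}) \oplus k^z}$, with the events that two independent uniform keys collide, or agree on a differing input bit, absorbed into the negligible distance.

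Finally, for the success probability I would use that the Evaluator attempts the $N = 2^{n_X + n_Y}$ entries with a fresh key copy and a fresh garbled-value register each time, while the shared OT-key and auxiliary registers are restored after every failed measurement (per the quantum embedding of Appendix~\ref{Qembed}); the per-entry success events are thus independent with probability $q \ge 1/N$ (the matching weight $1/N$ plus a non-negative spurious contribution). Hence the probability of at least one success is $1 - (1 - q)^N \ge 1 - (1 - 1/N)^N$, and since $(1 - 1/N)^N$ increases to $e^{-1}$ from below, this is at least $1 - e^{-1}$ for every finite $N$, uniformly in $n_X, n_Y$. I expect the main obstacle to be the second step: making ``negligibly close'' rigorous means cleanly separating the genuine $0^p$-bearing matching terms from the exponentially many spurious ones, controlling their combined amplitude through both the $\epsilon_{Sym}$ permutation switch and the $2^{-p}$ suffix bound, and confirming that the interference structure survives on the four-dimensional post-measurement subspace; the remaining steps are essentially bookkeeping.
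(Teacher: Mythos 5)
Your proposal is correct and follows essentially the same route as the paper's proof: propagate the joint state of the key and garbled-table superpositions through the minimal decryption oracle, use the counting/pseudorandom-permutation argument from the correctness theorem to show the padding measurement projects (up to negligible error) onto the four matching terms with renormalised amplitude $\frac{1}{2}$, verify that the clean-up flips produce the repetition encoding $\ket{x^{L'}}$, and bound the success probability by $1 - (1 - 2^{-(n_X+n_Y)})^{2^{n_X+n_Y}} \geq 1 - e^{-1}$ via monotonicity of $(1-1/N)^N$. The paper's proof is organised in exactly these two parts (state correctness, then success probability via independence of the $2^{n_X+n_Y}$ attempts after restoring the shared key and auxiliary registers), so your argument matches it in both structure and substance.
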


We prove the two parts of Theorem \ref{correct_att} separately, first analysing the result of a successful execution and later computing the success probability of the procedure.

\begin{proof}[State Generation Correctness (Theorem \ref{correct_att}, part 1)]

The state in the registers of the Evaluator (with input $\hat{y} \in \{0, 1\}^{n_Y}$) before it starts the decryption process is (up to appropriate normalisation):

\[
\ket{\hat{y}} \otimes \ket{k^E_{\hat{y}}} \otimes \frac{1}{\sqrt{2}}\bigl(\ket{k_{\widehat{x_0}}^G} +\ket{k_{\widehat{x_1}}^G}\bigr) \otimes \ket{\widetilde{\mathrm{aux}}} \otimes \smashoperator[r]{\sum\limits_{x, y}}\ket{E_{x, y}^0} - \ket{E_{x, y}^1}
\]

In fact there are $2^{n_X + n_Y}$ registers containing the superposition of keys and the same number containing the superposition of encryptions, but it suffices to consider the result on one such register (the protocol has been specifically tailored so that repetitions can be handled separately, as seen in the next part of the proof). For $x \neq x'$ or $y \neq y'$ (inclusively), let $g_{x, y}^{x', y', k^z} = \widetilde{\Dec}_{k^G_{x}, k^E_{y}}(\widetilde{\mathrm{aux}}, E_{x', y'}^{k^z})$ (this is the decryption of $E_{x', y'}^{k^z}$ using the keys for $x$ and $y$, leading to a wrong decryption as at least one key does not match and generating the garbage value $g_{x, y}^{x', y', c}$). The state after applying the decryption procedure is then (for $x \in \{\widehat{x_0}, \widehat{x_1}\}$):

\[
\ket{C} \otimes \biggl(\sum\limits_{x, k^z} (-1)^{k^z}\ket{\widetilde{d_K}\qty(k^G_x)}\ket{f(x, \hat{y}) \oplus k^z}\ket{0}^{\otimes p} + \smashoperator{\sum\limits_{\substack{k^z, x, x', y' \\ (x, y) \neq (x', \hat{y})}}}(-1)^c\ket{\widetilde{d_K}\qty(k^G_x)}\ket{g_{x, \hat{y}}^{x', y', k^z}}\biggr)
\]

Here the registers containing the Garbler's keys have been rearranged and 
$\ket{C} = \ket{\hat{y}} \otimes 
\ket{\widetilde{d_K}\bigl(k^E_{\hat{y}}\bigr)} \otimes 
\ket{\widetilde{d_A}(\widetilde{\mathrm{aux}})}$ corresponds to the classical 
values unentangled from the rest of the state. With overwhelming probability 
in $\eta$ (based on the analysis from Theorem \ref{prot_corr}), there are no 
values $(r, c, x, x', y')$ such that $g_{x, \hat{y}}^{x', y', c} = r \parallel 
0^p$ and so the states 
\[\sum\limits_{x, k^z} 
(-1)^{k^z}\ket{\widetilde{d_K}(k^G_x)}\ket{f(x, \hat{y}) \oplus 
k^z}\ket{0}^{\otimes p}\] 
and 
\[\smashoperator{\sum\limits_{\substack{k^z, x, 
x', y' \\ (x, y) \neq (x', 
\hat{y})}}}(-1)^{k^z}\ket{\widetilde{d_K}(k^G_x)}\ket{g_{x, \hat{y}}^{x', y', 
k^z}}\]
are orthogonal. If the measurement $\mathcal{M}_C$ succeeds (ie. the 
outcome is $\dyad{0^p}$), the projected state is (also up to appropriate 
normalisation):

\[
\ket{C} \otimes \sum\limits_{x, k^z} (-1)^{k^z}\ket{\widetilde{d_K}(k^G_x)}\ket{f(x, \hat{y}) \oplus k^z}\ket{0}^{\otimes p}
\]

Note that the keys of the Evaluator and the auxiliary values are unentangled from the rest of the state during the whole process thanks to the properties satisfied by the symmetric encryption scheme. The state in the Garbler's registers after receiving the output and its keys is then simply:

\[
\sum\limits_{x, k^z} (-1)^{k^z}\ket{\widetilde{d_K}(k^G_x)}\ket{f(x, \hat{y}) \oplus k^z}
\]

After applying the first step of clean-up procedure at the end (applying the inverse of $d_K$ for each key), the Garbler is left with the state:

\[
\sum\limits_{x, k^z} (-1)^c\ket{k^G_x}\ket{f(x, \hat{y}) \oplus k^z}
\]

To demonstrate the effect of the rest of the clean-up procedure, we will apply it to an example with $k_0 = 01110$ and $k_1 = 11100$ (for an Adversary's input consisting of a single bit). The corresponding (non-normalised) superposition is then $\ket{k_0}\ket{f(0, \hat{y})} + \ket{k_1}\ket{f(1, \hat{y})} = \ket{01110}\ket{f(0, \hat{y})} + \ket{11100}\ket{f(1, \hat{y})}$ (the terms with $k^z = 1$ behave similarly). If the bits of key are the same, we can factor out the corresponding qubits (in this case, the second, third and fifth qubits are unentangled from the rest). This gives the state $\ket{110} \otimes (\ket{01}\ket{f(0, \hat{y})} + \ket{10}\ket{f(1, \hat{y})})$. The unentangled qubits may be discarded and then the qubits $i$ for which $k_0^i \neq k_1^i$ and $k_0^i = 1$ are flipped using $\X$ (meaning the fourth initial qubit in this case, or the second one after discarding the unentangled qubits). The result is $\ket{00}\ket{f(0, \hat{y})} + \ket{11}\ket{f(1, \hat{y})}$. This procedure does not depend on the choice of $\hat{y}$ (and is the same for $k^z = 1$), only on the keys that were generated by the Adversary.

In the general case, the final clean-up transforms each key associated with a bit-value of $0$ into a logical $0$ (ie. $0^{L'_i}$ for a random but known value $L'_i$), and similarly with the corresponding key associated to the bit-value $1$ (changed into $1^{L'_i}$ with the same $L'_i$). The final result is therefore (where $x^{L'}$ is a logical encoding of $x$ where some bits may be repeated a variable but known number of times):

\[
\frac{1}{2}\sum\limits_{x, k^z} (-1)^{k^z}\ket{x^{L'}}\ket{f(x, \hat{y}) \oplus k^z}
\]

This is exactly the state that was expected, therefore concluding the proof.
\qed
\end{proof}

%\subsubsection{Proof of Theorem \ref{att_prob}}
%\label{subsec:proof_spsg}
%
%\setcounter{tempresult}{\value{theorem}}
%\setcounter{theorem}{\value{compteur:att_prob}-1}
%\begin{theorem}[Success Probability of State Generation]
%The probability success of the Superposition Generation Procedure described in  
%Attack \ref{Yao_att} is lower bounded by $1 - e^{-1}$ for all values of $n_X$
%and $n_Y$.
%\end{theorem}
%\setcounter{theorem}{\value{tempresult}}
%

\begin{proof}[Success Probability of State Generation (Theorem \ref{correct_att}, part 2)]

If a given measurement fails, based on the analysis in the previous proof, the state in the Evaluator's registers corresponding to this decryption is negligibly close to:

\[
\ket{\hat{y}} \otimes \ket{\tilde{d}_K(k^E_{\hat{y}})} \otimes \ket{\tilde{d}_A(\widetilde{\mathrm{aux}})} \otimes \smashoperator{\sum\limits_{\substack{k^z, x, x', y' \\ (x, y) \neq (x', \hat{y})}}}(-1)^{k^z}\ket{\widetilde{d_K}(k^G_x)}\ket{g_{x, \hat{y}}^{x', y', k^z}}
\]

By applying the inverse of the $d_K$ and $d_A$ operations on each of the registers containing its keys and the auxiliary values, the Evaluator recovers the state:

\[
\ket{\hat{y}} \otimes \ket{k^E_{\hat{y}}} \otimes \ket{\widetilde{\mathrm{aux}}} \otimes \smashoperator{\sum\limits_{\substack{k^z, x, x', y' \\ (x, y) \neq (x', \hat{y})}}}(-1)^{k^z}\ket{\widetilde{d_K}(k^G_x)}\ket{g_{x, \hat{y}}^{x', y', k^z}}
\]

Unless it is the last remaining copy of the superposition of Garbler's keys and garbled values (in which case the attack has failed), the Evaluator can simply proceed and repeat the decryption process using its keys and the auxiliary values on the next copy (the failed decryption state is unentangled from the rest and can be ignored in the remaining steps). This essentially means that the Evaluator has $2^{n_X + n_Y}$ independent attempts to obtain measurement result $0^p$.

Since the states that are being considered are normalised and in a uniform superposition, the probability of success of each measurement attempt is simply given by the number of states correctly decrypted out of the total number of states.

There are $2^{n_X + n_Y + 1}$ encrypted values in the garbled table and $2$ key pairs (one key for wires in $Y$ and $2$ keys for wires in $X$). There are therefore $2^{n_X + n_Y + 2}$ decrypted values (taking into account decryptions performed with the incorrect keys and counting duplicates). For each key pair, there are exactly two ciphertexts which will decrypt correctly (one for each value of $k^z$), meaning that $4$ decrypted values out of $2^{n_X + n_Y + 2}$ have their last $p$ bits equal to $0$. The probability of the measurement $\mathcal{M}_C$ succeeding is therefore $\frac{1}{2^{n_X + n_Y}}$. The probability that no measurement succeeds in $2^{n_X + n_Y}$ independent attempts (noted as event $\mathsf{Fail}$) is given by:

\[
\mathbb{P}[\mathsf{Fail}] = \biggl(1 - \frac{1}{2^{n_X + n_Y}}\biggr)^{2^{n_X + n_Y}}
\]

The function $p(x) = (1 - \frac{1}{x})^x$ is strictly increasing and upper-bounded by $e^{-1}$, meaning that the success probability is $\mathbb{P}[\mathsf{Succ}] = 1 - \mathbb{P}[\mathit{Fail}] \geq 1 - e^{-1}$

\qed
\end{proof}

\subsubsection{Proof of Theorem \ref{sup_att_analysis}}
\label{subsec:proof_att_ana}

\setcounter{tempresult}{\value{theorem}}
\setcounter{theorem}{\value{compteur:sup_att_analysis}-1}
\begin{theorem}[Vulnerability to Superposition Attacks of the Modified Yao Protocol]
For any non-trivial two-party function $f : \{0, 1\}^{n_X} \times \{0, 
1\}^{n_Y} \rightarrow \{0, 1\}$, let $(\widehat{x_0}, \widehat{x_1})$ be a
pair of non-trivial values in $\{0, 1\}^{n_X}$ (they do not determine the
value of the output uniquely). For all inputs $\hat{y}$ of honest Evaluator in
Protocol \ref{Yao}, let $\PredE(\hat{y}) = f(\widehat{x_0}, \hat{y}) \oplus
f(\widehat{x_1}, \hat{y})$. Then there exists a real-world $\BQP$ Adversary
$\mathcal{A}$ against Protocol \ref{Yao} implementing $f$ such that for any
$\BQP$ Simulator $\mathcal{S}$, the advantage of the Adversary over the
Simulator in guessing the value of $\PredE(\hat{y})$ is lower-bounded by
$\frac{1}{2}(1 - e^{-1})$.
\end{theorem}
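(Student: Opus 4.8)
The plan is to instantiate the promised real-world Adversary $\mathcal{A}$ as the Full Superposition Attack (Attack \ref{Yao_full_att}) and to bound two quantities separately: the probability that $\mathcal{A}$ outputs $\PredE(\hat{y})$ correctly when interacting with the honest Evaluator, and the probability that any $\BQP$ Simulator $\mathcal{S}$ (equivalently, $\mathcal{A}$ run against $\mathcal{S}$ in the ideal execution) outputs $\PredE(\hat{y})$ correctly using only single-query access to the measuring Ideal Functionality \ref{IdealF2PC}. I would show the former is at least $1 - \tfrac12 e^{-1}$ and the latter at most $\tfrac12$, so that subtracting yields the advertised advantage $\tfrac12(1 - e^{-1})$.

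First I would analyse the real execution. By Theorem \ref{correct_att} the Superposition Generation Procedure succeeds with probability $p_s \ge 1 - e^{-1}$ and, conditioned on success, leaves the attack registers negligibly close to $\frac{1}{2}\sum_{x, k^z}(-1)^{k^z}\ket{x^{L'}}\ket{f(x,\hat{y}) \oplus k^z}$. The clean-up of steps~3--4 of Attack \ref{Yao_full_att} rewrites the logical encodings of $\widehat{x_0}, \widehat{x_1}$ into logical $0^L$ and $1^L$ and discards the coincident wires, producing, up to global phase, exactly the Deutsch--Jozsa state $\frac{1}{\sqrt{2}}\bigl(\ket{0} + (-1)^{b_0 \oplus b_1}\ket{1}\bigr) \otimes \ket{-}$ with $b_i := f(\widehat{x_i}, \hat{y})$. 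Since the logical Hadamard $\Ha^L$ maps $\ket{+_L}\mapsto\ket{0^L}$ and $\ket{-_L}\mapsto\ket{1^L}$, the final computational-basis measurement returns $b_0 \oplus b_1 = \PredE(\hat{y})$ with certainty. On the failure event (probability at most $e^{-1}$) the Adversary outputs an independent guess, correct with probability $\tfrac12$, so $\Pr[\mathcal{A}\text{ outputs }\PredE(\hat{y})] \ge p_s + (1-p_s)\tfrac12 = \tfrac12 + \tfrac{p_s}{2} \ge 1 - \tfrac12 e^{-1}$. This part is essentially assembling Theorem \ref{correct_att} with the Deutsch--Jozsa post-processing recalled in Appendix \ref{app:more_q}.

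Next I would bound the Simulator, and this is where the argument is information-theoretic and hence holds for every $\BQP$ $\mathcal{S}$ regardless of its internal power. The Ideal Functionality measures its input register in the computational basis before evaluating $f$ and grants only one query, so the Simulator's entire dependence on the honest input $\hat{y}$ is summarised by the single value $f(\tilde{x}, \hat{y})$ for the one measured input $\tilde{x}$ it submits; the rest of its view (including the publicly provided $\widehat{x_0},\widehat{x_1}$) is independent of $\hat{y}$. Recovering $\PredE(\hat{y}) = b_0 \oplus b_1$ requires both $b_0$ and $b_1$, whereas a single measured query exposes at most one of them. To turn this into the clean bound $\Pr[\mathcal{S}\text{ outputs }\PredE(\hat{y})] \le \tfrac12$, I would have the Environment draw $\hat{y}$ from a distribution on which $\PredE$ is balanced and uncorrelated with every single-point evaluation $f(\tilde{x},\cdot)$, which is what the non-triviality of $(\widehat{x_0},\widehat{x_1})$ (namely $\PredE$ non-constant) is there to guarantee; the unrevealed bit then stays uniform from the Simulator's viewpoint.

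Combining the two estimates gives an advantage of at least $\bigl(1 - \tfrac12 e^{-1}\bigr) - \tfrac12 = \tfrac12(1 - e^{-1})$, a positive constant, which immediately contradicts Definition \ref{sec-def} for any $\epsilon(\eta) = o(1)$. I expect the main obstacle to be precisely the Simulator bound: one must rule out that a cleverly chosen correlated query $\tilde{x} \notin \{\widehat{x_0},\widehat{x_1}\}$ lets $\mathcal{S}$ reproduce $\PredE(\hat{y})$ without seeing $\hat{y}$, and this is exactly the point where the non-triviality hypothesis on $(\widehat{x_0},\widehat{x_1})$ together with a suitable randomised Environment distribution over $\hat{y}$ must be used carefully, in contrast with the comparatively routine real-execution calculation.
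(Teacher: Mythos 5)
Your overall route coincides with the paper's: you instantiate $\mathcal{A}$ as Attack~\ref{Yao_full_att}, invoke Theorem~\ref{correct_att} together with the Deutsch--Jozsa post-processing to argue that in the real execution $\mathcal{A}$ outputs $\PredE(\hat{y})$ with probability at least $1-\frac{1}{2}e^{-1}$, bound every Simulator by $\frac{1}{2}$, and subtract; your bookkeeping $\bigl(p_s + (1-p_s)\frac{1}{2}\bigr) - \frac{1}{2} = \frac{p_s}{2}$ is arithmetically identical to the paper's $p_{\mathit{Gen}}(1-p_{\mathit{Guess}})$ with $p_{\mathit{Guess}}=\frac{1}{2}$. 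The real-execution half of your argument is correct and is essentially the paper's.

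The gap is exactly where you predicted it, but your proposed fix does not work. You claim that non-triviality of $(\widehat{x_0},\widehat{x_1})$ guarantees an Environment distribution on $\hat{y}$ under which $\PredE(\hat{y})$ is balanced \emph{and} uncorrelated with every single-point evaluation $f(\tilde{x},\cdot)$. This is false in general. Take the inner product $f(x,y)=\bigoplus_i x_i y_i$: by linearity, $\PredE(\hat{y}) = f(\widehat{x_0},\hat{y})\oplus f(\widehat{x_1},\hat{y}) = f(\widehat{x_0}\oplus\widehat{x_1},\hat{y})$, so the single measured query $\tilde{x}=\widehat{x_0}\oplus\widehat{x_1}$ hands the Simulator $\PredE(\hat{y})$ exactly, under \emph{every} distribution on $\hat{y}$; yet any pair with $\widehat{x_0}\neq\widehat{x_1}$ is non-trivial for this $f$. (Even queries inside $\{\widehat{x_0},\widehat{x_1}\}$ can suffice: if the row $f(\widehat{x_1},\cdot)$ is identically $0$ and $f(\widehat{x_0},\cdot)$ is non-constant, then $f(\widehat{x_0},\hat{y})=\PredE(\hat{y})$.) For such $f$ a Simulator wins with probability $1$ and the advantage collapses to $0$, so no argument from non-triviality alone can yield $p_{\mathit{Guess}}\le\frac{1}{2}$. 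You should be aware that the paper's own proof does not resolve this either: it asserts without justification that the Simulator's only information is $f(\widehat{x_0},\hat{y})$ or $f(\widehat{x_1},\hat{y})$ and that this permits nothing better than a coin toss, and then runs the matching-pennies equilibrium argument; both steps fail on the examples above. The bound is sound for instantiations such as the OT of Appendix~\ref{sec:OT}, where each individual output $x_b$ is independent of $x_0\oplus x_1$ under uniform inputs, but in full generality the theorem requires a stronger hypothesis on $(f,\widehat{x_0},\widehat{x_1})$ --- for instance, that there exists an Environment distribution on $\hat{y}$ making $\PredE(\hat{y})$ balanced while no fixed query $\tilde{x}$ makes $f(\tilde{x},\hat{y})$ correlated with it. Your instinct to worry about correlated queries was the right one; the resolution you propose, like the paper's silence on the issue, leaves the Simulator bound unproven as stated.
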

\setcounter{theorem}{\value{tempresult}}

\begin{proof}[Vulnerability to Superposition Attacks of the Modified Yao Protocol (Theorem \ref{sup_att_analysis})]
Let $(\widehat{x_0}, \widehat{x_1})$ be a pair of values in $\{0, 1\}^{n_X}$ such that there exists $(\widehat{y_0}, \widehat{y_1})$ with $f(\widehat{x_0}, \widehat{y_0}) = f(\widehat{x_1}, \widehat{y_0})$ and $f(\widehat{x_0}, \widehat{y_1}) \neq f(\widehat{x_1}, \widehat{y_1})$ (at least one such pair of inputs exists, otherwise the function is trivial). The Environment $\mathcal{Z}$ initialises the input of the Adversary with values a pair of such values $\widehat{x_0}$ and $\widehat{x_1}$. Let $\hat{y} \in \{0, 1\}^{n_Y}$ be the value of the honest player's input chosen (uniformly at random) by the Environment $\mathcal{Z}$. The goal of the attack is to obtain the value of $\PredE(\hat{y}) = f(\widehat{x_0}, \hat{y}) \oplus f(\widehat{x_1}, \hat{y})$.

The Adversary will try to generate the superposition state during the protocol using Attack \ref{Yao_att}, succeeding with probability $p_{\mathit{Gen}}$. If the state has been generated correctly, Adversary will apply the final steps of Deutsch's algorithm and recover the value of the XOR with probability equal to $1$ (see below). If the state generation fails, Adversary resorts to guessing the value of the value of $\PredE(\hat{y})$, winning with a probability $p_{\mathit{Guess}}$. On the other hand, the Simulator is only able toss a coin to guess the value of $\PredE(\hat{y})$ (the only information that it possesses is either $f(\widehat{x_0}, \hat{y})$ or $f(\widehat{x_1}, \hat{y})$, given by the Ideal Functionality), winning with probability $p_{\mathit{Guess}}$.

The overall advantage of the Adversary is therefore $p_{\mathit{Gen}} \cdot (1 - p_{\mathit{Guess}})$ (if the State Generation Procedure does not succeed, the probabilities of winning of the Adversary and the Simulator are the same). It has been shown via Theorem \ref{correct_att} that the probability of generating the state is lower-bounded by $1 - e^{-1}$, the rest of the proof will focus on describing the last steps of the Attack \ref{Yao_full_att} and calculating the other values defined above. 

We first analyse the behaviour of the state during the Adversary's calculation in Attack \ref{Yao_full_att} if there was no $\Abort$. The state in the register of the Adversary registers at the end of a successful Superposition Generation Procedure via Attack \ref{Yao_att} is (the logical encoding $L'$ being known to the Adversary):

\begin{multline}
%\[
\frac{1}{2}\Bigl(\ket{\widehat{x_0}^{L'}}\ket{f(\widehat{x_0}, \hat{y})} - \ket{\widehat{x_0}^{L'}}\ket{f(\widehat{x_0}, \hat{y}) \oplus 1} \\
+ \ket{\widehat{x_1}^{L'}}\ket{f(\widehat{x_1}, \hat{y})} - \ket{\widehat{x_1}^{L'}}\ket{f(\widehat{x_1}, \hat{y}) \oplus 1}\Bigr)
%\]
\end{multline}

The Adversary applies the clean-up procedure on the registers containing $\widehat{x_i}^{L'}$ and obtains (for a different value $L$ for the logical encoding):

\[
\frac{1}{2}\bigl(\ket{0}^{\otimes L}\ket{f(\widehat{x_0}, \hat{y})} - \ket{0}^{\otimes L}\ket{f(\widehat{x_0}, \hat{y}) \oplus 1} + \ket{1}^{\otimes L}\ket{f(\widehat{x_1}, \hat{y})} - \ket{1}^{\otimes L}\ket{f(\widehat{x_1}, \hat{y}) \oplus 1}\bigr)
\]

This is exactly the state of Deutsch's algorithm after applying the (standard) oracle unitary implementing $U_{f_{\hat{y}}^{\widehat{x_0}, \widehat{x_1}}}$, where $f_{\hat{y}}^{\widehat{x_0}, \widehat{x_1}}(b) = f(\widehat{x_b}, \hat{y})$ (by standard we mean of the form $U_f\ket{x}\ket{b} = \ket{x}\ket{b \oplus f(x)}$, in comparison to the Minimal Oracle Representation). The rest of the attack and analysis follows the same pattern as Deutsch's algorithm.

For simplicity's sake, let $b_i := f(\widehat{x_i}, \hat{y})$, then the state is:

\[
\frac{1}{\sqrt{2}}(-1)^{b_0}\bigl(\ket{0}^{\otimes L} + (-1)^{b_0 \oplus b_1}\ket{1}^{\otimes L}\bigr)\otimes\ket{-}
\]

The Adversary then applies the logical Hadamard gate, the resulting state is (up to a global phase):

\[
\ket{b_0 \oplus b_1}^{\otimes L}\otimes\ket{-}
\]

The Adversary can measure the first qubit in the computational basis and distinguish perfectly both situations, therefore obtaining $f(\widehat{x_0}, \hat{y}) \oplus f(\widehat{x_1}, \hat{y}) = b_0 \oplus b_1$.

On the other hand, in the ideal scenario, to compute the probability of guessing the correct answer $p_{\mathit{Guess}}$, we consider the mixed strategies in a two-player game between the Environment $\mathcal{Z}$ and the Simulator where both players choose a bit simultaneously, the Simulator wins if they are the same and the Environment wins if they are different (this represents the most adversarial Environment for the Simulator). The Environment chooses bit-value $0$ with probability $p$, while the Simulator chooses the bit-value $0$ with probability $q$. The probability of winning for the Simulator is then $p_{\mathit{Guess}} = pq + (1-p)(1-q) = 1 - q - p(1 - 2q)$. We see that if $q \neq \frac{1}{2}$ there is a pure strategy for the Environment such that $p_{\mathit{Guess}} < \frac{1}{2}$ (if the Simulator chooses its bit in a way that is biased towards one bit-value, the Environment always chooses the other), while if $q = \frac{1}{2}$ then $p_{\mathit{Guess}} = \frac{1}{2}$. The same analysis can be applies to the Environment and therefore $p = \frac{1}{2}$ as well.

In the end, we have $p_{\mathit{Guess}} = \frac{1}{2}$ and therefore the advantage of the Adversary is $\mathsf{Adv} = p_{\mathit{Gen}} (1 - p_{\mathit{Guess}}) \geq \frac{1}{2}(1 - e^{-1})$, which concludes the proof.

\qed
\end{proof}

As a remark, the inverse of the circuit preparing the GHZ state can be applied as the final step of the attack instead of the logical Hadamard $\Ha_L$, yielding the same result (the state is then $\ket{b_0 \oplus b_1} \otimes \ket{0}^{\otimes L-1}$ and measuring the first qubit in the computational basis still gives the correct value). The presentation above was chosen to closely reflect the description of Deutsch's algorithm.

\subsection{Formal Superposition-Secure Yao Protocol}
\label{app:sup_sec_yao}
We give here a sketch of the formal Superposition-Secure Yao Protocol \ref{Yao_sup}, along with a proof of its security against an adversarial Garbler with superposition access.

\begin{algorithm}[H]
\caption{Superposition-Secure Yao Protocol (Sketch).}
\label{Yao_sup}
\begin{algorithmic}[0]
\STATE \textbf{Input:} The Garbler and Evaluator have inputs $x \in \{0, 1\}^{n_X}$ and $y \in \{0, 1\}^{n_Y}$ respectively.
%\begin{itemize}
%\item The Garbler has as input $x \in \{0, 1\}^{n_X}$.
%\item The Evaluator has as input $y \in \{0, 1\}^{n_Y}$.
%\end{itemize}
\STATE \textbf{Output:} The Garbler has no output, the Evaluator has one bit of output.
\STATE \textbf{Public Information:} The function $f$ to be evaluated, the encryption scheme $(\Enc, \Dec)$ and the size of the padding $p$.
\STATE \textbf{The Protocol:}
\begin{enumerate}
\item The Garbler creates the keys and garbled table as in the original version.
\item The Garbler and the Evaluator participate in the OT ideal executions, at the end of which the Evaluator receives its evaluation keys for its input of choice.
\item The Garbler sends the evaluation keys for its inputs and stops (it has no further actions and no output)
\item The Evaluator decrypts each entry in the garbled table sequentially, stopping if the padding is $0^p$ at the end of a decrypted value. The rest of the decrypted value is then set as its output.
\item Otherwise (if it has not stopped at the previous step, meaning that none of the values were decrypted correctly), it sets as its output $\Abort$. The fact that it has aborted or not is not communicated to the Garbler
\end{enumerate}
\end{algorithmic}
\end{algorithm}

\setcounter{tempresult}{\value{theorem}}
\setcounter{theorem}{\value{compteur:Yao-sup-sec}-1}
\begin{theorem}[$\BQP$-Security of Superposition-Resistant Yao Protocol]
%Consider an hybrid execution where the Oblivious Transfer is handled by a 
%classical trusted third party. 
%Let $(\Enc, \Dec)$ be a correct symmetric 
%encryption scheme. 
The Superposition-Resistant Yao Protocol \ref{Yao_sup} is perfectly-secure against a 
$\BQP$ adversarial Garbler according to Definition \ref{sec-def} in an OT-hybrid execution.
%, where $n_Y$ is the size of the Evaluator's input.
\end{theorem}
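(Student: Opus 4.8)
The plan is to reduce the statement to \emph{input-indistinguishability} and then recycle the no-signalling argument used for the One-Time-Pad (Lemma~\ref{sec-otp}). In Protocol~\ref{Yao_sup} the adversarial Garbler plays the role of $\Pone^*$ and has \emph{no output}, so Lemma~\ref{ind-to-sec} applies: it suffices to show that the protocol is $0$-input-indistinguishable for $\Pone^*$ in the sense of Definition~\ref{inp-ind}, i.e.\ that for any two honest-Evaluator inputs $y_1, y_2 \in \{0,1\}^{n_Y}$ the final reduced state of the Garbler is identical. Security with advantage $\epsilon = 0$, and in particular the existence of a Simulator meeting Definition~\ref{sec-def}, then follows from the construction in the proof of Lemma~\ref{ind-to-sec}: the Simulator runs the Adversary internally, plays the honest Evaluator's role with an arbitrary input $\tilde y$ (supplying $\tilde y$ to the OT as Receiver and sending nothing back to the Garbler), and never needs to query the Ideal Functionality.

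First I would fix the quantum embedding of Protocol~\ref{Yao_sup}. The Garbler generates its keys and garbled table (possibly coherently, retaining a reference register inside $\Regone$), feeds its key pairs $(k_0^{E,i}, k_1^{E,i})$ into the OT as Sender, sends the garbled table together with its input keys to the Evaluator, and then \emph{halts} without ever receiving anything back. The honest Evaluator supplies its choice bits $y_i$ to the OT, receives $k_{y_i}^{E,i}$, and afterwards performs only \emph{local} operations on $\Regtwo$: sequential decryption via the minimal oracle, the padding measurement $\mathcal{M}_C$ on register $\mathcal{P}$, and --- crucially --- a \emph{silent} abort that is never signalled to the Garbler. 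The central structural observation, exactly as in the One-Time-Pad, is that all communication is one-directional, from Garbler to Evaluator: the OT returns nothing to its Sender, and the Evaluator returns neither the output register nor the success/abort flag.

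The core of the argument is then the No-Communication Theorem. Every $y$-dependent operation the honest Evaluator performs --- the OT's routing of $k_{y_i}^{E,i}$ as well as the subsequent decryption, measurement, and silent abort --- can be collected into a single CPTP map $\mathcal{E}_y$ acting solely on the Evaluator-side registers (the transmitted registers, which have left the Garbler's hands, together with $\Regtwo$), a system disjoint from the registers retained by the Garbler. Since the Garbler gets nothing back, its entire final view is precisely the reduced state $\mathrm{Tr}_{\mathrm{Eval}}[\cdot]$ on $\Regone$. Because $\mathcal{E}_y$ touches no subsystem of $\Regone$, this reduced state is invariant under $\mathcal{E}_y$, hence independent of $y$, which gives input-indistinguishability with $\epsilon = 0$. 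Any ability to distinguish $y_1$ from $y_2$ would amount to signalling from the Evaluator to the Garbler through purely local operations, violating the no-signalling principle.

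The main obstacle I anticipate is treating the OT rigorously, since it is the one two-way interaction in the protocol. I would verify that, \emph{restricted to the Garbler's side}, the OT is a fixed, $y$-independent operation: Ideal Functionality~\ref{IdealOT} measures the Sender's key pairs (collapsing any entanglement the Garbler created with $\Regone$, but with an effect that does not depend on the choice bit) and writes the selected key only into the Evaluator's register, returning nothing to the Sender. Thus the OT's sole $y$-dependent contribution lies inside $\mathcal{E}_y$ on the Evaluator's side and cannot induce a $y$-dependence on $\Regone$. I would also make explicit the point flagged in the footnote after Ideal Functionality~\ref{IdealF2PC}: the whole argument hinges on there being \emph{no} post-transmission communication back to the Garbler, which is exactly why hiding the abort is indispensable --- reinstating such a return message (as in the Modified Protocol~\ref{Yao}) reopens the channel, and as Theorem~\ref{sup_att_analysis} shows, the superposition attack then goes through.
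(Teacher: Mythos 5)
Your proposal is correct and takes essentially the same route as the paper: the paper's proof likewise observes that the classical ideal OT leaks nothing to the Sender, reduces the remainder of Protocol~\ref{Yao_sup} to the one-way structure of the One-Time-Pad (Garbler sends a state, Evaluator acts locally and silently), and invokes the no-signalling principle to conclude a $0$ advantage. Your additional scaffolding --- the explicit reduction through Lemma~\ref{ind-to-sec} and the verification that the OT's action on the Garbler's side is $y$-independent --- merely makes rigorous what the paper leaves implicit, and is the same route the paper itself endorses in the proof of Lemma~\ref{sec-otp}.
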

\setcounter{theorem}{\value{tempresult}}

\begin{proof}[$\BQP$-Security of Superposition-Resistant Yao Protocol (Theorem \ref{Yao_sup_sec})]

The Garbler cannot break the security of the OT ideal execution, which 
furthermore is classical. The rest of the protocol can be summarised by the 
Garbler sending one quantum state and then the Evaluator performing a local 
operation on it and stopping. This is exactly the same scenario as in the 
One-Time Pad protocol and the same analysis applies in this case: the Garbler 
recovering any information from this operation would violate the no-signalling 
principle. The resulting security bound is therefore $0$.

\qed
\end{proof}

The proof above does not translate into a proof for an actual instance of the protocol since security in our model does not hold under sequential composability, but it gives a hint as to which steps are crucial for securing it. Another path for obtaining security could be to replace the encryption scheme with one for which there is no efficient Minimal Oracle Representation. We leave this case as an open question.

\section{Attack Optimisation and Application to Oblivious Transfer}
%\section{Application: Superposition Attack of Oblivious Transfer's Protocol}
\label{sec:OT}
The attack described in Section \ref{sec:attack} will now be applied to a simple function, namely the 1-out-of-2 bit-OT, in order to demonstrate a potential improvement. In this case, the Garbler has a bit $b$ as input, the Evaluator has two bits $(x_0, x_1)$ and the output for the Garbler is $x_b$. This can be represented by the function $\mathcal{OT}(b, x_0, x_1) = bx_1 \oplus (1 \oplus b)x_0$. This can be factored as $\mathcal{OT}(b, x_0, x_1) = b(x_0 \oplus x_1) \oplus x_0$. By changing variables and defining $X := x_0 \oplus x_1$, it can be rewritten further into $\mathcal{OT}(b, x_0, X) = bX \oplus x_0$.

Based on this simplified formula, instead of computing the garbled table for the full function, the Garbler will only garble the AND gate between $b$ and $X$. In order to compute the XOR gate at the end, the Free-XOR technique will be used. Recall first that the key-space is fixed to $\mathfrak{K} = \{0, 1\}^{n_K}$. Instead of choosing both keys for each wire uniformly at random, this technique works by choosing uniformly at random a value $K \in \{0, 1\}^{n_K}$ and setting $k_1^w := k_0^w \oplus K$ for all wires $w$ which are linked to the XOR gate (either as input or output wires). The value $k_0^w$ is sampled uniformly at random for the input wires. For the output wire, if $a$ and $b$ are the labels of the input wires, the value is set to $k_0^w = k_0^a \oplus k_0^b$. In this way, instead of going through the process of encrypting and then decrypting a garbled table, given a key for each input of a XOR gate, the Evaluator can directly compute the output key in one string-XOR operation (as an example, if the keys recovered as inputs for the input wires are $k_0^a$ and $k_1^b$, then the output key is computed as $k_0^a \oplus k_1^b = k_0^a \oplus k_0^b \oplus K = k_0^w \oplus K = k_1^w$, which is the correct output key value for inputs $a = 0$ and $b = 1$). The security of Yao's protocol using the Free-XOR technique derives from the fact that only one value for the keys is known to the evaluator at any time, so the value $K$ is completely hidden (if the encryption scheme is secure). This has been first formalised in \cite{KS08}.

After having decrypted the garbled table for the AND gate, the Evaluator simply performs the XOR gate using the Free-XOR technique. Without loss of generality the XOR of the keys is performed into the register containing the key corresponding to the output of the AND gate. In the quantum case, this is done using a $\CNOT$ gate, where the control qubit is the register containing the keys for $x_0$ and the controlled qubit is the register containing the output of the decryption of the garbled AND gate (the key for $x_0$ is not in superposition as it belongs to the Evaluator and so the register containing it remains unentangled from the rest on the state).

The initial input to the garbled table is $3$ bits long in the decomposed protocol, while the input to the AND gate is only $2$ bits long, lowering the number of pre-computations to generate the garbled table and improving slightly the attack's success probability (it is a decreasing function of the number of possible inputs).

The probability of successfully generating the attack superposition $\frac{1}{2}\bigl(\ket{0}^{\otimes L}\ket{x_0} - \ket{0}^{\otimes L}\ket{x_0 \oplus 1} + \ket{1}^{\otimes L}\ket{x_1} - \ket{1}^{\otimes L}\ket{x_1 \oplus 1}\bigr)$ by using this new technique is $1 - \Bigl(\frac{3}{4}\Bigr)^4 = \frac{175}{256}$ (by not using the approximation at the end of the proof of part 2 of Theorem \ref{correct_att} for success probability). As described in Theorem \ref{sup_att_analysis}, such a superposition can been used to extract the XOR of the two values, an attack which is impossible in the classical setting or even in the quantum setting without superposition access. The advantage of the Adversary in finding the XOR (over a Simulator which guesses the value) by using this attack is $\frac{175}{512}$. This is far from negligible and therefore the security property of the OT is broken.

Of course this is a toy example as it uses two \emph{string}-OTs to generate one \emph{bit}-OT. But the bit-OT that has been generated has a reversed Sender and Receiver compared to the string-OTs that were used. In the classical case, it can be noted that similar constructions have been proposed previously to create an OT which was simulatable for one party based on an OT that is simulatable for the other (and this construction is close to round-optimal).

\end{document}